\theoremstyle{plain}
\newtheorem{theorem}{Theorem}[section]
\newtheorem{lemma}[theorem]{Lemma}
\newtheorem{proposition}[theorem]{Proposition}
\newtheorem{claim}[theorem]{Claim}
\theoremstyle{definition}
\newtheorem{definition}[theorem]{Definition}
\theoremstyle{remark}
\newcommand{\poly}{\mathrm{poly}}
\newcommand{\negl}{\mathrm{negl}}
\newcommand{\F}{\mathbb F}
\newcommand{\R}{\mathbb R}
\newcommand{\N}{\mathbb N}
\newcommand{\E}{\mathbb E}
\newcommand{\cD}{\mathcal D}
\newcommand{\cC}{\mathcal C}
\newcommand{\xor}{\textsc{xor} }
\newcommand{\eps}{\varepsilon}
\renewcommand{\epsilon}{\varepsilon}
\newcommand{\dGV}{\delta^{(\mathrm{GV})}}
\renewcommand{\Pr}{\mathbb{P}}
\newcommand{\wt}{\mathrm{wt}}
\title{Linear time encodable binary code achieving GV bound with\\ linear time encodable dual achieving GV bound}
\author{Martijn Brehm\footnote{Informatics Institute, University of Amsterdam. \href{m.a.brehm@uva.nl}{m.a.brehm@uva.nl}.} \and Nicolas Resch\footnote{Informatics Institute, University of Amsterdam. \href{n.a.resch@uva.nl}{n.a.resch@uva.nl}. Research supported by an NWO (Dutch Research Council) grant with number C.2324.0590. This work was done in part while visiting the Simons Institute for the Theory of Computing, supported by DOE grant \#DE-SC0024124.}}
\date{\today}
\begin{document}
\maketitle
\tableofcontents
\newpage 

\begin{abstract}
    We initiate the study of what we term ``fast good codes'' with ``fast good duals.'' Specifically, we consider the task of constructing a rate 1/2 binary linear code such that both it and its \emph{dual} are asymptotically good (in fact, have rate-distance tradeoff approaching the GV bound), and are encodable in linear time. While we believe such codes should find applications more broadly, as motivation we describe how such codes can be used the secure computation task of \emph{encrypted matrix-vector product}.
    
    Our main contribution is a construction of such a fast good code with fast good dual. Our construction is inspired by the repeat multiple accumulate (RMA) code. To create the rate 1/2 code, after repeating each message coordinate, we perform \emph{accumulation steps} -- where first a uniform coordinate permutation is applied, and afterwards the prefix-sum mod 2 is applied -- which are alternated with \emph{discrete derivative steps} -- where again a uniform coordinate permutation is applied, and afterwards the previous two coordinates are summed mod 2. Importantly, these two operations are inverse of each other. In particular, the dual of the code is very similar, with the accumulation and discrete derivative steps reversed. 

    Our analysis is inspired by a prior analysis of RMA: we bound the expected number of codewords of weight below the GV bound. We face new challenges in controlling the behaviour of the discrete derivative operation (which can significantly drop the weight of a vector), which we overcome by careful case analysis.
\end{abstract}

\section{Introduction} \label{sec:intro}

The theory of error-correcting codes is largely concerned with constructing linear subspaces of finite vector spaces satisfying interesting combinatorial properties. In this work, we focus on the most popular setting of binary codes, i.e., subspaces $\cC \leq \F_2^n$. A first desirable property of a code $\cC$ is that the codewords (i.e., elements of $\cC$) are well-spread. To quantify this, one typically uses the \emph{minimum distance} $\delta(\cC)$, defined as the minimum fraction of coordinates one needs to change in order to transform one codeword into another one. In other words, $\delta(\cC):=\min\{d(c,c'):c,c' \in \cC, c \neq c'\}$, where $d(x,y):=\frac1n|\{i \in [n]:x_i\neq y_i\}|$ denotes the (relative) \emph{Hamming distance}. In fact, for linear codes, it suffices to look at the minimum distance of a nonzero codeword from $0$, i.e., $\delta(\cC) = \min\{\wt(c):c \in \cC\setminus\{0\}\}$, where $\wt(x)=\frac1n |\{i \in [n]:x_i\neq0\}|$ is the (relative) \emph{Hamming weight}. 

On the other hand, one would like the linear code to be fairly large. This is typically quantified by the code's \emph{rate}, defined as $R(\cC) = \frac kn$ where $k = \dim(\cC)$. The well-known \emph{Gilbert-Varshamov (GV)} bound states that such binary linear codes of rate $R$ and distance $\delta$ exist whenever $R \leq 1-h(\delta)$, where $h(x):=x \log\frac 1x + (1-x)\log\frac{1}{1-x}$ denotes the binary entropy function, which has a continuous inverse $h^{-1}:[0,1/2]\to[0,1]$. For binary codes, the GV bound remains the best known achievable tradeoff between rate and distance. We will be looking for codes that get close to the GV bound, as in the following definition.

\begin{definition}[GV Bound] \label{def:gv-bound}
    Fix $R \in (0,1)$ and $\eps>0$. We define $\dGV(R):=h^{-1}(1-R)$, and say a code $\cC$ of rate $R$ is \emph{$\eps$-close} to the GV bound if $\delta(\cC) \geq \dGV(R)-\eps$. 
\end{definition}

Beyond hoping for large rate and large distance, one can make other demands on a code. For example, one could hope for \emph{encoding} to be very fast. That is, recalling that for every linear code $\cC$ one can define a \emph{generator matrix} $G \in \F_2^{n \times k}$ for which $\cC = \{Gm:m \in \F_2^k\}$, one could hope that the time it takes to compute $Gm$ from $m$ is as small as possible, ideally $O(n)$.\footnote{We quantify encoding time in a circuit model, where all fan-in 2 gates are available. For additional details on this point, see \Cref{sec:prelims}.} 

Additionally, especially in the context of cryptography the \emph{dual} code of a binary linear code $\cC$ is also often crucial (being connected to, e.g., the  secrecy threshold in secret-sharing schemes). Recall that the dual of a code $\cC$ is defined as $\cC^\perp :=\{x \in \F_2^n:\langle x,c\rangle=0,~\forall c \in \cC\}$, where for $x,y \in \F_2^n$ we have $\langle x,y\rangle = \sum_{i=1}^n x_iy_i$, the standard inner-product modulo 2. Recall that if $\dim(\cC)=k$ then $\dim(\cC)=n-k$, so if $\cC$ has rate $R$ then $\cC^\perp$ has rate $1-R$. We thus find it most natural to consider codes of rate $1/2$, so that $\cC$ and $\cC^\perp$ are of the same rate (and could potentially achieve the same distance). 

In this work, we consider the task of constructing binary linear codes of rate $1/2$ meeting the following list of desiderata:
\begin{itemize}
    \item Firstly, we would like $\delta(\cC)$ to approach the GV bound $\dGV(1/2) \approx 0.1100$.
    \item We would like the dual distance $\delta(\cC^\perp)$ to also approach the GV bound $\dGV(1/2) \approx 0.1100$.
    \item Lastly, we would like both $\cC$ and $\cC^\perp$ to be encodable in $O(n)$ time (which, up to constants, is clearly optimal). 
\end{itemize}
To coin a term, we call such a code a \emph{fast good code} with \emph{fast good dual}. 

\subsection{Our results}

In this work, we prove that such a fast good code with fast good dual indeed exists. 

\begin{theorem}\label{thm:main_informal}
    For all large enough (even) $n \in \N$ and all $\eps>0$, there exists a binary code $\cC \leq \F_2^n$ of rate $1/2$ such that: 
    \begin{itemize}
        \item $\cC$ and $\cC^\perp$ both have distance at least $\dGV(1/2)-\eps$;
        \item $\cC$ and $\cC^\perp$ are both encodable in $O_\eps(n)$ time.
    \end{itemize}
\end{theorem}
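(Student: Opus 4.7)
The plan is to describe an explicit random ensemble of encoders, verify linear-time encodability, argue an algebraic duality that promotes any guarantee for $\cC$ to the same guarantee for $\cC^\perp$, and finally bound the expected number of low-weight codewords via the input--output weight enumerator (IOWE); the code is then obtained by the probabilistic method. I would define the encoder as a composition $E = P_{2t}\, D\, P_{2t-1}\, A\, P_{2t-2}\, D\, \cdots\, P_2\, D\, P_1\, A\, P_0\, R$, where $R : \F_2^{n/2} \to \F_2^n$ repeats each coordinate twice, each $P_i$ is an independent uniformly random permutation of $[n]$, $A : \F_2^n \to \F_2^n$ is the prefix-sum map $A(x)_j = \sum_{i \leq j} x_i$, and $D : \F_2^n \to \F_2^n$ is the consecutive-difference map $D(x)_j = x_{j-1} + x_j$. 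Each of $A$, $D$, and every $P_i$ is computable in $O(n)$ gates, so $E$ uses $O(tn) = O_\eps(n)$ gates once $t$ is fixed. Because $A$ and $D$ are mutually inverse $\F_2$-linear bijections, transposing and inverting the factorization of $E$ shows that a generator matrix for $\cC^\perp$ can be written as a product of the same shape, except that the roles of $A$ and $D$ are interchanged and the permutations are transposed (hence still uniformly random). In particular, $\cC^\perp$ is drawn from a structurally identical ensemble, so its linear-time encodability is automatic and any weight-enumerator bound for the primal ensemble applies mutatis mutandis to the dual.

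\textbf{IOWE and the first-moment method.} Let $A_{u,w}$ be the expected number of weight-$u$ messages that $E$ sends to a weight-$w$ codeword, where the expectation is over the $P_i$. Then $\E[A_{w}] := \sum_{u \geq 1}\sum_{w' \leq w} A_{u,w'}$ upper-bounds the expected number of nonzero codewords of weight at most $wn$. By Markov's inequality, it suffices to show that this quantity is $o(1)$ for every $w \leq \dGV(1/2) - \eps$, provided $t = t(\eps)$ is chosen sufficiently large. To bound $A_{u,w}$, I would track a sequence of intermediate weights $v_0 = 2u, v_1, \ldots, v_{2t} = w$ along the pipeline and bound, for each step, the probability that a fixed weight-$v_{i-1}$ vector becomes weight-$v_i$ after the corresponding random permutation composed with $A$ or $D$. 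The accumulation step admits a standard hypergeometric-style bound, going back to the Ravazzi--Fagnani analysis of RMA codes, obtained by counting the runs of ones in the permuted input. Multiplying the step-wise bounds, summing over intermediate weights, and multiplying by the $\binom{n/2}{u}$ choices of message gives, after taking logarithms and dividing by $n$, a finite-dimensional entropy-rate optimization in the variables $\nu_i := v_i/n \in [0,1]$.

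\textbf{The main obstacle: the derivative step.} The technical heart of the proof is bounding the transition probability through a $D$-step. In contrast to $A$, the derivative map can drop weight drastically: a single run of $\ell$ consecutive ones is replaced by just two ones at its endpoints. A bound in terms of the input weight alone loses far too much. The strategy I would adopt is a careful case analysis based on the block structure of the permuted input: condition on the number of isolated ones, pairs of adjacent ones, and longer runs; count the number of permutations realizing a prescribed block pattern; and bound the resulting IOWE contribution by a hypergeometric expression. A concentration argument shows that a uniformly random permutation of a weight-$v$ vector has typical block statistics close to their Poisson expectation, and summing the atypical contributions over the tail costs only a sub-exponential factor. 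Put differently, one $A$-round followed by one $D$-round behaves, in terms of exponential weight-growth, essentially like two $A$-rounds, up to factors that vanish as $t \to \infty$. Once this is in place, the Ravazzi--Fagnani-style argument shows that the entropy-rate objective is strictly negative for $w \leq \dGV(1/2) - \eps$, so that $\E[A_w] = o(1)$. A union bound with the analogous bound for the dual ensemble, together with the structural equivalence above, produces a single random code meeting all three desiderata of \Cref{thm:main_informal}.
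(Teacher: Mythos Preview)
Your high-level architecture---random alternation of accumulate and discrete-derivative steps after a rate-$1/2$ repeat, duality by transposing the factorization, and a first-moment IOWE argument---matches the paper. But two of the steps you label as the technical heart contain genuine gaps.

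\textbf{The $D$-step does not require block statistics or concentration.} You propose to handle the transition through $D$ by conditioning on run lengths and invoking a Poisson concentration argument, with atypical configurations controlled ``at sub-exponential cost.'' This is both unnecessary and dangerous: in a first-moment calculation you are summing over \emph{all} configurations, and it is precisely the atypical ones that can dominate; concentration of the typical run structure says nothing about the expectation. The paper sidesteps this entirely by using the observation you yourself made---that $D=A^{-1}$---to write down an \emph{exact} formula: $p_D(a,b)=\binom{n-a}{\lfloor b/2\rfloor}\binom{a-1}{\lceil b/2\rceil-1}/\binom{n}{a}$, i.e., the numerator of $p_A$ with arguments swapped. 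This gives an exponent $f_D(\alpha,\beta)$ in closed form, and the analysis proceeds by purely analytic manipulations of $f_A$ and $f_D$.

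\textbf{The heuristic ``one $AD$ round behaves like two $A$ rounds'' is wrong, and the spectral-shape argument alone does not give $o(1)$.} What the paper actually proves (Theorems~4.2--4.5) is that after maximizing over the intermediate weight, $f_{AD}(\alpha,\beta,\gamma)$ and $f_{DA}(\alpha,\beta,\gamma)$ are bounded by a function $g(\alpha,\gamma)$ that agrees with $f_A(\alpha,\gamma)$ on the relevant regime; consequently the spectral shape of $R(AD)^m$ is bounded by that of $RA^m$, not $RA^{2m}$. More importantly, the spectral shape function at the target distance is not strictly negative---it is exactly $0$---so Markov alone yields only $O(1)$, not $o(1)$. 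The paper handles this by splitting on whether the weight entering the final round is within $h=\log^2 n$ of $\{0,n\}$: in the ``middle'' case the restriction $\alpha\geq h/n$ forces the spectral shape to $-\Omega(h/n)$, giving a $\mathrm{negl}(n)$ bound; in the ``edge'' case a separate combinatorial argument directly bounds the contribution by $1/\mathrm{poly}(n)$. Your proposal has no analogue of this split, and without it the first-moment bound does not close.
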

As we discuss later, the encoding time is in fact practically fast. For example, with $\eps = 10^{-4}$ there is a boolean circuit with binary gates implementing the encoding map with size $8(n-1)$. Unfortunately we do not now how to bound more precisely the encoding time of $\cC$ and $\cC^\perp$ in terms of $\eps$. However, empirically the constant in front of $n$ is very small; see~\Cref{tab:deltas} and the surrounding discussion. We additionally note that Boolean circuits of size $O_\eps(n)$ and depth $O_\eps(\log n)$ can implement the encoding map, should low depth be desired.

While we will provide a much broader overview of our proof technique later, we now briefly outline the idea. Firstly, we emphasize that our construction is \emph{randomized}. Both $\cC$ and $\cC^\perp$ will be sampled in such a way that they are reminiscent of \emph{repeat multiple accumulate ($RMA$)} codes. Essentially, $RMA$ codes are defined by multiple rounds, where in each round a coordinate permutation is applied, followed by an \emph{accumulation} step, which is just the prefix-sum modulo 2. In particular, these rounds can be implemented very quickly (just $n-1$ $\xor$'s). The idea is to consider \emph{random} $RMA$ codes, where the coordinate permutations are chosen uniformly at random, and demonstrate they are likely to have very good distance.

Fortunately for us, a sequence of works~\cite{pfister1999,pfister2003capacity,bazzi,kliewer_new,ravazzi_spectra_2009,blaze} have shown these codes achieve good distance. The basic idea of these works is, given two weights $\alpha$ and $\beta$, to analyze the expected number of message vectors of weight $\alpha$ that are mapped to a codeword of weight $\beta$. Using some coding-theoretic jargon, this is called the \emph{input-output weight-enumerator function}, or \emph{IOWEF} for short. Concretely, for a fixed generator matrix $G \in \mathbb F_2^{k \times n}$ we define 
\[
    N_{G}(\alpha,\beta):=\{m \in \F_2^k:\wt(m)=\alpha \text{ and } \wt(Gm)=\beta\} \ .
\]
Then, if $\mathbf G$ denotes a random generator matrix (sampled according to some given distribution), the IOWEF will be $\mathbb E_{\mathbf G}\left[N_{\mathbf G}(\alpha,\beta)\right]$. The pleasant feature of this code ensemble is that its IOWEF has a fairly simple expression. To show a code achieves minimum distance $\delta$ with probability $1-p$ for some $p \in (0,1)$, by Markov's inequality it suffices to show that the sum of the IOWEF over all $\alpha,\beta$ with $\beta \leq \delta$ is $\leq p$. 

Now, it is not difficult to see that an $RMA$ code has constant weight dual codewords (assuming the number of rounds is constant, as we always do). Hence, they are unsuitable for our target. Nonetheless, we note that the inverse of the accumulation step is quite simple: namely, one just takes the $\xor$ of the previous two coordinates. We view this operation as a \emph{discrete derivative}. While this operation is not very helpful in terms of achieving good distance (it can at most double the weight of a vector, and in the worst case can drop it all the way down to $1/n$), as it corresponds to doing an accumulation step in the dual, it increases the dual distance! Thus, we can hope for both distance and dual distance to be large. A main challenge is to argue that these discrete derivative rounds are unlikely to \emph{harm} the minimum distance. 

Thus, our task is to bound the IOWEF of codes where accumulation rounds are interleaved with discrete derivative rounds. As in prior works~\cite{kliewer_new,ravazzi_spectra_2009,blaze}, roughly speaking we break the analysis into the case where $(\alpha,\beta)$ are small or large. The first case is handled largely by combinatorial reasoning concerning binomial coefficients; the latter case is handled by looking at the exponent of the IOWEF, i.e., looking at $\frac{1}{n}\log$ of the IOWEF and bounding it via analytic means: this function is often called the \emph{spectral shape function}. In some sense, we manage to bound the spectral shape functions of our code ensembles by that of an $RMA$ code, allowing us to obtain the same distance guarantees. This is the sense in which we manage to show the discrete derivative rounds don't harm minimum distance. In particular, we can get distance $\eps$-close to the GV bound, where the closeness to the GV bound is determined by the number of ``rounds'' of our encoding maps. 

\medskip
We believe that the existence of such a code is interesting, and likely to find further applications in coding theory, cryptography, or theoretical computer science more broadly. As a small proof of concept, we show their utility for the task of computing \emph{encrypted matrix-vector product (EMVP)}, as studied by Benhamouda et al~\cite{benhamouda2025encrypted}. The task is as follows. In an initial offline phase, a client sends an encryption $\hat M$ of a matrix $M \in \F^{m \times \ell}$ to a server, while keeping a (short) secret key. In the online phase, the client may send encryptions $\hat q$ of query vectors $q \in \F^\ell$; based on these encryptions, the server computes a value $M'$, which the client can use to determine the matrix-vector product $Mq$. Importantly, the server learns nothing about the matrix $M$ or the query vectors $q$. As Benhamouda et al~\cite{benhamouda2025encrypted} discuss, this is an important subroutine for efficient secure computation tasks in a number of domains, including encrypted fuzzy search and secure ML. Additionally, note that the special case of each $q$ being a unit vector (i.e., having a single $1$ entry) is equivalent to (single-server) private information retrieval (PIR). While we provide more details later of this application in~\Cref{sec:crypto-application}, the fast encoding time for the $\cC$ and $\cC^\perp$ we construct yield improved running times for the protocol of Benhamouda et al~\cite{benhamouda2025encrypted}.

Again, we emphasize that this is just a small proof-of-concept for our codes. We consider the main contribution of this work to be the code construction itself, along with the (fairly involved) analysis. Nonetheless, we flesh out this application further in \Cref{sec:crypto-application}.

\subsection{Related work} \label{subsec:related-work}
To the best of our knowledge, ours is the first work to construct ``fast good codes'' with ``fast good duals.'' However, our codes are heavily inspired by so-called \emph{repeat-multiple-accumulate (RMA)} codes. These codes were initially introduced as containing only a single round of permuting and accumulating by Divsalar, Jin and McEliece~\cite{divsalar1998coding}, and then extended to contain multiple rounds by Pfister and Siegel in an 1999 conference paper \cite{pfister1999}. A series of works~\cite{pfister2003capacity,kliewer_new,ravazzi_spectra_2009,blaze} (that we build upon) have managed to show that such codes can have minimum distance approaching the GV bound\footnote{There are some minor caveats concerning the speed at which one approaches the GV bound which we discuss later, but for practical purposes such codes achieve essentially this rate/distance tradeoff.} along with linear-time encoding. We will introduce these codes more formally later (\Cref{sec:RMA}) and discuss in detail how we build upon the analyses of these works. We remark that the codes in these works are, like ours, non-explicit.

Regarding the task of constructing \emph{explicit} linear-time encodable binary codes that are \emph{asymptotically good} (namely, rate and minimum distance are positive constants), Spielman~\cite{spielman1995linear} provided a construction based on expander codes; these codes can also be \emph{decoded} in linear time. Many works have built off this construction, allowing for more and more sophisticated decoding guarantees; however, this is somewhat tangential from our main focus. Note that the proven rate/distance tradeoff of Spielman's code is quite far from the GV bound. Guruswami and Indyk~\cite{guruswami2005linear} are the first to construct linear-time encodable codes achieving the GV bound, but only over sufficiently large fields (and in fact, they work in the regime where the distance can be $1-R-\eps$, i.e., roughly the Singleton bound). 

A work in the same vein as ours is by Druk and Ishai~\cite{druk2014linear}; therein, the authors give a (randomized, non-explicit) construction of a code meeting the GV bound which is linear-time encodable, and additionally outline some cryptographic applications of these codes. Finally, we mention a work by Rudra and Wootters~\cite{rudra2015ll} which, among other results, demonstrates the existence of linear-time encodable binary codes of rate $\Omega(\eps^2)$ that are list-decodable up to radius $1/2-\eps$, where we recall that a code $\cC \subseteq \F_2^n$ is called $(\rho,L)$-list-decodable if from every vector $z \in \F_2^n$ there are at most $L$ codewords $c \in \cC$ satisfying $d(c,z) \leq \rho$ (the list-size $L$ in this work is a constant depending only on $\eps$). In fact, this code is obtained by randomly folding Spielman's linear-time encodable code. 

We remark that such prior constructions of linear time encodable codes typically have sparse parity-checks; in particular, the duals cannot be asymptotically good. This is a challenge that we overcome, by carefully adding encoding steps specifically designed to improve the dual distance while also not harming the original code's distance. 

\subsection{Future directions} \label{subsec:future-directions}
Before moving onto the technical portion of our paper, we now take some time to highlight some open problems that we consider worthy of study. Firstly, our analysis is inherently limited to the binary field case. One could naturally define similar codes over larger fields: the main thing which should change is that instead of just randomly permuting, one should also randomly \emph{rescale} the entries, i.e., multiply by a uniformly random full-rank diagonal matrix.\footnote{Note that over $\F_2$, the only full-rank diagonal matrix is the identity; hence, this is a natural generalization.} However, even for the case of $RMA$ codes most analyses up till now have focused on the binary case. The simple reason for this is that, once $q>3$ the IOWEF is in fact \emph{exponentially large};\footnote{For $q=3$ as well it is unclear how to bound the IOWEF; already for this field size, new ideas appear to be required.} thus, an argument simply applying Markov's inequality cannot succeed. This is elucidated in the work of Block et al~\cite{block2024field}, where the authors managed to develop some nontrivial analysis of the minimum distance of RA codes over large fields that \emph{does not} directly use Markov's inequality. Nonetheless, the current best argument proving that RMA codes over large alphabets have nontrivial distance follows from simply analyzing the code over $\F_2$, and then using the same generator matrix to define a code over $\F_{2^t}$, i.e., some (large) extension of $\F_2$~\cite{blaze}. It would be very interesting to find an analysis over large fields that leads to improved rate/distance tradeoffs. 

Continuing with questions concerning $RMA$ codes, we now have a fairly good understanding of their minimum distance; in particular, Brehm et al~\cite{blaze} provide a fairly thorough investigation of the probability they fail to achieve good distance, demonstrating they achieve good distance for concrete values of $n$. A next task could be to determine what \emph{list-decodability} they might possess. A reasonable approach for this could be to show that they are \emph{locally similar to random linear codes}~\cite{guruswami2022punctured,mosheiff2024randomness}, which would (roughly) imply that they inherit the list-decodability of a typical linear code.\footnote{Technically, $RMA$ will not be fully locally similar to random linear codes, as they contain constant-weight codewords with an overly large probability $1/\poly(n)$. The same issue arises with LDPC codes, but can be resolved~\cite{mosheiff2020ldpc}.} It would additionally be very interesting to find \emph{explicit} $RMA$ codes achieving nontrivial distance; to the best of our knowledge the only result in this vein is due to Guruswami and Machmouchi~\cite{guruswami2008explicit}, which essentially partially derandomizes the construction of an asymptotically good $RAA$ code (namely, it requires only the second permutation to be sampled randomly; the first can be deterministically chosen). Finally, we remark that every question asked above could additionally be asked for the codes we define and study. 

Next, we list a couple interesting questions motivated by cryptographic applications. Specifically, we consider the code property of \emph{multiplicativity}. Given two vectors $x, y \in \F^n$, we define their \emph{(Schur) product} as $x*y:=(x_1\cdot y_1,\dots,x_n \cdot y_n)$ as the component-wise product, and then given two linear codes $\cC$ and $\cD$ their (Schur) product is defined as $\cC * \cD := \mathrm{span}\{c*d:c \in \cC,d \in \cD\}$. A code pair $(\cC,\cD)$ is said to be \emph{multiplicative} if $\cC*\cD \neq \F^n$; namely, it does not span the whole space. Firstly, it can be observed that a binary linear code $\cC \leq \F_2^n$ and its dual $\cC^\perp$ are multiplicative, as $\cC*\cC^\perp$ only contains even weight vectors. It is known~\cite{cramer2000general} that multiplicative codes yield multiplicative secret-sharing schemes,\footnote{Briefly, secret-sharing schemes allow a dealer holding to a secret $s$ to generate shares $(s_1,\dots,s_n)$ to be given to $n$ parties such that \emph{authorized} sets of parties can reconstruct the secret, but no \emph{unauthorized set} of parties can. As for multiplicativity, this can be viewed as an additional property allowing parties to locally convert shares of secrets $a,b$ into a new value $c_i$ such that the product $ab$ is a linear combination of the $c_i$'s.} and additionally that such a multiplicative secret-sharing scheme can be used for general-purpose secure multiparty computation (MPC)~\cite{ben2019completeness,yao1986generate}. The encoding efficiency of the codes $\cC$ and $\cC^\perp$ directly translate into the complexity of sharing a secret~\cite{druk2014linear}, which must be done by each party for each multiplication gate of the circuit representing the function one is securely computing. Thus, in the quest for ``constant-overhead cryptography~\cite{ishai2008cryptography}'' (where the hope is that, if the cost of computing a functionality \emph{without} security is $N$, the cost of computing it \emph{with} security is just $O(N)$) multiplicative secret-sharing scheme with linear-time encoding can play an important role. We observe that typical multiplicative secret-sharing schemes are ``algebraic'' (e.g., Shamir secret-sharing~\cite{shamir1979share}, which is based on polynomial evaluations); here, one can deal the shares in \emph{quasilinear} time (using fast Fourier-transform-type ideas), but getting linear-time dealing appears to require more ``combinatorial'' approaches. 

To summarize the above, our work provides a multiplicative secret sharing scheme over $\F_2$ with linear-time encoding. One could hope for more. Firstly, one can hope for higher-degree multiplicative secret-sharing~\cite{barkol2010d} allowing for multiplying more secrets together, again with linear-time algorithms for dealing the shares. Next, one could additionally hope for \emph{strong} multiplicativity of a secret-sharing scheme, which informally would follow from a pair of asymptotically good codes $\cC,\cD$ such that $\cC *\cD$ has \emph{constant} distance (such codes are said to satisfy the \emph{multiplication property}). Note that for a linear code $\cC$ and its dual $\cC^\perp$, as $\cC*\cC^\perp$ only contains even weight vectors, it has minimum distance $\geq 2/n$, which is nontrivial, but far from constant. Strongly multiplicative secret-sharing schemes lead to general MPC protocols with \emph{malicious} security~\cite{cramer2000general}, i.e., security even in the presence of parties that may actively deviate from the prescribed protocol. One could hope to have such multiplication codes with linear time encoding towards getting general-purpose MPC with constant-computational overhead and malicious security. We remark that prior work has already considered the possibility of constructing somewhat ``combinatorial'' codes (i.e., codes that could admit linear time encoding) that additionally have the multiplication property~\cite{dinur2023new}. 

Broadly speaking, we view our work as a stepping stone towards a broader suite of ``fast codes'' with interesting combinatorial properties. We emphasize that from a cryptographic perspective, decodability is often not required,\footnote{And in fact, in certain cases the security of the scheme rests on the assumed \emph{hardness} of decoding; this is (essentially) the case for the encrypted matrix-vector product protocol we sketch in \Cref{sec:crypto-application}.} but instead ask for nontrivial conditions concerning, e.g., the dual code or multiplicativity. 

\subsection{Organization} \label{subsec:organization}
In \Cref{sec:prelims} we set notation
The remainder of this paper is organized as follows. Firstly, we set notation, define RMA codes, review their existing analysis, and we define our pair of dual codes in \Cref{sec:prelims}. Next, we provide a broad overview of our proof strategy for \Cref{thm:main_informal} in \Cref{sec:proof-strategy}. The bulk of the proof is contained in \Cref{sec:middle-weight,sec:bounding-edge}. Next, we discuss the utility of these codes in the context of the encrypted matrix-vector product problem~\Cref{sec:crypto-application}. 

\subsection{Acknowledgements}
The authors would like to thank Yuval Ishai, for patient discussions concerning potential cryptographic applications of our codes; Geoffroy Couteau, for sharing the construction of the codes that we analyze; Divya Ravi, for kindly reading over the cryptographic sections and suggesting relevant citations; the anonymous reviewers for helpful comments; and Henri Pfister for suggesting overlooked citations on RMA codes.

\section{Preliminaries} \label{sec:prelims}
    Throughout this work we will use Latin letters $a,b,c,\dots$ to refer to absolute weights (integers from $1$ to $n$) and Greek letters $\alpha,\beta,\gamma$ to refer to relative weights (reals in $[0,1]$). Except for in \Cref{sec:bounding-edge}, weight will by default refer to \emph{relative} weight (i.e., the fraction of nonzero coordinates); in \Cref{sec:bounding-edge}, it is most convenient to have it default to \emph{absolute} weight (i.e., the number of nonzero coordinates). By default, $\log$ denotes the base-2 logarithm. 

    We use standard Landau notation, e.g., $O(\cdot)$, $\Omega(\cdot)$, $o(n)$, $\omega(n)$, etc. In all contexts the growing parameter will be $n$. We use $\poly(n)$ to refer to a function of the form $n^{O(1)}$ and $\negl(n)$ a function of the form $n^{-\omega(n)}$. The notation $\tilde{O}(\cdot)$ suppresses terms of the form $\log^{O(1)}n$. 
    
    Next, we recall Markov's inequality, and the specific way in which we use it in this work. 

    \begin{theorem} [Markov's inequality] \label{thm:markov-ineq}
        Let $X$ be a non-negative random variable. Then for any $\alpha>0$, 
        \[
            \Pr\big[X \geq \alpha\big] \leq \frac{\E[X]}{\alpha} \ .
        \]
        In particular, if $X$ takes values in $\N\cup\{0\}$ then 
        \[
            \Pr\big[X > 0\big]\leq \E[X] \ .
        \]
    \end{theorem}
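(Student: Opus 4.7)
The plan is to give the textbook proof of Markov's inequality, which boils down to comparing $X$ to the indicator function of the event $\{X \geq \alpha\}$ scaled by $\alpha$. Concretely, I would introduce the indicator random variable $\mathbf{1}_{X \geq \alpha}$ and observe the pointwise inequality $X \geq \alpha \cdot \mathbf{1}_{X \geq \alpha}$, which holds because (i) on the event $\{X \geq \alpha\}$ both sides equal at least $\alpha$ with the left side being larger, and (ii) on the complementary event the right side is $0$ while the left side is non-negative by assumption on $X$. This single inequality does all the real work.

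Taking expectations of both sides and using linearity together with monotonicity of expectation yields
\[
    \E[X] \;\geq\; \alpha \cdot \E\bigl[\mathbf{1}_{X \geq \alpha}\bigr] \;=\; \alpha \cdot \Pr[X \geq \alpha].
\]
Dividing through by $\alpha > 0$ gives the first claim. For the ``in particular'' part, I would specialize to the case where $X$ takes values in $\N \cup \{0\}$: then the events $\{X > 0\}$ and $\{X \geq 1\}$ coincide, so applying the already-proved inequality with $\alpha = 1$ immediately yields $\Pr[X > 0] = \Pr[X \geq 1] \leq \E[X]$.

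The only subtlety worth flagging is the tacit requirement that $X$ is non-negative in the second step (needed to handle points where $X < \alpha$ without the indicator bound failing), but this is exactly the hypothesis of the theorem, so there is no genuine obstacle. I would not expect any step here to be technically difficult; the proof is a one-line application of monotonicity of expectation to a pointwise inequality.
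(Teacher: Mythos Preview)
Your proof is correct and is the standard textbook argument. The paper itself does not prove this theorem at all; it simply states Markov's inequality in the preliminaries as a well-known fact and moves on, so there is nothing to compare against.
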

    Now, recall that we defined the \emph{dual} of a linear code $\cC \leq \F_2^n$ of dimension $k$ as 
    \[
        \cC^\perp := \{x \in \F_2^n:\forall c \in \cC,~\langle x,c\rangle = 0\} \ .
    \]
    We recall the following standard facts. Firstly, $\dim(\cC^\perp) = n-k$ and $(\cC^\perp)^\perp=\cC$. Additionally, if linear codes $\cC$ and $\cD$ are generated by matrices $G \in \F_2^{n \times k}$ and $H\in \F_2^{n \times (n-k)}$, respectively, in the sense that $\cC = \{Gm:m \in \F_2^k\}$ and $\cD = \{Hm:m \in \F_2^{n-k}\}$, then $\cC^\perp = \cD$ if and only if $H^\top G = 0$.

    Lastly, as we wish to discuss encoding complexity of binary linear codes, we choose the standard model of Boolean circuits where gates of fan-in 2 are available.\footnote{Of course, any other constant fan-in only affects the encoding complexities up to constants; the choice of 2 is just for concreteness.} We remark that this circuit model is robust to the precise set of available gates, but for simplicity we assume all gates are available. This should be contrasted with more liberal models such as arithmetic circuits or RAM models, which are more sensitive to ring or word size, respectively. Additional discussion of this model choice can be found in~\cite{spielman1995linear}. 

    \subsection{Repeat-multiple-accumulate $RMA$ codes} \label{sec:RMA}
        \paragraph{Definition}
        The codes we study in this paper are directly inspired by the well-studied \textit{repeat-multiple-accumulate (RMA) codes} over the binary field $\F_2$. These are parametrised by three integers $m,n,r\in\mathbb{N}$, where $m$ is the number of \textit{rounds} of the code, $r$ is the repetition factor which fixes the rate $1/r$ of the code, and finally $n$ is the \textit{block-length}. We assume $n$ is divisible by $r$. 

        The encoding function of an $RMA$ code is simple to describe. First, take your message vector of length $n/r$ and repeat it $r$ times. Then apply a permutation to the resulting length $n$ vector. Next, apply the accumulator operation to this vector, which is simply a prefix sum: bit $i$ of the output will be equal to the \xor of the first $i$ bits in the input vector. We repeat this permute-accumulate step $m$ times. More formally, the three operations in the encoding can be described in terms of the following linear operations.
        \begin{itemize}
            \item For a constant $r \in \N$ dividing $n$, let $F_r \in \F_2^{n \times n/r}$ denote the \emph{repetition matrix} corresponding to the linear operator that repeats each entry in the vector $r$ times. That is, $F_r[i,j] = 1$ if and only if $\lfloor i/r\rfloor+1=j$. 
            \item Let $A \in \F_2^{n \times n}$ be the \emph{accumulator matrix}, $A_{ij}=1$ if and only if $i \geq j$. 
            \item For a permutation $\pi : [n] \to [n]$, let $M_\pi \in \F_2^{n \times n}$ be the \emph{permutation matrix} corresponding to $\pi$. That is, $(M_\pi)_{ij}=1$ if and only if $\pi(i)=j$.
        \end{itemize}
        
        Say we fix $m=2$ and have $r\in\mathbb{N}$. We then obtain a rate $1/r$ $RMA$ code with two permute-accumulate rounds, which we refer to as an $RAA$ code. Its generator matrix $RAA_{\pi_1,\pi_2} : \F_2^{n/r} \to \F_2^n$ code can be defined as
        $RAA_{\pi_1,\pi_2}(x) = A  M_{\pi_2}  A M_{\pi_1}  F_r \, x$. More generally, we will refer to an $RMA$ code of $m$ rounds as an $RA^m$ code.
        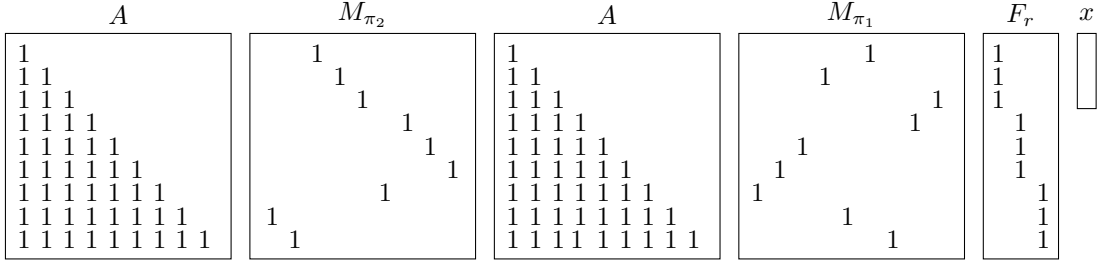
\begin{figure}[ht]
            \centering
            \begin{tikzpicture}
                \draw (7.25,6) rectangle (7.5,5); 
                \node at (7.375,6.25) {$x$};
        
                \draw (6,6) rectangle (7,3);
                \node at (6.5,6.25) {$F_r$};
                
                \node at (6.2,5.74) {$1$};
                \node at (6.2,5.43) {$1$};
                \node at (6.2,5.12) {$1$};
                \node at (6.5,4.81) {$1$};
                \node at (6.5,4.5) {$1$};
                \node at (6.5,4.19) {$1$};
                \node at (6.8,3.88) {$1$};
                \node at (6.8,3.57) {$1$};
                \node at (6.8,3.26) {$1$};
                
                \draw (2.75,6) rectangle (5.75,3);
                \node at (4.25,6.25) {$M_{\pi_1}$};
        
                \node at (3,3.88) {$1$};
        
                \node at (3.3,4.19) {$1$};
               
                \node at (3.6,4.5) {$1$};
        
                \node at (3.9,5.43) {$1$};
                
                \node at (4.2,3.57) {$1$};
        
                \node at (4.5,5.74) {$1$};
                
                \node at (4.8,3.26) {$1$};
                
                \node at (5.1,4.81) {$1$};
        
                \node at (5.4,5.12) {$1$};
                
                \draw (-0.5,6) rectangle (2.5,3);
                \node at (1,6.25) {$A$};
        
                \node at (-0.25,5.74) {$1$};
                \node at (-0.25,5.43) {$1$};
                \node at (-0.25,5.12) {$1$};
                \node at (-0.25,4.81) {$1$};
                \node at (-0.25,4.5) {$1$};
                \node at (-0.25,4.19) {$1$};
                \node at (-0.25,3.88) {$1$};
                \node at (-0.25,3.57) {$1$};
                \node at (-0.25,3.26) {$1$};
        
                \node at (0.05,5.43) {$1$};
                \node at (0.05,5.12) {$1$};
                \node at (0.05,4.81) {$1$};
                \node at (0.05,4.5) {$1$};
                \node at (0.05,4.19) {$1$};
                \node at (0.05,3.88) {$1$};
                \node at (0.05,3.57) {$1$};
                \node at (0.05,3.26) {$1$};
        
                \node at (0.35,5.12) {$1$};
                \node at (0.35,4.81) {$1$};
                \node at (0.35,4.5) {$1$};
                \node at (0.35,4.19) {$1$};
                \node at (0.35,3.88) {$1$};
                \node at (0.35,3.57) {$1$};
                \node at (0.35,3.26) {$1$};
        
                \node at (0.65,4.81) {$1$};
                \node at (0.65,4.5) {$1$};
                \node at (0.65,4.19) {$1$};
                \node at (0.65,3.88) {$1$};
                \node at (0.65,3.57) {$1$};
                \node at (0.6,3.26) {$1$};
        
                \node at (0.95,4.5) {$1$};
                \node at (0.95,4.19) {$1$};
                \node at (0.95,3.88) {$1$};
                \node at (0.95,3.57) {$1$};
                \node at (0.95,3.26) {$1$};
        
                \node at (1.25,4.19) {$1$};
                \node at (1.25,3.88) {$1$};
                \node at (1.25,3.57) {$1$};
                \node at (1.25,3.26) {$1$};
        
                \node at (1.55,3.88) {$1$};
                \node at (1.55,3.57) {$1$};
                \node at (1.55,3.26) {$1$};
                
                \node at (1.85,3.57) {$1$};
                \node at (1.85,3.26) {$1$};
        
                \node at (2.15,3.26) {$1$};
                
                \draw (-3.75,6) rectangle (-0.75,3);
                \node at (-2.25,6.25) {$M_{\pi_2}$};
        
                \node at (-3.45,3.57) {$1$};
        
                \node at (-3.15,3.26) {$1$};
               
                \node at (-2.85,5.74) {$1$};
        
                \node at (-2.55,5.43) {$1$};
                
                \node at (-2.25,5.12) {$1$};
        
                \node at (-1.95,3.88) {$1$};
                
                \node at (-1.65,4.81) {$1$};
                
                \node at (-1.35,4.5) {$1$};
        
                \node at (-1.05,4.19) {$1$};
                
                \draw (-7,6) rectangle (-4,3);
                \node at (-5.5,6.25) {$A$};
        
                \node at (-6.75,5.74) {$1$};
                \node at (-6.75,5.43) {$1$};
                \node at (-6.75,5.12) {$1$};
                \node at (-6.75,4.81) {$1$};
                \node at (-6.75,4.5) {$1$};
                \node at (-6.75,4.19) {$1$};
                \node at (-6.75,3.88) {$1$};
                \node at (-6.75,3.57) {$1$};
                \node at (-6.75,3.26) {$1$};
        
                \node at (-6.45,5.43) {$1$};
                \node at (-6.45,5.12) {$1$};
                \node at (-6.45,4.81) {$1$};
                \node at (-6.45,4.5) {$1$};
                \node at (-6.45,4.19) {$1$};
                \node at (-6.45,3.88) {$1$};
                \node at (-6.45,3.57) {$1$};
                \node at (-6.45,3.26) {$1$};
        
                \node at (-6.15,5.12) {$1$};
                \node at (-6.15,4.81) {$1$};
                \node at (-6.15,4.5) {$1$};
                \node at (-6.15,4.19) {$1$};
                \node at (-6.15,3.88) {$1$};
                \node at (-6.15,3.57) {$1$};
                \node at (-6.15,3.26) {$1$};
        
                \node at (-5.85,4.81) {$1$};
                \node at (-5.85,4.5) {$1$};
                \node at (-5.85,4.19) {$1$};
                \node at (-5.85,3.88) {$1$};
                \node at (-5.85,3.57) {$1$};
                \node at (-5.85,3.26) {$1$};
        
                \node at (-5.55,4.5) {$1$};
                \node at (-5.55,4.19) {$1$};
                \node at (-5.55,3.88) {$1$};
                \node at (-5.55,3.57) {$1$};
                \node at (-5.55,3.26) {$1$};
        
                \node at (-5.25,4.19) {$1$};
                \node at (-5.25,3.88) {$1$};
                \node at (-5.25,3.57) {$1$};
                \node at (-5.25,3.26) {$1$};
        
                \node at (-4.95,3.88) {$1$};
                \node at (-4.95,3.57) {$1$};
                \node at (-4.95,3.26) {$1$};
                
                \node at (-4.65,3.57) {$1$};
                \node at (-4.65,3.26) {$1$};
        
                \node at (-4.35,3.26) {$1$};
            \end{tikzpicture}
            \caption{A pictorial representation of (a generator matrix for) an RAA code with rate $R=1/3$ and block-length $n=9$.}
            \label{fig:RAA-code}
        \end{figure}

        Note that computing the $RA^m$ encoding can be done by a Boolean circuit consisting of just $m(n-1)$ $\xor$ gates of fan-in 2; additionally, classical work of Ladner and Fischer~\cite{ladner1980parallel} gives an implementation with $O(n)$ gates and depth $O(\log n)$ (and the constants are very reasonable). 

        Rather than considering a fixed choice of $RA^m$ code, we will consider a \emph{random} $RA^m$ codes defined by sampling uniformly at random the $m$ permutations making up the code. Thus, in the analysis, for a fixed vector weight one can imagine that going into each accumulator round we in fact have a \emph{uniformly random} vector of that weight. To establish that a random $RA^m$ code achieves good distance, the important fact is that a uniformly random vector of \emph{any} weight has expected weight $1/2$ after accumulating. Thus, a random $RA^m$ code typically ``pushes the weights towards the middle,'' which is exactly what we want when we recall that, for a linear code, achieving minimum distance $\delta$ is equivalent to having no codewords of weight below weight $\delta$.

        \paragraph{Distance analysis}
        As the analysis of the minimum distance of our codes is inspired by prior analyses of $RA^m$ codes, we now sketch these approaches. As mentioned in the introduction, $RA$ codes (with only a single round of permute and accumulate) were introduced by Divsalar, Jin and McEliece~\cite{divsalar1998coding}. Beyond the efficiency of the encoding, part of the appeal of these codes is the fairly simple expression of their IOWEF: the expression telling you the expected number of message vectors of a given weight $a$ that are mapped to a codeword of weight $b$. This expression is very useful because, as noted in the introduction, to show that a code achieves relative minimum distance $\delta$ with probability $1-p$ for some $p\in(0,1)$, by Markov's inequality it suffices to show that the expected number of codewords of relative weight at most $\delta$ (which is simply the sum of the IOWEF for absolute codeword weights $b \leq \delta n$). When introducing $RA$ codes, the authors indeed show that we can express the probability that a weight $a$ vector is mapped to weight $b$ after permuting and accumulation as follows (where we rewrite the binomials as this will be useful in the later analysis):
        $$p_A(a,b) = \frac{\binom{n-b}{\lfloor a/2\rfloor}\binom{b-1}{\lceil a/2 \rceil-1}}{\binom{n}{a}} =  \frac{\binom{n-b}{\lfloor a/2\rfloor}\binom{b}{\lceil a/2 \rceil}}{\binom{n}{a}}\frac{\lceil a/2 \rceil}{b} = \frac{\binom{n-a}{b-\lceil a/2\rceil}\binom{a}{\lceil a/2 \rceil}}{\binom{n}{b}}\frac{\lceil a/2 \rceil}{b}\ .$$
         With $p_A(a,b)$ in hand, one can easily express the expectation that we are interested in.\footnote{Note also that a glance at $p_A(a,b)$ reveals the previous claim that the expected outcome is roughly $b=n/2$.} For example, consider a random $RA$ code of rate $1/r$. After repeating a message of absolute weight $a$, there are $\binom{n/r}{a}$ vectors of absolute weight $ra$. We then expect $p_A(ra,b)$ of these vectors to be mapped to absolute weight $b$ after applying the first permute-accumulate operation. Now sum over all permitted values of $a$ and $b$, but restrict $b \leq \delta n$. 
        
        Using this, Kahale and Urbanke proved that these single-round codes are not asymptotically good \cite{kahale_minimum_1998}. This led Pfister and Siegel, in a 1999 conference paper, to first study $RA^m$ codes with multiple rounds of permute and accumulate \cite{pfister1999}. They numerically estimate the expected number of low weight codewords, which suggests that with at least two rounds, $RA^m$ codes achieve asymptotic goodness, and in fact distance relatively close to the GV-bound. This work was extended into chapter 3 of Pfister's dissertation \cite{pfister2003capacity}, which contains the first formal proof of the asymptotic goodness of $RA^m$ codes with at least two rounds, as well as precise numerical estimates of the minimum distance attained by $RA^m$ codes of varying rates and number of rounds (similar to our \cref{tab:deltas}). Importantly, the provably attained minimum distance is not equal to these numerical estimates, and is in fact far below the GV-bound. Thus, this does not prove distance close to the GV-bound yet.
        
        It takes a few more years before another group of authors formally prove that $RA^m$ codes obtain minimum distance according to these numerical estimates \cite{kliewer_new}. They achieve this by taking a more analytic approach. Instead of analyzing the IOWEF directly (a complicated expression with many binomial coefficients), they use Stirling's approximation,\footnote{There are some subtleties regarding floors and ceiling, which can be addressed, as we do in \Cref{lem:A_D_claims}.} to say that $p_A(a,b) \leq O(1) \cdot 2^{n \cdot f_A(a/n,b/n)}$, where
        \[
            f_A(\alpha,\beta) = \alpha-h(\beta) + (1-\alpha)h\left(\frac{\beta-\alpha/2}{1-\alpha}\right) \ .
        \] 
        This, along with the standard bound $\binom{n/r}{a} \leq 2^{n \cdot h(a/n)/r}$, gives a fairly natural representation for the exponent of the terms appearing in the IOWEF, which is called the \emph{(asymptotic) spectral shape function} (which we formally define below, see \Cref{def:spec-shape-RMA}). This spectral shape function admits an analysis of its critical points, allowing us to maximize over the non-final weights. This reveals a tradeoff between the repetition factor $r$ (which recall determines the rate as $R=1/r$), and the target minimum distance $\delta$, reminiscent of the tradeoff appearing in the proof of the GV bound (although the expression involved is more complicated). Using this method, they reproduce Pfister's numerical estimates of the minimum distance of $RA^m$ codes, but now proving this distance can be achieved.

        This work, however, only proves that the expectation (and hence probability of attaining this minimum distance quite close to the GV-bound) goes to 1 with the block length $n$; it does not explicate how fast this happens. Building on this work, Ravazzi and Fagnani first prove that this probability has the form $1-1/\poly(n)$ and specify what this polynomial is. Moreover, they prove that the minimum distance converges to the GV-bound as the number of rounds goes to infinity (this was only conjectured before based on numerical estimates) \cite{ravazzi_spectra_2009}. Many years later, another group of authors characterise the expectation more precisely, showing for the first time that we can expect $RA^m$ codes to have good distance for concrete block lengths $n$ \cite{blaze}.

        Let us now formally define the spectral shape function, as we will require it in the analysis of our codes. 
        
        \begin{definition} \label{def:spec-shape-RMA}
            Let $r,m\in\mathbb{N}$. The \textit{asymptotic spectral shape function} of a rate $1/r$ $RA^m$ code is defined recursively as
            $$\hat{r}^{(m)}_{A,r}(\gamma) := \max\limits_{0\leq \alpha \leq 1} \{\hat{r}_{A,r}^{(m-1)}(\alpha) + f_A(\alpha,\gamma)\} \ \ \text{ and }\ \ \hat{r}^{(1)}_{A,r}(\gamma)=h(\gamma)/r \ ,$$
            and we use it to define 
            $$\delta^{(m)}_{A,r}:=\max\{\delta \in [0,1/2) : \forall \epsilon \leq \delta, \hat{r}^{(m)}_A(\epsilon) = 0 \} \ ,$$
            where we will use the shorthands $\hat{r}^{(m)}_{A}(\gamma):=\hat{r}^{(m)}_{A,2}(\gamma) $ and $\delta^{(m)}:=\delta^{(m)}_{A,2}$, as we will only consider codes of rate $1/2$ throughout.
        \end{definition}
        
        The spectral shape function is thus flat at 0 for $\gamma \leq \delta^{(m)}_A(r)$, where it was proven that $\delta^{(m)}_A(r)$ converges to the GV-bound \cite{ravazzi_spectra_2009}. After this point, the spectral shape function grows and becomes strictly positive. This implies that we expect exponentially many codewords of weight $\gamma > \delta^{(m)}$ in a rate $1/r$ $RA^m$ code, showing that one cannot hope to achieve distance beyond $\delta^{(m)}$ in an $RA^m$ code.\footnote{At least, not when using Markov's inequality; a stronger inequality may be able to prove better distance, but this seems unlikely to us.} When the spectral shape function is equal to 0, for $\gamma \leq \delta^{(m)}$, we do not expect exponentially many vectors. But observe we can't conclude there are $0$ such vectors: just subexponentially many. Despite this, previous authors established that we expect few \cite{kliewer_new}, and eventually more precisely only $1/\poly(n)$ \cite{ravazzi_spectra_2009,blaze}, such vectors. How did they prove this?
        
        To answer this question, we need to ask why the spectral shape function is flat at 0 for $\gamma \leq \delta^{(m)}_A(r)$ in the first place. This happens because for such small $\gamma$, the expression over $\alpha$ (recall that we maximize over $\alpha$) is in fact decreasing with $\alpha$. To maximize, we are then forced to fix $\alpha=0$, which yields the function $\hat{r}^{(m-1)}_A(0)+f_A(0,\gamma)=0$. Suppose now that one has a non-zero lower bound on $\alpha$, say $h/n\leq \alpha$. In that case, one would be forced to pick $\alpha=h/n$ to maximize and this would yield a strictly negative value for the spectral shape function. One can use this negative value of the spectral shape function to argue that there will only be about $\exp(-h)$ codewords of weight at most $\delta^{(m)}_A(r)$ in expectation.
        
        Why can we assume we have a lower bound on $\alpha$? Well, $\alpha$ described the weight of the codeword preceding the final round of permute-accumulate. Hence, if we can argue that only very few (say, some inverse polynomial number) vectors have relative weight below $h/n$ before entering this last round of the encoding, then we can consider that case dealt with, and assume for the rest of the argument that $h/n \leq \alpha$. 
        
        We now have two conflicting forces. We need to pick $h$ large enough so that $\exp(-h)$ is decreasing in $n$, but also pick $h$ small enough so that we can realistically argue that there are only few vectors of weight at most $h$ after $m-1$ rounds of permute-accumulate. A sensible assignment is $h=\log^2n$, so that $\exp(-h)=n^{-\log n}$ is negligible in $n$, but turns out to be small enough to argue that the expected number of vectors of weight at most $\log^2 n$ after $m-1$ rounds is $1/\poly(n)$. We will use this same general proof strategy to analyse our codes: split the computation of the expectation based on whether the weight of the codeword before going into the final round of the encoding is at least $h/n=\log^2n/n$ or not.

        \paragraph{Convergence of distance to GV-bound}
        Let us end this section by making a note about the convergence of $\delta^{(m)}_A(r)$ to the GV-bound with $m$, which as we said was proven by \cite{ravazzi_spectra_2009}. The closer we wish to approximate $\dGV$, the higher $m$ needs to be. Unfortunately, we lack a simple analytical expression for $\delta^{(m)}$, and thus have no theoretical understanding of how fast $\delta^{(m)}_A(r)$ converges to the GV-bound with $m$. Therefore, if one wishes to get $\epsilon=1/10000$-close to the GV-bound, we know this is possible for large enough $m$, but whether this is already true for $m=3$ or requires $m=10$, or even $m=100$, is not something we can theoretically argue. 

        What we are left with is numerically estimating $\delta^{(m)}$. To do so, we need to estimate the spectral shape function. Suppose you have a rate $1/r$ $RA^m$ code where we describe the weights in between the rounds using $\alpha_1, \alpha_2, ..., \alpha_{m+1}$, and the non-maximized asymptotic spectral shape function is described by $f_{A^m}(\alpha_1,...,\alpha_{m+1})$. To determine the value of the spectral shape function, we need to maximize over the $\alpha_i$'s. Pfister first did this by a grid search over the domains of these variables \cite{pfister2003capacity}. A more complicated approach was taken by \cite{kliewer_new}. They gave the constraints
        $$\frac{\partial f_{A^m}(\alpha_1,...,\alpha_{m+1})}{\partial \alpha_1} = 0 \iff \alpha_2 = \frac{1}{2}\left(1 \pm (1-\alpha_1)\sqrt{1-\left(\frac{\alpha_1}{1-\alpha_1}\right)^{2/r}}\right) \ ,$$
        and 
        $$\frac{\partial f_{A^m}(\alpha_1,...,\alpha_{m+1})}{\partial \alpha_i} = 0 \iff \alpha_{i+1} = \frac{1}{2} \pm \frac{1-\alpha_i}{2}\sqrt{1 - \left(\frac{\alpha_i-\frac{\alpha_{i-1}}{2}}{1-\alpha_i-\frac{\alpha_{i-1}}{2}} \frac{1-\alpha_i}{\alpha_i} \right)^2} \ .$$
        Say we hope to achieve some relative minimum distance $\delta$. Then we can use the above expressions to fix $\delta=\alpha_{m+1}=f(\alpha_m,\alpha_{m-1})$, and so on, to eventually obtain an expression of the form $\delta=f(\alpha_1)$, which one can solve for $\alpha_1$. This yields assignments to $\alpha_1,...,\alpha_m$ which guarantee the partial derivatives to all these variables are 0. In short, this lets one find critical points of the asymptotic spectral shape function, for some target minimum distance. Find the local maxima, and if these are negative for your choice of $\delta,r$ and $m$, then this means that $\delta \leq \delta^{(m)}$. We used the latter method to fill up \Cref{tab:deltas} which contains lower bounds on various $\delta_m^{(m)}$. For example, to get $10^{-4}$-close to the GV-bound you need $m\geq4$.

    \subsection{Our codes}\label{sec:RAD}
        The codes introduced in this paper are directly inspired by $RA^m$ codes and their analysis as introduced by the previously cited works. As hinted in the introduction, our idea is to ``fix'' the distance of the dual code, and to do that we introduce rounds where we perform the inverse of the accumulation operation. This inverse operation is the following:
        \begin{itemize}
            \item Let $D \in \F_2^{n \times n}$ denote the \emph{discrete derivative matrix}, i.e., $D_{ij} = 1$ iff $i-1=j$ or $i=j$. 
        \end{itemize}
        To see this is the inverse, note that if $y = Ax$ we have $y_i = \sum_{j=1}^ix_j$, so $(Dy)_i = (\sum_{j=1}^ix_j)+(\sum_{j=1}^{i-1}x_j)=x_i$. Again, this $D$ matrix can be implemented with $n-1$ $\xor$ gates (and in fact here the depth is just $1$!). 
    
        With this $D$ operation in hand, we are able to construct pairs of dual codes of rate $1/2$. For instance the $RAD$ and $RDA$ codes are dual to one another when $r=2$, assuming the accumulation orders are ``reversed'' (i.e., in one code, $A$ and $D$ are transposed). More precisely, we have the following. 

        \begin{proposition} \label{prop:dual}
            Let $\pi_1,\dots,\pi_{2m}:[n]\to[n]$ be permutations. Consider the generator matrices 
            \[
                G=AM_{\pi_{2m}}DM_{\pi_{2m-1}}\cdots AM_{\pi_2}DM_{\pi_1}F_2
            \]
            and 
            \[
                H=D^\top M_{\pi_{2m}}A^\top M_{\pi_{2m-1}}\cdots D^\top M_{\pi_2}A^\top M_{\pi_1}F_2 \ .
            \]
            Then $H^\top G=0$. In particular, they generate dual codes. 
        \end{proposition}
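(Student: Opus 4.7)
The plan is to verify $H^\top G = 0$ by direct matrix algebra, after which the ``in particular'' clause follows from the standard criterion recalled earlier in the preliminaries (provided $G$ and $H$ have full column rank $n/2$, which a dimension count forces once $H^\top G = 0$ and both $G$ and $H$ are known to cut out rate-$1/2$ codes). Three elementary algebraic facts do all the work: (i) over $\F_2$ the accumulator and discrete derivative are mutual inverses, $AD = DA = I_n$ (the paper already verifies $DA = I_n$ just above the proposition, and the same coordinate-wise calculation---using $2 = 0$ in $\F_2$---gives $AD = I_n$); (ii) every permutation matrix is orthogonal, $M_\pi^\top M_\pi = M_{\pi^{-1}} M_\pi = I_n$; and (iii) $F_2^\top F_2 = 2 \, I_{n/2} = 0$ over $\F_2$, because each column of $F_2$ contains exactly two ones by the definition $F_2[i,j] = 1 \iff \lfloor i/2 \rfloor + 1 = j$.

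First I would expand $H^\top$ using $(XY)^\top = Y^\top X^\top$ together with $(A^\top)^\top = A$ and $(D^\top)^\top = D$, giving
\[
H^\top \;=\; F_2^\top M_{\pi_1}^\top A \, M_{\pi_2}^\top D \, M_{\pi_3}^\top A \, M_{\pi_4}^\top D \, \cdots \, M_{\pi_{2m-1}}^\top A \, M_{\pi_{2m}}^\top D.
\]
The key feature is that the permutations in $H^\top$ now appear in reverse order relative to $H$, and hence in the \emph{same} order as the permutations in $G$. Concatenating gives $H^\top G$ as a palindromic word in the alphabet $\{A, D, M_{\pi_i}, M_{\pi_i}^\top\}$, bordered by $F_2^\top$ on the left and $F_2$ on the right, with matching permutations $M_{\pi_i}^\top \cdots M_{\pi_i}$ nested around the central product $D \cdot A$ of the two innermost non-permutation factors.

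The remainder of the proof is then a telescoping cancellation from the middle outward (equivalently, a short induction on $m$). The innermost junction $M_{\pi_{2m}}^\top (DA) M_{\pi_{2m}}$ collapses to $I$ by facts (i) and (ii); the next-innermost junction is then $M_{\pi_{2m-1}}^\top (AD) M_{\pi_{2m-1}} = I$, again by (i) and (ii); and so on, alternating $DA$ and $AD$ as one peels off successive layers. All $4m$ accumulator/derivative factors and all $2m$ paired permutation matrices vanish, leaving exactly $H^\top G = F_2^\top F_2$, which equals $0$ by fact (iii). There is no genuine obstacle in the argument; the only subtlety worth flagging is that characteristic two is used essentially in two places (both $AD = I_n$ and $F_2^\top F_2 = 0$ rely on cancellations mod $2$), so the identical construction would \emph{not} produce dual codes over a larger field without modifying the repetition and discrete-derivative steps accordingly.
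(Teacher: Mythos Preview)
Your proof is correct and follows essentially the same approach as the paper: both compute $H^\top$, multiply against $G$, and collapse the product to $F_2^\top F_2 = 0$ using $AD = DA = I_n$ and $M_\pi^\top M_\pi = I_n$. The paper phrases the collapse by rewriting $H^\top$ as a product of inverses of the factors of $G$, whereas you describe it as a telescoping cancellation from the middle outward; these are the same computation.
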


        \begin{proof}
            Since for matrices $A,B$ we have $(AB)^\top = B^\top A^\top$ and the transpose of a permutation matrix is its inverse, it follows that 
            \begin{align*}
                H^\top = F_2^\top M_{\pi_1}^\top  A M_{\pi_2}^\top D \cdots M_{\pi_{2m-1}}^\top A M_{\pi_{2m}}^\top D = F_2^\top M_{\pi_1}^{-1} D^{-1} M_{\pi_2}^{-1}A^{-1}\cdots M_{\pi_{2m-1}}^{-1}D^{-1}M_{\pi_{2m}}^{-1}A^{-1} \ .
            \end{align*}
            Thus $ H^\top G = F_2^\top F_2 = 0$, this last equality being immediate from the definition. 
        \end{proof}
        
        Our goal will be to study the pair of codes $R(AD)^m$ and $R(DA)^m$ for rate $1/2$ over a uniformly random choice of $\pi_1,\dots,\pi_{2m}$ and prove that both achieve distance converging to the GV-bound as $m$ grows. 

        \begin{figure}[ht]
            \centering
            \begin{tikzpicture}
                \draw (7.05,6) rectangle (7.35,4.35); 
                \node at (7.175,6.25) {$x$};
        
                \draw (6,6) rectangle (6.8,2.7);
                \node at (6.5,6.25) {$F_r$};
                
                \node at (6.2,5.74) {$1$};
                \node at (6.2,5.43) {$1$};
                \node at (6.2,5.12) {$1$};
                \node at (6.2,4.81) {$1$};
                \node at (6.2,4.5) {$1$};
                \node at (6.6,4.19) {$1$};
                \node at (6.6,3.88) {$1$};
                \node at (6.6,3.57) {$1$};
                \node at (6.6,3.26) {$1$};
                \node at (6.6,2.95) {$1$};
                
                \draw (2.45,6) rectangle (5.75,2.7);
                \node at (4.1,6.25) {$M_{\pi_1}$};
        
                \node at (2.7,3.88) {$1$};
        
                \node at (3,4.19) {$1$};
               
                \node at (3.3,4.5) {$1$};
        
                \node at (3.6,5.43) {$1$};
                
                \node at (3.9,3.57) {$1$};
        
                \node at (4.5,5.74) {$1$};
                
                \node at (4.8,3.26) {$1$};
                
                \node at (5.1,4.81) {$1$};
        
                \node at (5.4,5.12) {$1$};

                \node at (4.2,2.9) {$1$};
                
                \draw (-1.1,6) rectangle (2.2,2.7);
                \node at (0.55,6.25) {$A$};

                \node at (-0.85,5.74) {$1$};
                \node at (-0.85,5.43) {$1$};
                \node at (-0.85,5.12) {$1$};
                \node at (-0.85,4.81) {$1$};
                \node at (-0.85,4.5) {$1$};
                \node at (-0.85,4.19) {$1$};
                \node at (-0.85,3.88) {$1$};
                \node at (-0.85,3.57) {$1$};
                \node at (-0.85,3.26) {$1$};
                \node at (-0.85,2.95) {$1$};
        
                \node at (-0.55,5.43) {$1$};
                \node at (-0.55,5.12) {$1$};
                \node at (-0.55,4.81) {$1$};
                \node at (-0.55,4.5) {$1$};
                \node at (-0.55,4.19) {$1$};
                \node at (-0.55,3.88) {$1$};
                \node at (-0.55,3.57) {$1$};
                \node at (-0.55,3.26) {$1$};
                \node at (-0.55,2.95) {$1$};
        
                \node at (-0.25,5.12) {$1$};
                \node at (-0.25,4.81) {$1$};
                \node at (-0.25,4.5) {$1$};
                \node at (-0.25,4.19) {$1$};
                \node at (-0.25,3.88) {$1$};
                \node at (-0.25,3.57) {$1$};
                \node at (-0.25,3.26) {$1$};
                \node at (-0.25,2.95) {$1$};
        
                \node at (0.05,4.81) {$1$};
                \node at (0.05,4.5) {$1$};
                \node at (0.05,4.19) {$1$};
                \node at (0.05,3.88) {$1$};
                \node at (0.05,3.57) {$1$};
                \node at (0.05,3.26) {$1$};
                \node at (0.05,2.95) {$1$};
        
                \node at (0.35,4.5) {$1$};
                \node at (0.35,4.19) {$1$};
                \node at (0.35,3.88) {$1$};
                \node at (0.35,3.57) {$1$};
                \node at (0.35,3.26) {$1$};
                \node at (0.35,2.95) {$1$};
        
                \node at (0.65,4.19) {$1$};
                \node at (0.65,3.88) {$1$};
                \node at (0.65,3.57) {$1$};
                \node at (0.65,3.26) {$1$};
                \node at (0.65,2.95) {$1$};
        
                \node at (0.95,3.88) {$1$};
                \node at (0.95,3.57) {$1$};
                \node at (0.95,3.26) {$1$};
                \node at (0.95,2.95) {$1$};
        
                \node at (1.25,3.57) {$1$};
                \node at (1.25,3.26) {$1$};
                \node at (1.25,2.95) {$1$};
                
                \node at (1.55,3.26) {$1$};
                \node at (1.55,2.95) {$1$};
        
                \node at (1.85,2.95) {$1$};
                
                \draw (-4.7,6) rectangle (-1.4,2.7);
                \node at (-3.05,6.25) {$M_{\pi_2}$};
                
                \node at (-3.6,3.88) {$1$};
        
                \node at (-4.2,4.19) {$1$};
               
                \node at (-2.4,4.5) {$1$};
        
                \node at (-1.5,5.43) {$1$};
                
                \node at (-1.8,3.57) {$1$};
        
                \node at (-4.5,5.74) {$1$};
                
                \node at (-2.1,3.26) {$1$};
                
                \node at (-3,4.81) {$1$};
        
                \node at (-3.9,5.12) {$1$};

                \node at (-3.3,2.9) {$1$};
                
                \draw (-8.3,6) rectangle (-5,2.7);
                \node at (-6.65,6.25) {$D$};
        
                \node at (-8,5.74) {$1$};
                \node at (-8,5.43) {$1$};
        
                \node at (-7.7,5.43) {$1$};
                \node at (-7.7,5.12) {$1$};
        
                \node at (-7.4,5.12) {$1$};
                \node at (-7.4,4.81) {$1$};
        
                \node at (-7.1,4.81) {$1$};
                \node at (-7.1,4.5) {$1$};
        
                \node at (-6.8,4.5) {$1$};
                \node at (-6.8,4.19) {$1$};
        
                \node at (-6.5,4.19) {$1$};
                \node at (-6.5,3.88) {$1$};

                \node at (-6.2,3.88) {$1$};
                \node at (-6.2,3.57) {$1$};                
                \node at (-5.9,3.57) {$1$};
                \node at (-5.9,3.26) {$1$};
        
                \node at (-5.6,3.26) {$1$};
                \node at (-5.6,2.95) {$1$};

                \node at (-5.3,2.95) {$1$};
            \end{tikzpicture}
            \caption{A pictorial representation of (a generator matrix for) an RAD code with rate $R=1/2$ and block-length $n=10$.}
            \label{fig:RAD-code}
        \end{figure}
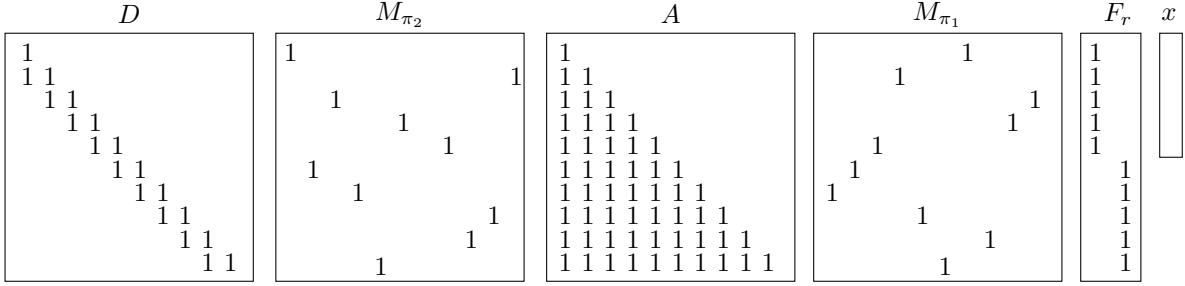
    
        Since $D$ is the inverse of $A$, it is not hard to describe $p_D(a,b)$, the probability that a uniformly random vector of absolute weight $a$ is mapped to a vector of absolute weight $b$ under $D$, in terms of $p_A(a,b)$: we can simply swap $a$ and $b$ in the numerator: 
        $$p_D(a,b) = \frac{\binom{n-a}{\lfloor b/2\rfloor}\binom{a-1}{\lceil b/2 \rceil-1}}{\binom{n}{a}} = \frac{\binom{n-a}{\lfloor b/2\rfloor}\binom{a}{\lceil b/2 \rceil}}{\binom{n}{a}}\frac{\lceil b/2 \rceil}{a}\ .$$
        We can then also define
        $$
        f_{D}(\alpha,\beta) = (1-\alpha )h\left(\frac{ \beta}{2(1-\alpha)}\right)+ \alpha h\left(\frac{ \beta}{2\alpha }\right) - h(\alpha) \ .
        $$
        such that $p_D(a,b) \leq O(1) \cdot 2^{n \cdot f_D(a/n,b/n)}$. We will often use the shorthands $f_{DA}(\alpha,\beta,\gamma):=f_D(\alpha,\beta)+f_A(\beta,\gamma)$ and $f_{AD}(\alpha,\beta,\gamma):=f_A(\alpha,\beta)+f_D(\beta,\gamma)$. With this, we can define the asymptotic spectral shape function for our codes of interest.

        \begin{definition}
            Let $m\in\mathbb{N}$. The \emph{asymptotic spectral shape function} of the rate $1/2$ $R(AD)^m$ and $R(DA)^m$ codes are defined, respectively, as
            $$\hat{r}^{(m)}_{AD}(\gamma) \leq \max\limits_{\alpha,\beta} \{\hat{r}^{(m-1)}_{AD}(\alpha) + f_{AD}(\alpha,\beta,\gamma)\} \ \ \text{ and }\ \ \hat{r}^{(1)}_{AD}(\gamma)=h(\gamma)/2 \ ,$$
            $$\hat{r}^{(m)}_{DA}(\gamma) \leq \max\limits_{\alpha,\beta} \{\hat{r}^{(m-1)}_{DA}(\alpha) + f_{DA}(\alpha,\beta,\gamma)\} \ \ \text{ and }\ \ \hat{r}^{(1)}_{DA}(\gamma)=h(\gamma)/2 \ .$$
        \end{definition}

    Lastly, we note that the probability distribution of $\wt(A^\top x)$ for a uniformly random $x$ of absolute weight $b$ is the same as the probability distribution of $\wt(A x)$, and similarly for $D$ and $D^\top$. This follows from the property that $A^\top$ is obtained by flipping each column of $A$, which means operationally that instead of accumulating ``down,'' $A^\top$ now accumulates ``up'' (and similarly for $D$ and $D^\top$). Thus, while technically one of the generator matrices needs to use the transpose of these matrices, the functions appearing in the analysis ($p_A,p_D,f_{AD}$, etc.) are unchanged. That is, our analysis applies regardless of whether or not the $A$ and $D$ matrices are transposed. 
        
\section{Proof strategy} \label{sec:proof-strategy}
    Our goal is to prove \Cref{thm:main_informal}: to find a pair of binary linear codes that are encodable with a linear sized circuit (in the block-length of the code), that are dual to each other, and that can attain minimum distance arbitrarily close to the GV-bound. We will do so for the pair of $R(DA)^m$ and $R(AD)^m$ codes of rate $1/2$, as introduced in \Cref{sec:RAD}. By \Cref{prop:dual}, these can be chosen dual to one another and are encodable with a linear sized circuit: in fact a circuit of size $2m(n-1)$. The task that remains is thus to prove that both of these codes can attain minimum distance arbitrarily close to the GV-bound. We will show this is the case for a randomly sampled such code, except with $1/\poly(n)$ probability. Getting closer to the GV-bound will require larger $m$, and thus a longer encoding time. In other words, we aim to prove the following theorem, which implies the main result stated as \Cref{thm:main_informal}.

    \begin{theorem}\label{thm:main}
        For all $\epsilon>0$ and all (even) $n \in \mathbb N$ large enough, there exists some $m=m(\eps) \in \mathbb N$ such that $R(AD)^m$ and $R(DA)^m$ codes of rate $1/2$ and block-length $n$ achieve relative minimum distance $\dGV-\epsilon$, except with probability $\leq n^{1/2-m/4}=1/\poly(n)$. The circuit size of the encoding function for either code is $2m(n-1)$, or $O(mn)$ with depth $O(m\log(n))$.
    \end{theorem}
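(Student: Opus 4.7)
The plan is to prove the statement for each of the $R(AD)^m$ and $R(DA)^m$ codes separately, since Proposition 2.2 already gives the linear encoding size and duality. By Markov's inequality (Theorem 2.1), it suffices to show that the expected number of nonzero codewords of relative weight at most $\delta := \dGV(1/2)-\eps$ is bounded by $n^{1/2-m/4}$, where the expectation is over the uniformly random choice of the $2m$ permutations. I would choose $m=m(\eps)$ so that $\delta^{(m)} \geq \dGV(1/2)-\eps$, which is possible because $\delta^{(m)} \to \dGV(1/2)$ as $m \to \infty$ (as established by the prior works cited, and further supported by the table of numerical values).

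First, I would expand the expected weight-enumerator as a sum over intermediate weight profiles. Writing $a_0$ for the message weight and $a_1,\dots,a_m$ for the weights after each $AD$ (or $DA$) round, the expected number of codewords of absolute weight $c$ at the output is
\[
\sum_{a_0,a_1,\dots,a_{m-1}}\binom{n/2}{a_0/2}\cdot\prod_{i=1}^{m-1}p_{AD}(a_{i-1},a_i)\cdot p_{AD}(a_{m-1},c),
\]
where each $p_{AD}(a,a')=\sum_b p_A(a,b)p_D(b,a')$ (and analogously for $DA$), and $p_A,p_D$ are as given in Section 2.1--2.2. The overall count is obtained by summing over $1\le c\le \delta n$ and $a_0\ge 1$. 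Following the standard strategy described in Section 2.1, I would split this sum into a \emph{middle-weight} regime in which every intermediate weight $a_i$ (including the inner $b_i$'s from the $AD$ decomposition) is at least $h=\log^2 n$, and an \emph{edge-weight} regime in which at least one intermediate weight is below $h$.

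For the middle-weight regime, I would use Stirling to bound each transition by $p_A(a,b)\le O(1)\cdot 2^{n f_A(a/n,b/n)}$ and $p_D(b,a')\le O(1)\cdot 2^{n f_D(b/n,a'/n)}$, and $\binom{n/2}{a_0/2}\le 2^{nh(a_0/n)/2}$. The exponent of the full product is then bounded uniformly by the asymptotic spectral shape function $\hat r^{(m)}_{AD}(c/n)$ (respectively $\hat r^{(m)}_{DA}$). This is the section of the argument I would develop in Section 4 (``middle-weight''): I would prove the key inequality
\[
\hat r^{(m)}_{AD}(\gamma)\le \hat r^{(m)}_{A}(\gamma)
\qquad\text{for all }\gamma\in[0,1/2],
\]
by induction on $m$, leveraging the fact that when we take the maximum over the hidden $\beta$ inside $f_{AD}(\alpha,\beta,\gamma)=f_A(\alpha,\beta)+f_D(\beta,\gamma)$, the $D$-step cannot increase the spectral shape beyond what a pure accumulation would yield---roughly because $D$ is measure-preserving on uniform vectors and the optimal $\beta$ matches what would occur under a fresh accumulation. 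Combined with the known fact that $\hat r^{(m)}_A$ vanishes on $[0,\delta^{(m)}]$, this gives a $\poly(n)$-type (in fact sub-polynomial) contribution from middle weights to the expected count, which we absorb into the negligible factor by picking $m$ large enough that $\delta^{(m)}>\delta$.

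The main obstacle, and the part requiring the careful case analysis of Section 5 (``bounding edge''), is the edge-weight regime. The discrete-derivative $D$ can map a vector of weight $a$ to a vector of weight as small as $1$ (and generically can roughly halve the weight), so the natural Stirling bound is far too lossy when an intermediate weight is tiny. I would handle this by conditioning on the first index $i^\star$ at which an intermediate weight drops below $h$, and splitting on whether $i^\star$ sits immediately after an $A$-round or immediately after a $D$-round. In each case I would bound the relevant binomial coefficients using $\binom{n}{k}\le (en/k)^k$ for small $k$ and the explicit formulas for $p_A(a,b),p_D(a,b)$ in terms of products of binomials, collapsing the sum using $\sum_{a\ge h}p_A(a,b)\le O(1)$ type normalization. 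The upshot, which I expect to require the bulk of the technical work, is that each additional round of $AD$ (or $DA$) beyond the second contributes a multiplicative factor of roughly $n^{-1/4}$ to this edge contribution; combined with a base contribution of $n^{1/2}$ coming from the first $AD$ round, this yields the claimed $n^{1/2-m/4}$ bound on the total expected count of low-weight codewords, and hence on the failure probability via Markov.
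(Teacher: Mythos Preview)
Your high-level plan matches the paper's: bound the expected number of low-weight codewords via Markov, split into a middle-weight and an edge-weight regime, and reduce the middle regime to the known $RA^m$ spectral shape. But your decomposition differs from the paper's in a way that matters, and your justification for the key spectral-shape inequality misses the real obstacle.

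The paper splits only on whether the weight \emph{entering the final $AD$ (or $DA$) round} lies in $[h,n-h]$, not on all intermediate weights. Its term $(**)$ allows arbitrary weights in rounds $1,\dots,m-1$; only the input to round $m$ is restricted. This lets the paper work with a \emph{restricted} spectral shape $\hat r^{(m)}_{AD}(h/n,\gamma)$ where the restriction applies to a single variable, and then show $\hat r^{(m)}_{AD}(\tau,\gamma)\le\hat r^{(m)}_A(\tau,\gamma)$. If you instead require every intermediate weight (including the inner $b_i$'s) to exceed $h$, your edge regime becomes a union over many conditioning events and the restricted spectral shape you would need is more complicated. More importantly, your justification for the spectral-shape inequality (``$D$ is measure-preserving \ldots the optimal $\beta$ matches a fresh accumulation'') does not address the actual difficulty: $f_A(\alpha,\gamma)$ is only defined for $\alpha\le\min\{2\gamma,2(1-\gamma)\}$, whereas $f_{AD}(\alpha,\beta,\gamma)$ places no such constraint on $(\alpha,\gamma)$, so a pointwise bound $f_{AD}\le f_A$ is meaningless on much of the domain. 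The paper's fix is to prove $f_{AD},f_{DA}\le g$ for a piecewise $g(\alpha,\gamma)$ equal to $f_A(\alpha,\gamma)$ when $\alpha\le 2\gamma(1-\gamma)$, to $h(\gamma)-h(\alpha)$ in the middle, and to $f_A(1-\alpha,\gamma)$ when $\alpha\ge 1-2\gamma(1-\gamma)$; the induction on $m$ then folds the middle and upper pieces back into the $f_A$ range using symmetry of $\hat r^{(m-1)}_A$ about $1/2$ and the identity $h(2\gamma(1-\gamma))+f_A(2\gamma(1-\gamma),\gamma)=h(\gamma)$. This case analysis is the technical heart of Section~4 and is not captured by a measure-preservation heuristic.

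For the edge regime, two further points. First, you only track weights dropping below $h$, but the paper must also handle weights landing in $[n-h,n]$ (both extremes occur under $A$ and $D$); this is dealt with via the symmetries $p'_A(a,b)=p'_A(a,n-b)$ and $p'_D(a,b)=p'_D(n-a,b)$. Second, rather than conditioning on the first index where a weight leaves the middle, the paper inducts on rounds for codewords that stay near the boundary throughout, and separately shows that jumps from the middle to the boundary are negligible by reusing the $(**)$ machinery---with one exception: for $R(AD)^m$ the jump in round $1$ is \emph{not} negligible on its own and must be tracked through round $2$, which is why the paper needs $m\ge 3$ for that code and obtains a worse per-round saving ($n^{-1/4}$ for $R(AD)^m$ versus $n^{-1/2}$ for $R(DA)^m$). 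Your proposed $n^{-1/4}$ per round after a base of $n^{1/2}$ is roughly the $R(AD)^m$ rate, but you would not see the asymmetry between the two codes without this finer analysis.
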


    The natural question raised by this theorem is what the function $m(\eps)$ looks like. If $m(\eps)$ is too big this could make the codes practically irrelevant (despite linear circuit size asymptotically). The way we will prove the above theorem is to prove instead that the $R(DA)^m$ and $R(AD)^m$ codes can attain relative distance $\epsilon$-close to $\delta^{(m)}$, where we recall from \Cref{sec:RMA} that this was the distance achievable by an $RA^m$ code of rate $1/2$. We recall from that same section that $\delta^{(m)}$ converges to $\dGV$ with $m$~\cite{ravazzi_spectra_2009}. This, in turn, would establish the above result, as it implies that our codes can achieve minimum distance arbitrarily close to $\dGV$. 
    
    However, as the reader may recall from that same section, we do not have a theoretical understanding of how $m$ is related to $\epsilon$. Instead, we have a program which takes $m$ and computes $\epsilon$. From its behaviour we can see that the convergence is quite fast: for example, to get $10^{-4}$-close to the GV-bound you need $m\geq4$ (see \Cref{tab:deltas}). Getting a better quantitative understanding of this convergence remains an interesting open question for future work.

    \begin{table}[ht]
        \centering
        \begin{tabular}{c|ccccc}
            $m$ & 2 & 3 & 4 & 5 & $\infty$ ($\dGV$) \\\hline
            $\delta^{(m)}$  & $0.02859547585$ & $0.1033989603$  & $0.1099391081$ & $0.1100278348$ & $ 0.1100278644$ \\ 
            $\epsilon$ to $\dGV$ & $<10^{-1}$ & $<10^{-2}$ & $<10^{-4}$ & $<10^{-7}$ & 0
        \end{tabular}
        \caption{Lower bounds on the relative minimum distances $\delta^{(m)}$ achievable by $RA^m$ codes of rate $1/2$, as well as of the GV-bound (best known rate/distance trade-off for binary codes). We will prove that these relative minimum distances are also achieved by $R(AD)^m$ and $R(DA)^m$ codes of rate $1/2$. The values of $\delta^{(m)}$ were computed using the method outlined in \Cref{sec:RMA}.}
        \label{tab:deltas}
    \end{table}

    \begin{proof}[Proof of \Cref{thm:main}]
        As explained above, to establish this theorem, it suffices to prove that rate $1/2$ $R(AD)^m$ and $R(DA)^m$ codes achieve relative minimum distance $\epsilon$-close to $\delta^{(m)}$, except with probability $\tilde O(n^{1/2-m/4})=1/\poly(n)$. By a union bound over the failure probabilities for both codes, we then obtain the desired result. 
        
        Our strategy, which is directly inspired by the method used to prove that $RA^m$ codes achieve this same minimum distance, is as follows. Recall $p_A(a,b)$ and $p_D(a,b)$, which describe the probability that a uniformly random vector of absolute weight $a$ is mapped by $A$, $D$, respectively to a vector of absolute weight $b$. These let us express the expected number of codewords that have weight at most $\delta^{(m)}-\eps$, a quantity which, by Markov's inequality, translates into an upper bound on the probability that a randomly sampled code fails to achieve the given distance $\delta^{(m)}-\eps$.
        
        We will prove this expectation asymptotically behaves like $1/\poly(n)$, proving the theorem. Just like with $RA^m$ codes, we prove this by splitting the computation of the expectation based on the weight of a vector when entering the final round of the encoding (where we consider $AD$ to be one round and $DA$ to be one round). Specifically, we split based on whether the vector has a weight that is within $h:=\log^2 n$ of the endpoints $1$ and $n$ of the weight range. To work this out formally, let us write $d=(\delta_n-\epsilon)n$ and $A_w(code)$ for the expected number of weight $w$ codewords in the given code, e.g. $A_w(R(AD)^2)$. We can then upper bound the error probability we wish to bound as follows
        \begin{align*}
            &\mathbb{P}[R(DA)^m \text{ fails to have absolute distance }d] \\
            &\leq \mathbb{E}[\text{number of }R(DA)^m\text{ codewords of absolute weight } \leq d] \\
            &=\sum_{w_1=1}^{d}A_{w_1}(R(DA)^m) \\
            &=\sum_{w_1=1}^{n}\sum_{w_2=1}^{n}\sum_{w_3=1}^{d}A_{w_1}(R(DA)^{m-1}) p_D(w_1,w_2) p_A(w_2,w_3) \\
            &=\underbrace{\sum_{\substack{1\leq w_1 \leq h \lor \\ n-h\leq w_1 \leq n}}\sum_{w_2=1}^{n}\sum_{w_3=1}^{d}A_{w_1}(R(DA)^{m-1})p_D(w_1,w_2) p_A(w_2,w_3)}_{(*)} + \underbrace{\sum_{w_1=h}^{n-h}\sum_{w_2=1}^{n}\sum_{w_3=1}^{d} A_{w_1}(R(DA)^{m-1}) p_D(w_1,w_2) p_A(w_2,w_3)}_{(**)} \ .
        \end{align*}
        Note that we can write out an analogous expression for the dual code $R(AD)^m$. In \Cref{thm:**_negligible}, which we prove in \Cref{sec:middle-weight}, we will bound the term called $(**)$ as something negligible in $n$ for both codes, using an argument involving the spectral shape functions of the codes. In \Cref{thm:*_small}, which we prove in \Cref{sec:bounding-edge}, we will bound the term called $(*)$ as $\tilde O(n^{1/2-m/4}) = 1/\poly(n)$ for both codes. Together, we find $\mathbb{P}[R(DA)^m \text{ fails to have distance }\delta] \leq \tilde O(n^{1/2-m/4}) = 1/\poly(n)$, and the same for the dual code. Of course, since we want both codes to attain this distance, we take a final union bound, which technically doubles this error probability. 
    \end{proof}

\section{Bounding $(**)$: vectors with weight in middle of weight range before last round of encoding} \label{sec:middle-weight}
    Our goal in this section is to bound $(**)$: the expected number of vectors that enter the final round of the encoding with absolute weight that is at least $h=\log^2n$ from the boundaries $1$ and $n$ of the weight range. We need to do this for both $R(AD)^m$ and $R(AD)^m$ codes. Recall from the discussion in \Cref{sec:RMA} that one encounters a similar case in the analysis of $RA^m$ codes. There, we analysed the critical points of the spectral shape function, which led to the conclusion that the spectral shape function is flat at 0 for $\gamma \leq \delta^{(m)}$. Recall that the spectral shape function is built up round by round by adding a term $f_A(\alpha,\beta)$ to the previous function $\hat{r}^{(m-1)}_A(\alpha)$ and maximizing over $\alpha$. What we found for this small initial range of $\gamma \leq \delta^{(m)}$ is that the function over $\alpha$ is decreasing. Hence, to maximize it, we should set $\alpha=0$, making the spectral shape function equal to 0 for such $\gamma$. In this context, however, we have a lower bound $h/n \leq \alpha$ so that we should instead fix $\alpha=h/n$, making the spectral shape function strictly negative (instead of 0), and allowing us to argue that $(**) \leq \exp(-h)$: we only expect a negligible number of the vectors with relative weight at least $h/n$ after $m-1$ rounds of accumulation, to then drop below the target distance of $\delta^{(m)}-\epsilon$ after the $m$-th round. 

    We want to adapt this general strategy to bound $(**)$ for $R(AD)^m$ and $R(DA)^m$ codes. Unfortunately, due to having twice as many operations and weights, directly working out the critical points is infeasible.\footnote{At least, the analytic expressions for the critical points of the spectral shape function of an $RA^m$ code (as found by Kliewer et al~\cite{kliewer_new}) are obtained after making some very clever substitutions. It is unclear to us how this technique could be adapted for $R(AD)^m$ or $R(DA)^m$ codes.} A numerical approach, using a grid search to estimate maximizers, as Pfister applied to $RA^m$ codes, could work to estimate the minimum distance of these codes. But this would not \textit{prove} the codes attain this minimum distance, nor would it quantify the error probability.
    
    We therefore came up with another approach: directly upper bound the spectral shape function of our codes by that of $RA^m$ codes. If this is true, and if this upper bound remains true when the range over $\alpha$ is restricted, then the upper bound on $(**)$ will follow almost immediately (up to slightly higher polynomial factors from the union bound, which will be irrelevant anyway). This motivates us to consider spectral shape functions where the input weight $\alpha$ is bounded away from $0$ and $1$. Formally, we consider the following. 
    
    \begin{definition} [Restricted Spectral Shape Function]
        Fix $\tau \in [0,1/2)$. The \emph{$\tau$-restricted spectral shape function} for $RA^m$, $R(AD)^m$ and $R(DA)^m$ codes respectively, are
        \begin{align*}
            \hat{r}^{(m)}_A(\tau,\gamma) &:= \max_{\substack{\tau \leq \alpha\leq \min\{1-\tau,2\gamma,2(1-\gamma)\}}} \{\hat{r}^{(m-1)}_A(\alpha) + f_A(\alpha,\gamma)\} \ ,\\
            \hat{r}^{(m)}_{AD}(\tau,\gamma) &:= \max_{\substack{\tau \leq \alpha\leq 1-\tau\\ \beta}} \{\hat{r}^{(m-1)}_{AD}(\alpha) + f_{AD}(\alpha,\beta,\gamma)\} \ , \\
            \hat{r}^{(m)}_{DA}(\tau,\gamma) &:= \max_{\substack{\tau \leq \alpha\leq 1-\tau\\ \beta}} \{\hat{r}^{(m-1)}_{DA}(\alpha) + f_{DA}(\alpha,\beta,\gamma)\} \ .
        \end{align*}
    \end{definition}
    
    Note that the upper bound on the range of $\alpha$ for $RA^m$ codes follows because $f_A(\alpha,\gamma)$ is only defined for $\alpha \leq \min\{2\gamma,2(1-\gamma)\}$. This doesn't apply to the other two codes, as there any pair $(\alpha,\gamma)$ is a valid input to $f_{AD}$ or $f_{DA}$ (though the choice can limit the range over $\beta$). This is also the reason why we constrain $h/n \leq \alpha \leq 1-h/n$ on both sides of the weight range, while in the analysis of $RA^m$ codes it sufficed to just lower bound $\alpha$. Because of these mismatched ranges we aim to prove an inequality between the restricted spectral shape functions $\hat{r}^{(m)}_{AD}(\tau,\gamma) \leq \hat{r}^{(m)}_{A}(\tau,\gamma)$ (and likewise for the dual code $R(DA)^m$), instead of directly proving $f_{AD}(\alpha,\beta,\gamma) \leq f_A(\alpha,\gamma)$ (as it is not meaningful to prove such a pointwise inequalities as the domains are different). 

    The rest of this section is structured as follows. First, just below, we prove a number of useful inequalities involving $f_A$ and $f_D$ which will be used throughout our proofs. Second, we prove that $f_{AD}(\alpha,\beta,\gamma)$ and $f_{DA}(\alpha,\beta,\gamma)$ can both be upper bounded by the same function $g(\alpha,\gamma)$, which is quite similar to $f_A(\alpha,\gamma)$. Third, we use this upper bound of $g$ to prove that the restricted spectral shape function of both $R(AD)^m$ and $R(DA)^m$ codes are upper bounded by the restricted spectral shape function of $RA^m$ codes. Fourth and final, we show that due to this inequality between the restricted spectral shape functions, we can in fact upper bound $(**)$ as something negligible in $n$ for both codes. 

    \begin{restatable}{lemma}{fafdclaims}\label{lem:fa_fd_claims}
        \begin{enumerate}
            \item $f_A(\alpha,\beta)$ is defined for $\alpha \leq \min\{2\beta,2(1-\beta)\}$ or equivalently, $\alpha/2 \leq \beta \leq 1-\alpha/2$.
            \item $f_D(\alpha,\beta)$ is defined for $\beta \leq \min\{2\alpha,2(1-\alpha)\}$ or equivalently, $\beta/2 \leq \alpha \leq 1-\beta/2$.
            \item $f_A(\alpha,\beta)$ decreases monotonically with $\alpha$. 
            \item $f_A(\alpha,\beta)$ is symmetric around a maximum at $\beta=1/2$, so that it increases with $\beta$ until $\beta=1/2$ and decreases with $\beta$ afterwards. 
            \item $f_A(\alpha,\beta) \leq h(\beta) - h(\alpha)$.
            \item $f_A(1-\alpha,\beta) \leq h(\beta) - h(\alpha)$.
            \item $f_D(\alpha,\beta) \leq h(\beta) - h(\alpha)$.
        \end{enumerate}
    \end{restatable}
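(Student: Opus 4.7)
The seven claims split naturally into three groups, and I would handle them in that order. Claims 1 and 2 are purely definitional: for $f_A$, the argument $(\beta - \alpha/2)/(1-\alpha)$ of the inner $h$ lies in $[0,1]$ iff $\alpha/2 \leq \beta \leq 1 - \alpha/2$, equivalently $\alpha \leq \min\{2\beta, 2(1-\beta)\}$; for $f_D$, requiring both $\beta/(2\alpha)$ and $\beta/(2(1-\alpha))$ to lie in $[0,1]$ gives $\beta \leq \min\{2\alpha,2(1-\alpha)\}$, i.e.\ $\beta/2 \leq \alpha \leq 1-\beta/2$. These I would dispatch in a short opening computation.

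For Claims 3 and 4, I would set $\gamma := (\beta-\alpha/2)/(1-\alpha)$ and work out the two partial derivatives of $f_A$ by the chain rule. The key algebraic identity is $\beta - 1/2 = (1-\alpha)(\gamma - 1/2)$, which after substitution causes $\partial f_A/\partial \alpha$ to collapse to $1 - h(\gamma) + h'(\gamma)(\gamma - 1/2)$; expanding $h'(\gamma) = \log((1-\gamma)/\gamma)$ and regrouping gives $1 + \tfrac{1}{2}\log(\gamma(1-\gamma))$, which is $\leq 0$ since $\gamma(1-\gamma) \leq 1/4$ by AM--GM, proving Claim 3. For Claim 4, substituting $\beta \mapsto 1-\beta$ sends $\gamma$ to $1-\gamma$, and $h(1-\gamma) = h(\gamma)$ gives the symmetry around $\beta = 1/2$ for free. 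The monotonicity halves follow from $\partial f_A/\partial \beta = h'(\gamma) - h'(\beta)$ combined with the identity $\gamma - \beta = \alpha(\beta-1/2)/(1-\alpha)$: for $\beta < 1/2$ one has $\gamma < \beta < 1/2$, so by strict monotonicity of $h'$ on $(0,1)$ we get $h'(\gamma) > h'(\beta)$ and the derivative is positive; the symmetric statement handles $\beta > 1/2$.

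Claim 5 is the heart of the lemma. My preferred route is analytic: define $F(\alpha) := f_A(\alpha,\beta) - (h(\beta) - h(\alpha))$ for fixed $\beta$, and substitute $u := \beta - \alpha/2$, $v := 1 - \beta - \alpha/2$ to obtain
\[
F(\alpha) = \alpha - \alpha\log\alpha - u\log u - v\log v + 2\beta\log\beta + 2(1-\beta)\log(1-\beta),
\]
whose derivative simplifies cleanly to $\partial F/\partial \alpha = 1 + \tfrac{1}{2}\log(uv/\alpha^2)$. Setting this to $0$ gives $uv = \alpha^2/4$, whose unique solution (after expanding the product) is $\alpha^* = 2\beta(1-\beta)$, at which $u = \beta^2$ and $v = (1-\beta)^2$. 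Substituting back yields a telescoping identity $F(\alpha^*) = \alpha^* - \alpha^*\log(\alpha^*/(\beta(1-\beta))) = \alpha^*(1 - \log 2) = 0$. A sign analysis of $\partial F/\partial \alpha$ around $\alpha^*$ confirms this is the global maximum on the valid range, so $F \leq 0$, yielding Claim 5. (Alternatively, a one-line combinatorial proof is available: $p_A(a,b)\binom{n}{a}$ counts weight-$b$ vectors whose discrete derivative has weight $a$, which is trivially at most $\binom{n}{b}$, so $p_A(a,b) \leq \binom{n}{b}/\binom{n}{a}$; passing to asymptotic exponents via Stirling and invoking continuity of $f_A$ on its domain extends the bound to all real $(\alpha,\beta)$.) Claim 6 is then immediate from Claim 5 via $\alpha \mapsto 1-\alpha$ and $h(1-\alpha)=h(\alpha)$. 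For Claim 7, the same combinatorial argument applies with input and output swapped: since $A$ and $D$ are inverses, $p_D(a,b)\binom{n}{a}$ equals the number of weight-$b$ vectors with derivative weight $a$, again at most $\binom{n}{b}$, delivering $f_D(\alpha,\beta) \leq h(\beta) - h(\alpha)$; an analogous critical-point analysis of $f_D(\alpha,\beta) - (h(\beta) - h(\alpha))$ goes through if one prefers the self-contained analytic route.

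The hard part will be Claim 5. The analytic route hinges on identifying $\alpha^* = 2\beta(1-\beta)$ as the critical point and spotting the somewhat delicate cancellation at $F(\alpha^*) = 0$, which ultimately comes from $\alpha^*/(\beta(1-\beta)) = 2$ together with $\log 2 = 1$. All other claims reduce either to elementary derivative computations (Claims 3 and 4), boundary checks on the arguments of $h$ (Claims 1 and 2), or a direct reduction to Claim 5 by the same combinatorial/analytic template (Claims 6 and 7).
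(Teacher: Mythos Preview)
Your proposal is correct. For Claims 1--4 and 6 your arguments coincide with the paper's (the paper in fact cites an external reference for Claim~4, whereas you supply the derivative argument directly). For Claim~5 your analytic route and the paper's are the same in substance---both locate the critical point $\alpha^*=2\beta(1-\beta)$ and verify that the difference $f_A(\alpha,\beta)-h(\beta)+h(\alpha)$ vanishes there---though the paper first rewrites $f_A$ in the equivalent form $f_A(\alpha,\beta)=(1-\beta)h\!\bigl(\tfrac{\alpha}{2(1-\beta)}\bigr)+\beta\,h\!\bigl(\tfrac{\alpha}{2\beta}\bigr)-h(\alpha)$ before differentiating, while you work directly with the original parametrization via $u=\beta-\alpha/2$, $v=1-\beta-\alpha/2$.

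The one genuinely different ingredient is your combinatorial alternative for Claims~5 and~7: the counting bound $p_A(a,b)\binom{n}{a}\le\binom{n}{b}$ (and its dual analogue) followed by Stirling and continuity. The paper does not take this shortcut; for Claim~7 it instead differentiates $(1-\alpha)h\!\bigl(\tfrac{\beta}{2(1-\alpha)}\bigr)+\alpha\,h\!\bigl(\tfrac{\beta}{2\alpha}\bigr)$ with respect to $\alpha$ and shows the maximum occurs at $\alpha=1/2$, where it equals $h(\beta)$. Your route is shorter and conceptually cleaner. One small slip in your phrasing: after substituting $y=Dx$ in the Claim~7 argument, the resulting objects are weight-$b$ vectors whose \emph{accumulate} (not derivative) has weight $a$; the conclusion $\le\binom{n}{b}$ is unaffected.
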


    \subsection{Bounding $f_{DA} \leq g$}   
        Our goal in this section and the next is to upper bound $f_{DA}(\alpha,\beta,\gamma)$ and $f_{AD}(\alpha,\beta,\gamma)$ by the function
        $$g(\alpha,\gamma)= \begin{cases} 
            f_A(\alpha,\gamma) & \text{if } \alpha \leq 2\gamma(1-\gamma) \ , \\
            h(\gamma)-h(\alpha) & \text{if } 2\gamma(1-\gamma) \leq \alpha \leq 1-2\gamma(1-\gamma) \ , \\
            f_A(1-\alpha,\gamma)  & \text{if } 1-2\gamma(1-\gamma) \leq \alpha \ . \\
        \end{cases}$$
        We will do so first for $f_{DA}(\alpha,\beta,\gamma)$. Essentially, the idea of this upper bound is to upper bound $f_{DA}(\alpha,\beta,\gamma) \leq f_A(\alpha,\gamma)$ across as much of the domain as possible. If this inequality would hold across the entire domain, the desired inequality between the spectral shape functions would follow immediately. Of course, it is not possible to establish this upper bound across the entire domain, as $f_A(\alpha,\gamma)$ is only defined for $\alpha \leq \min\{2\gamma,2(1-\gamma)\}$ while the expression we wish to upper bound $f_{DA}(\alpha,\beta,\gamma)$ is defined for all $(\alpha,\gamma)$ (though the choice may limit the selection over $\beta$). In particular, then, when one has large $\alpha$ and either very small or very large $\gamma$ ($2\gamma \leq \alpha$ and $2(1-\gamma) \leq \alpha$), then this upper bound will not be possible to establish, for example when $\alpha=0.3$ and $\gamma=0.1$ or $\gamma=0.9$.
    
        We come close, and manage to establish this desired upper bound for $\alpha \leq 2\gamma(1-\gamma)$. We then do the same on the other end, proving an upper bound of $f_A(1-\alpha,\gamma)$ for $1-\alpha \leq 2\gamma(1-\gamma)$. We prove a different upper bound on the middle of the range (so including most of the range which wouldn't admit either $f_A(\alpha,\gamma)$ or $f_A(1-\alpha,\gamma)$ to begin with), and afterwards show that this different bound still suffices to prove the inequality between the restricted spectral shape functions.
    
        \begin{theorem}\label{thm:fda_bound}
            If $\alpha,\beta,\gamma \in [0,1]$ such that $\alpha,\beta,\gamma$ is a valid input to $f_{DA}$, that is, if $\beta/2 \leq \min\{\alpha,1-\alpha,\gamma,1-\gamma\}$, then
            \[ f_{DA}(\alpha, \beta, \gamma) \leq g(\alpha,\gamma)=
            \begin{cases} 
            f_A(\alpha,\gamma) & \text{if } \alpha \leq 2\gamma(1-\gamma) \ , \\
            h(\gamma)-h(\alpha) & \text{if } 2\gamma(1-\gamma) \leq \alpha \leq 1-2\gamma(1-\gamma) \ , \\
            f_A(1-\alpha,\gamma)  & \text{if } 1-2\gamma(1-\gamma) \leq \alpha \ . \\
            \end{cases}
            \]
        \end{theorem}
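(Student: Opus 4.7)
The plan is to establish the three pieces of the piecewise bound separately, handling the middle case with essentially no work and reducing the two edge cases to a single one by symmetry. As a first observation, the middle bound $h(\gamma)-h(\alpha)$ in fact holds on the entire domain, simply by combining parts 5 and 7 of \Cref{lem:fa_fd_claims}: from $f_D(\alpha,\beta)\leq h(\beta)-h(\alpha)$ and $f_A(\beta,\gamma)\leq h(\gamma)-h(\beta)$ one adds to obtain $f_{DA}(\alpha,\beta,\gamma)\leq h(\gamma)-h(\alpha)$ for every valid $\beta$, settling the middle range immediately. For the third case, I would exploit the identity $f_D(\alpha,\beta)=f_D(1-\alpha,\beta)$, which is immediate from the definition since the two entropy summands of $f_D$ swap under $\alpha\mapsto 1-\alpha$ while $h(\alpha)=h(1-\alpha)$. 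Hence $f_{DA}(\alpha,\beta,\gamma)=f_{DA}(1-\alpha,\beta,\gamma)$, so the third-case bound follows by applying the first-case bound at $1-\alpha$.

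The heart of the theorem is therefore the first case: when $\alpha\leq 2\gamma(1-\gamma)$, show $\max_\beta f_{DA}(\alpha,\beta,\gamma)\leq f_A(\alpha,\gamma)$, where $\beta$ ranges over $\beta\leq 2\min\{\alpha,1-\alpha,\gamma,1-\gamma\}$. My approach is to fix $\alpha,\gamma$ and study $\phi(\beta):=f_D(\alpha,\beta)+f_A(\beta,\gamma)$ as a univariate function. Using $h'(x)=\log((1-x)/x)$ one computes $\partial_\beta f_D(\alpha,\beta)=\tfrac{1}{2}\bigl[h'(\beta/(2\alpha))+h'(\beta/(2(1-\alpha)))\bigr]$, which vanishes precisely at $\beta=2\alpha(1-\alpha)$ with $f_D(\alpha,2\alpha(1-\alpha))=0$; since $f_A(\cdot,\gamma)$ is decreasing by part 3 of \Cref{lem:fa_fd_claims} and $\partial_\beta f_D\to +\infty$ as $\beta\to 0^+$, the critical point $\beta^\ast$ of $\phi$ lies in $(0,2\alpha(1-\alpha)]$. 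Solving $\partial_\beta\phi=0$ explicitly and substituting back is intended to yield $\phi(\beta^\ast)\leq f_A(\alpha,\gamma)$; the boundary values of $\beta$ are checked separately, and the hypothesis $\alpha\leq 2\gamma(1-\gamma)$ is used to ensure $f_A(\alpha,\gamma)$ is defined and to control the sign of the relevant terms.

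The main obstacle I foresee is the algebraic bookkeeping in this critical-point analysis, since the critical-point equation mixes $\beta/(2\alpha)$, $\beta/(2(1-\alpha))$, and $(\gamma-\beta/2)/(1-\beta)$ and need not admit a clean closed form. A potentially cleaner fallback is to exploit the detailed-balance duality $f_A(\alpha,\beta)=f_D(\beta,\alpha)+h(\beta)-h(\alpha)$, which follows directly from the identity $p_A(a,b)\binom{n}{a}=p_D(b,a)\binom{n}{b}$ visible in the formulas for $p_A$ and $p_D$. Using this, the target inequality rewrites as $f_D(\alpha,\beta)+f_D(\gamma,\beta)+h(\alpha)\leq f_D(\gamma,\alpha)+h(\beta)$, whose left-hand side is manifestly symmetric in $\alpha$ and $\gamma$. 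This symmetric form, together with the mixture structure $f_D(x,\beta)+h(x)=(1-x)h(\beta/(2(1-x)))+xh(\beta/(2x))$, opens the door to a Jensen-type argument in which the hypothesis $\alpha\leq 2\gamma(1-\gamma)$ plays the role of ensuring the required inequality points in the correct direction.
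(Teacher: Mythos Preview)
Your reductions for the middle and third cases are correct and in fact cleaner than the paper's. Adding parts~5 and~7 of \Cref{lem:fa_fd_claims} immediately gives $f_{DA}(\alpha,\beta,\gamma)\leq h(\gamma)-h(\alpha)$ on the whole domain, and the identity $f_D(\alpha,\beta)=f_D(1-\alpha,\beta)$ is genuine and reduces the third case to the first. The paper instead carries all three ranges through a region-by-region case analysis and only uses these facts locally, so your two observations would streamline the write-up.

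The gap is in the first case, which as you say is the heart. You do not prove $\max_\beta\phi(\beta)\leq f_A(\alpha,\gamma)$; you locate the critical point of $\phi$ in $(0,2\alpha(1-\alpha)]$ and then write that substituting back ``is intended to yield'' the bound. But the critical-point equation for $\phi$ does not have a usable closed form, and nothing in the proposal shows why the value at $\beta^\ast$ is dominated by $f_A(\alpha,\gamma)$ rather than merely by $f_A(2\alpha(1-\alpha),\gamma)$, which is what you would get by evaluating at the endpoint $\beta=2\alpha(1-\alpha)$. Your fallback via the detailed-balance identity $f_A(\alpha,\beta)=f_D(\beta,\alpha)+h(\beta)-h(\alpha)$ is a correct rewriting, but the claim that the resulting left-hand side $f_D(\alpha,\beta)+f_D(\gamma,\beta)+h(\alpha)$ is ``manifestly symmetric in $\alpha$ and $\gamma$'' is false: the $h(\alpha)$ term breaks the symmetry, so the hoped-for Jensen argument has no clear starting point.

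The paper's route around this is different from yours. Rather than analyzing $\phi$ directly, it splits according to whether $\beta\leq\min\{\alpha,1-\alpha\}$ and uses two different upper bounds on $f_D$: the trivial bound $f_D\leq 0$ when $\beta\geq\min\{\alpha,1-\alpha\}$, so that $\phi(\beta)\leq f_A(\beta,\gamma)\leq f_A(\alpha,\gamma)$ by monotonicity in the first argument; and the bound $f_D(\alpha,\beta)\leq h(\beta)-h(\alpha)$ when $\beta\leq\min\{\alpha,1-\alpha\}$. In the latter region one is left with $h(\beta)-h(\alpha)+f_A(\beta,\gamma)$, and the paper shows (as a separate lemma) that this auxiliary function is maximized in $\beta$ at $\beta=2\gamma(1-\gamma)$. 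Since in the first case $\beta\leq\alpha\leq 2\gamma(1-\gamma)$, the maximum over the admissible $\beta$ is attained at $\beta=\alpha$, collapsing the bound to $f_A(\alpha,\gamma)$. This two-bound split is the missing idea; your proposal never separates the regime $\beta\geq\alpha$ (where one should throw away $f_D$ entirely) from $\beta\leq\alpha$ (where the entropy bound is the right one), and that is why the direct critical-point analysis of $\phi$ stalls.
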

        \begin{proof}
            We recall \Cref{lem:fa_fd_claims}.7 which stated $f_D(\alpha,\beta) \leq h(\beta) - h(\alpha)$. We also note that always $f_D(\alpha,\beta) \leq 0$ (as it is the logarithm of a probability) and that $h(\beta) - h(\alpha)$ can sometimes become positive, but note that this can't happen in case $\beta \leq \min\{\alpha,1-\alpha\}$. Put together, we have the following inequality 
            $$f_{DA}(\alpha,\beta,\gamma) \leq 
            \begin{cases} 
            h(\beta)-h(\alpha)+f_A(\beta, \gamma) & \text{in (1) : }  \beta \leq \min\{\alpha,1-\alpha\} \ ,\\
            f_A(\beta,\gamma)  & \text{in (2), (3), (4) : otherwise} \ .
            \end{cases}$$
            This inequality will be our starting point, and we will argue that we can upper bound it as $g(\alpha,\gamma)$. To do, we introduce the following figure, which portrays the space of valid inputs $(\alpha,\beta)$ to $f_D$, where we recall that this function introduces the constraints $\beta \leq 2\alpha$ and $\beta \leq 2(1-\alpha)$, see \Cref{lem:fa_fd_claims}.2.
            \begin{figure}[ht]
                \centering
                \resizebox{0.8\textwidth}{!}{%
                \begin{tikzpicture}
                \draw [->] (-0.5,0) -- (10.75,0)node[pos=1,right, fill=white]{$\alpha$};
                \draw [->] (0,-0.5) -- (0,5.75)node[pos=1,above, fill=white]{$\beta$};
                \draw (10,-0.25) -- (10,0)node[pos=0.5,below, fill=white]{$1$};
                \draw (5,-0.25) -- (5,0)node[pos=0.5,below, fill=white]{$1/2$};
                \draw (-0.25,5)--(0,5)node[pos=0.5,left, fill=white]{$1/2$};
                \draw (1.6,-0.25)--(1.6,0)node[pos=0.5,below, fill=white]{$2\gamma(1-\gamma)$};
                \draw (8.4,-0.25)--(8.4,0)node[pos=0.5,below, fill=white]{$1-2\gamma(1-\gamma)$};
                \draw  (0,0) -- (2.875,5.75)node[pos=0.6,above,left=0]{$\beta = 2\alpha$};
                \draw  (10,0) -- (7.125,5.75)node[pos=0.6,above,right=0]{$\beta = 2(1-\alpha)$};
                \draw  (0,0) -- (5.75,5.75)node[pos=0.4,auto,right=0,']{$\beta = \alpha$};
                \draw  (10,0) -- (4.25,5.75)node[pos=0.4,auto,left=0]{$\beta = 1-\alpha$};
                \draw  (0.8,1.6) -- (9.2,1.6)node at(5,1.3){$\beta = 2\gamma(1-\gamma)$};
                \draw  (1.6,0) -- (1.6,1.6);
                \draw  (8.4,0) -- (8.4,1.6);
                \draw (5,2.5)node{(1)};
                \draw (5,7)node{(2)};
                \draw (3,4.5)node{(3)};
                \draw (7,4.5)node{(4)};
                \end{tikzpicture}}
            \end{figure}
            Note here that the region labelled as (1) corresponds to $\beta \leq \min\{\alpha,1-\alpha\}$ and is therefore corresponds to the `if' case above, while the leftover regions (2), (3) and (4) correspond to the `otherwise' case above. We will now go over these four regions, and for each region identify which of the three ranges over $\alpha$ (as introduced in $g(\alpha,\gamma)$ apply, and for each range argue that the corresponding upper bound on $f_{DA}$ must be true. 

            Let's start with the `otherwise' case, regions (2), (3) and (4), as we can deal with all three of them in a similar way. Note that in regions (2) and (3) we have $\alpha \leq \beta$, while we have the expression $f_A(\beta,\gamma)$ that we wish to upper bound. Recall from \Cref{lem:fa_fd_claims}.3 that $f_A(\beta,\gamma)$ is decreasing with $\beta$. This means we can simply decrease $\beta$ to $\alpha$ which only increases our expression $f_A(\beta,\gamma)$. In other words, on regions (2) and (3) we have $f_A(\beta,\gamma)\leq f_A(\alpha,\gamma)$. This establishes that on regions (2) and (3) we have the required bound for $\alpha \leq 2\gamma(1-\gamma)$. Similarly, on regions (2) and (4) we have $1-\alpha \leq \beta$. We can therefore lower $\beta$ to $1-\alpha$ on these regions to obtain the bound $f_A(\beta,\gamma)\leq f_A(1-\alpha,\gamma)$. This gives us the required bound for $1-2\gamma(1-\gamma)\leq\alpha$ on regions (3) and (4). 

            However, when we are in regions (2), (3) or (4), we can also be in the middle range over $\alpha$: $2\gamma(1-\gamma)\leq\alpha\leq 1-2\gamma(1-\gamma)$. In that case, we have to argue that we can upper bound the given expression as $h(\gamma)-h(\alpha)$. For (2) and (3), it suffices to show $f_A(\alpha,\gamma) \leq h(\gamma)-h(\alpha)$, which follows from \Cref{lem:fa_fd_claims}.5. Likewise, for (4) it suffices to show $f_A(1-\alpha,\gamma) \leq h(\gamma)-h(\alpha)$, which follows from \Cref{lem:fa_fd_claims}.6.

            Having established all the relevant inequalities on regions (2), (3) and (4), we just have to deal with region (1). We cannot repeat the same trick here, as by definition $\beta$ is smaller than both $\alpha$ and $1-\alpha$. Instead, we make use of the fact that $h(\beta) - h(\alpha) + f_A(\beta,\gamma)$ has a local maximum for $\beta$ at $\beta=2\gamma(1-\gamma)$, proven in the appendix as \Cref{lem:max-beta=2gamma(1-gamma)}. We draw this value of $\beta$ as a horizontal line in the diagram. Note that this local maximum is always $\leq 1/2$, and thus indeed goes through region (1). 

            To proceed, we consider the three ranges over $\alpha$. First, consider just $\alpha\leq 2\gamma(1-\gamma)$. This corresponds to the triangular region at the left of our picture. Here, we want to show that $h(\beta)-h(\alpha)+f_A(\beta, \gamma) \leq f_A(\alpha,\gamma)$. To see this is true, recall that we have $\beta \leq \alpha \leq 2\gamma(1-\gamma)$. Since $2\gamma(1-\gamma)$ is the local maximum for $\beta$, we want to make $\beta$ as large as possible, i.e. we want to set $\beta=\alpha$. But that yields $h(\beta) - h(\alpha) + f_A(\beta,\gamma) \leq h(\alpha) - h(\alpha) + f_A(\alpha,\gamma) = f_A(\alpha,\gamma)$. 

            Next, consider $1-\alpha\leq 2\gamma(1-\gamma)$, corresponding to a symmetric triangular region at the right of our picture. We can deal with this analogously. We want to show $h(\beta)-h(\alpha)+f_A(\beta, \gamma) \leq f_A(1-\alpha,\gamma)$. We again have to make $\beta$ as large as possible, which means $\beta=1-\alpha$, giving us $h(\beta) - h(\alpha) + f_A(\beta,\gamma) \leq h(1-\alpha) - h(\alpha) + f_A(1-\alpha,\gamma) =f_A(1-\alpha,\gamma)$.

            What remains is the middle part of (1), where $2\gamma(1-\gamma) \leq \alpha \leq 1-2\gamma(1-\gamma)$. Here, we want to show $h(\beta)-h(\alpha)+f_A(\beta, \gamma) \leq h(\gamma)-h(\alpha)$. To do so, we will set $\beta$ to its local maximum of $2\gamma(1-\gamma)$, so that we get 
            $$h(\beta)-h(\alpha)+f_A(\beta, \gamma) \leq h(2\gamma(1-\gamma)) - h(\alpha) + f_A(2\gamma(1-\gamma),\gamma) = h(\gamma) - h(\alpha) \ ,$$
            where the step step follows from the lemma below which we prove in the appendix.

            \begin{restatable}{lemma}{hfa}\label{lem:h_fa}
                $h(2\alpha(1-\alpha))+f_A(2\alpha(1-\alpha),\alpha) = h(\alpha)$.
            \end{restatable}
        \end{proof}
    
    \subsection{Bounding $f_{AD} \leq g$}
        \begin{theorem}\label{thm:fad_bound}
            If $\alpha,\beta,\gamma \in [0,1]$ such that $\alpha,\beta,\gamma$ is a valid input to $f_{AD}$, that is, if $\alpha/2 \leq \min\{\beta,1-\beta\}, \gamma/2 \leq \min\{\beta,1-\beta\}$, then
            $$f_{AD}(\alpha,\beta,\gamma) \leq g(\alpha,\gamma) =
            \begin{cases} 
            f_A(\alpha,\gamma) & \text{if } \alpha \leq 2\gamma(1-\gamma) \ , \\
            h(\gamma)-h(\alpha) & \text{if } 2\gamma(1-\gamma) \leq \alpha \leq 1-2\gamma(1-\gamma) \ , \\
            f_A(1-\alpha,\gamma)  & \text{if } 1-2\gamma(1-\gamma) \leq \alpha \ . \\
            \end{cases}$$
        \end{theorem}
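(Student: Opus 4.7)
The plan is to mirror the proof of \Cref{thm:fda_bound}, now starting from the decomposition $f_{AD}(\alpha,\beta,\gamma) = f_A(\alpha,\beta) + f_D(\beta,\gamma)$. Combining the two bounds on $f_D(\beta,\gamma)$---the trivial $f_D \leq 0$ and \Cref{lem:fa_fd_claims}.7, which gives $f_D(\beta,\gamma) \leq h(\gamma) - h(\beta)$---splits the domain into Region (1''), where $\gamma \leq \beta \leq 1-\gamma$ and $f_{AD} \leq f_A(\alpha,\beta) + h(\gamma) - h(\beta)$, and an ``otherwise'' region where $\beta < \gamma$ or $\beta > 1-\gamma$ and $f_{AD} \leq f_A(\alpha,\beta)$. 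As in the proof of \Cref{thm:fda_bound}, I would then split by the three $\alpha$-ranges defining $g$, assuming $\gamma \leq 1/2$ (which is the regime relevant to the application).

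The ``otherwise'' region is handled directly from properties of $f_A$. By the symmetry and unimodality of $f_A$ in its second argument (\Cref{lem:fa_fd_claims}.4), the fact that $\beta$ is farther from $1/2$ than $\gamma$ yields $f_A(\alpha,\beta) \leq f_A(\alpha,\gamma)$; this gives the target in the small-$\alpha$ range ($\alpha \leq 2\gamma(1-\gamma)$). For the middle $\alpha$-range, \Cref{lem:fa_fd_claims}.5 combined with $h(\beta) \leq h(\gamma)$ (which holds in the ``otherwise'' region) yields $f_A(\alpha,\beta) \leq h(\gamma) - h(\alpha)$. For the large $\alpha$-range ($\alpha \geq 1 - 2\gamma(1-\gamma) \geq 1/2$), I would first apply \Cref{lem:fa_fd_claims}.3 to obtain $f_A(\alpha,\beta) \leq f_A(1-\alpha,\beta)$, then rerun the small-$\alpha$ argument with $1-\alpha$ playing the role of $\alpha$ (noting $1-\alpha \leq 2\gamma(1-\gamma)$).

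The main obstacle is Region (1''), where one has to control $f_A(\alpha,\beta) - h(\beta)$ as a function of $\beta$. The key new ingredient, playing the role of \Cref{lem:max-beta=2gamma(1-gamma)} from the previous proof, is a local-maximum lemma: a direct derivative computation reduces to the cubic $(\beta-1/2)(2\beta^2 - 2\beta + \alpha) = 0$, so the critical points of $\beta \mapsto f_A(\alpha,\beta) - h(\beta)$ are $\beta = 1/2$ (a local minimum) and $\beta = \beta_c, 1-\beta_c$ (local maxima), where $\beta_c := \frac{1-\sqrt{1-2\alpha}}{2}$ is characterized by $2\beta_c(1-\beta_c) = \alpha$. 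Applying \Cref{lem:h_fa} with $\alpha \mapsto \beta_c$ shows the local-max value is exactly $-h(\alpha)$. Given this, for the small-$\alpha$ range the relation $2\beta_c(1-\beta_c) = \alpha \leq 2\gamma(1-\gamma)$ forces $\beta_c \leq \gamma$ (by monotonicity of $x(1-x)$ on $[0,1/2]$), so both $\beta_c$ and $1-\beta_c$ lie outside $[\gamma,1-\gamma]$; the max on the restricted interval is then attained at the boundary $\beta=\gamma$, yielding $f_A(\alpha,\gamma) - h(\gamma) + h(\gamma) = f_A(\alpha,\gamma)$ exactly as needed. The middle $\alpha$-range is handled uniformly by \Cref{lem:fa_fd_claims}.5 (which directly gives $f_A(\alpha,\beta) - h(\beta) \leq -h(\alpha)$), and the large $\alpha$-range again reduces to the small case by the $\alpha \leftrightarrow 1-\alpha$ swap via \Cref{lem:fa_fd_claims}.3. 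I expect the most delicate point to be the regime $\alpha > 1/2$, where the quadratic $2\beta^2 - 2\beta + \alpha$ has no real roots and $\beta_c$ is undefined; this has to be routed through the $1-\alpha$ swap for the large case and the uniform \Cref{lem:fa_fd_claims}.5 bound for the middle case, but modulo this case-checking the algebra parallels that of \Cref{thm:fda_bound}.
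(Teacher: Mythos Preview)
Your approach is essentially the paper's: the same split via the two bounds on $f_D$, the same critical-point analysis of $\beta \mapsto f_A(\alpha,\beta)-h(\beta)$ with maxima at $\beta^*=(1-\sqrt{1-2\alpha})/2$ (this is exactly the content of the paper's \Cref{lem:max-beta=1/2,lem:max-beta=sqrt-stuff}), and the same evaluation at $\beta^*$ via \Cref{lem:h_fa} (which the paper repackages as \Cref{lem:h_fa_2}). Your direct use of \Cref{lem:fa_fd_claims}.5 for the middle $\alpha$-range and the $\alpha\leftrightarrow 1-\alpha$ swap via \Cref{lem:fa_fd_claims}.3 for the large range are mild streamlinings of the paper's route through \Cref{lem:max-beta=1/2}; the paper additionally covers $\gamma>1/2$ by an analogous ``region (2)'' case, which you correctly flag as unneeded downstream.
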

        \begin{proof}
            Just as in the proof of the analogous \Cref{thm:fda_bound} for the dual code, we recall from \Cref{lem:fa_fd_claims}.7 that $f_D(\beta,\gamma) \leq h(\gamma)-h(\beta)$, and also that always $f_D(\beta,\gamma) \leq 0$. Since the former upper bound can exceed 0 when $\beta \leq \min\{\gamma,1-\gamma\}$ or when $\beta \geq \max\{\gamma,1-\gamma\}$, in these cases we apply the naive upper bound of $0$. This gives us:
            $$f_{AD}(\alpha,\beta,\gamma) \leq 
            \begin{cases}  
            f_A(\alpha, \beta) + h(\gamma)-h(\beta) & \text{in (1), (2) : }  \gamma \leq \min\{\beta,1-\beta\} \text{ or }\gamma \geq \max\{\beta,1-\beta\} \ ,\\
            f_A(\alpha,\beta)  & \text{in (3), (4) :  otherwise} \ .
            \end{cases}$$
            We will again draw a picture depicting the space of valid inputs $(\beta,\gamma)$ to $f_D$. Recall that for the dual code, this picture related instead $\alpha$ and $\beta$. We will again go region by region to prove the relevant bound per region. 

            Note that both $f_A(\alpha,\beta)$ and $ f_A(\alpha, \beta) + h(\gamma)-h(\beta)$ are symmetric around $\beta=1/2$. This is true of $f_A(\alpha,\beta)$ due to \Cref{lem:fa_fd_claims}.4, and this is obviously true for $-h(\beta)$. As such, we only need to consider the left half of our figure, i.e., where $\beta\leq 1/2$. 
            
            \begin{figure}[ht]
                \centering
                \resizebox{0.8\textwidth}{!}{%
                \begin{tikzpicture}
                \draw [->] (-0.5,0) -- (10.75,0)node[pos=1,right, fill=white]{$\beta$};
                \draw [->] (0,-0.5) -- (0,5.75)node[pos=1,above, fill=white]{$\gamma$};
                \draw (10,-0.25) -- (10,0)node[pos=0.5,below, fill=white]{$1$};
                \draw (5,-0.25) -- (5,0)node[pos=0.5,below, fill=white]{$1/2$};
                \draw (-0.25,5)--(0,5)node[pos=0.5,left, fill=white]{$1/2$};
                \draw (1.6,-0.25)--(1.6,0)node[pos=0.5,below, fill=white]{$(1-\sqrt{1-2\alpha})/2$};
                \draw (8.4,-0.25)--(8.4,0)node[pos=0.5,below, fill=white]{$(1+\sqrt{1-2\alpha})/2$};

                \draw  (0,0) -- (2.875,5.75)node[pos=0.6,above,left=0]{$\gamma = 2\beta$};
                \draw  (10,0) -- (7.125,5.75)node[pos=0.6,above,right=0]{$\gamma = 2(1-\beta)$};
                \draw  (0,0) -- (5.75,5.75)node[pos=0.4,auto,right=0,']{$\gamma = \beta$};
                \draw  (10,0) -- (4.25,5.75)node[pos=0.4,auto,left=0]{$\gamma = 1-\beta$};
                \draw  (1.6,0) -- (1.6,1.6);
                \draw  (8.4,0) -- (8.4,1.6);

                \draw (5,2.5)node{(1)};
                \draw (5,7)node{(2)};
                \draw (3,4.5)node{(3)};
                \draw (7,4.5)node{(4)};
                \end{tikzpicture}}
            \end{figure}            
            
            Let's start with region (3), where we have the expression $f_A(\alpha,\beta)$. We first claim that on this entire region we can bound $f_A(\alpha,\beta) \leq f_A(\alpha,\gamma)$. Recall from \Cref{lem:fa_fd_claims}.4 that $f_A(\beta,\gamma)$ is symmetric around a maximum at $\gamma=1/2$, so that it increases with $\gamma$ until $\gamma=1/2$ and decreases with $\gamma$ afterwards. In region (3) we are guaranteed that $\beta\leq\gamma$. The bottom part of (3) also has $\gamma\leq 1/2$, so for that part of the region we are safe to increase $\beta$ to $\gamma$, and we obtain the desired bound. That just leaves the part of (3) where $1/2 \leq \gamma$. Here we have a situation of the form $\beta \leq 1/2 \leq \gamma$. We claim that the distance between $\beta$ and $1/2$ cannot be larger than the  distance between $1/2$ and $\gamma$. This is because on (3) we have the constraint $\gamma \leq 1-\beta$. I.e., for $\gamma$ to exceed $1/2$, we must have that $\beta$ drops below $1/2$ by the same amount. Since the maximum of $f_A(\alpha,\beta)$ lies at $\beta$ as close as possible to $1/2$, this would give us the desired upper bound: we can always swap out $\beta$ for $\gamma$, as $\gamma$ is guaranteed to be at least as close to $1/2$ as $\beta$ is. 
    
            This establishes $f_{AD}(\alpha,\beta,\gamma) \leq f_A(\alpha,\gamma)$ across all of region (3). Note that region (3) can include any $\alpha$ (depending on the value of $\beta$ and $\gamma$). This therefore definitely establishes the required inequality for $\alpha \leq 2\gamma(1-\gamma)$. Now, it is not hard to see that, for any $(\alpha,\gamma)$ such that $f_A(\alpha,\gamma)$ and $f_A(1-\alpha,\gamma)$ are both well-defined, we must have that $f_A(1-\alpha,\gamma)$ is the larger one for $\alpha \geq 1/2$ and vice versa. Hence, whenever $1/2 \leq 1-2\gamma(1-\gamma)  \leq \alpha$ we can assume $f_{AD}(\alpha,\beta,\gamma) \leq f_A(1-\alpha,\gamma)$ also. What remains to be shown for region (3) is that in the middle of the range of $\alpha$, specifically when $2\gamma(1-\gamma) \leq \alpha \leq 1-2\gamma(1-\gamma)$, we can upper bound our expression as $h(\gamma)-h(\alpha)$. It suffices to show that on this range we have $f_A(\alpha,\gamma) \leq h(\gamma)-h(\alpha)$, which we know to be true from \Cref{lem:fa_fd_claims}.5.
    
            We have thus established the upper bounds on our expression for the three different ranges over $\alpha$ over region (3). By symmetry around $\beta=1/2$, these results carry over to region (4). Next we consider region (1), where we recall again that we only need to consider the left half, i.e. where $\beta\leq 1/2$. In this case, we cannot increase $\beta$ to $\gamma$, increasing $f_A(\alpha,\beta)$ to $f_A(\alpha,\gamma)$, as by definition of region (1) we have $\gamma \leq \beta$. We therefore need to make use of $f_D(\beta,\gamma)$, which will hopefully let us decrease $\beta$ to $\gamma$, while increasing the value of our expression. Note that by including the above upper bound on $f_D$, we end up with the expression $f_A(\alpha, \beta) + h(\gamma)-h(\beta)$ on this region. Looking at the partial derivative to $\beta$, in \Cref{lem:max-beta=1/2} we show that this expression has a local maximum at $\beta=1/2$ when $\alpha \geq 1/2$. This lets us upper bound $f_A(\alpha, \beta) + h(\gamma)-h(\beta) \leq f_A(\alpha, 1/2) + h(\gamma)- h(1/2) = h(\gamma)-1$. Since $-1 \leq -h(\alpha)$ for all $\alpha$, we can bound our expression as $h(\gamma)-h(\alpha)$. We use this upper bound for $1/2 \leq \alpha \leq 1-2\gamma(1-\gamma)$. 
    
            Once $1-2\gamma(1-\gamma)\leq \alpha$, we then need to prove that $h(\gamma)-1 \leq f_A(1-\alpha,\gamma)$. Recall from \Cref{lem:fa_fd_claims}.3 that $f_A(\alpha,\gamma)$ decreases with $\alpha$, so that also $f_A(1-\alpha,\gamma)$ increases with $\alpha$. Therefore, if we establish the bound for the smallest value of $\alpha=1-2\gamma(1-\gamma)$, then it is valid for all larger $\alpha$ as well. We thus need to establish $h(\gamma)-1 \leq f_A(2\gamma(1-\gamma),\gamma)$. Recall from \Cref{lem:h_fa} that we can write $f_A(2\gamma(1-\gamma),\gamma)= h(\gamma)-h(2\gamma(1-\gamma))$, which we use to rewrite the condition we need to prove to something obviously true: $h(\gamma)-1 \leq h(\gamma)-h(2\gamma(1-\gamma)) \iff h(2\gamma(1-\gamma)) \leq 1$.
    
            We've established the required bounds for region (1) under the assumption that $1/2 \leq \alpha$. Let us now consider the case where $\alpha \leq 1/2$. By \Cref{lem:max-beta=sqrt-stuff}, we find a global maximum for the range $\beta \leq 1/2$ at the point $\beta^*=(1 - \sqrt{1-2\alpha})/2$.
            Since we are in region (1), we know that $\gamma \leq \beta$. We can consider two cases here: either the critical point $\beta^*$ is in range (meaning $\gamma \leq \beta^*$ or equivalently $2\gamma(1-\gamma)\leq \alpha$) or it is out of range (meaning $\beta^* \leq \gamma$ or equivalently $\alpha\leq2\gamma(1-\gamma)$). 
    
            Suppose first that $\alpha \leq 2\gamma(1-\gamma)$ so that the critical point $\beta^*$ is out of range of $\beta$. To maximize our expression, we then have to lower $\beta$ as far as we can, which yields $\beta=\gamma$. This lets us bound $f_A(\alpha, \beta) + h(\gamma)-h(\beta) \leq f_A(\alpha, \gamma) + h(\gamma)-h(\gamma)=f_A(\alpha, \gamma)$, as desired. Now suppose instead that $2\gamma(1-\gamma) \leq \alpha$ so that we can fix $\beta=\beta^*$. We then bound 
            $$f_A(\alpha, \beta) + h(\gamma)-h(\beta) \leq f_A(\alpha,(1 - \sqrt{1-2\alpha})/2) + h(\gamma) - h((1 - \sqrt{1-2\alpha})/2) = h(\gamma) - h(\alpha) \ ,$$
            where the last step follows from the following lemma.
            \begin{lemma}\label{lem:h_fa_2}
                $f_A(\alpha,(1 - \sqrt{1\pm2\alpha})/2) - h((1 - \sqrt{1\pm2\alpha})/2) = - h(\alpha)$ \ .
            \end{lemma}
            \begin{proof}
                Recall \Cref{lem:h_fa}, which stated that $h(2\alpha(1-\alpha))+f_A(2\alpha(1-\alpha),\alpha) = h(\alpha)$. Now write $\beta=2\alpha(1-\alpha)$. Rewriting gives $\alpha=(1\pm\sqrt{1-2\beta})/2$. This means we can rewrite the statement coming from the Lemma as $h(\beta)+f_A(\beta,(1\pm\sqrt{1-2\beta})/2) = h((1\pm\sqrt{1-2\beta})/2)$, which is what we wanted to prove.
            \end{proof}
    
            That just leaves region (2), where we again note that we only look at the half on the left, where $\beta \leq 1/2$. This is a similar situation to region (1), as we are bounding the same expression $f_A(\alpha, \beta) + h(\gamma)-h(\beta)$ with the same critical points: $\beta=1/2$ when $1/2 \leq \alpha$ and $\beta^*$ when $\alpha \leq 1/2$. The difference is that we have $1-\beta \leq \gamma$ now (instead of $\gamma \leq \beta$ as in region (1)), which can be rewritten to $1-\gamma \leq \beta$. Suppose first that $1/2 \leq \alpha$. Note that in our analysis of this case on region (1) we didn't actually use the fact that $\gamma \leq \beta$ (which is different now). It follows that the resulting bounds also apply here on region (2). Moving to the case $\alpha \leq 1/2$, this means that the critical point is in range when $1-\gamma \leq \beta^*$ which we can be rewritten to $2\gamma(1-\gamma) \leq \alpha$. The critical point is then out of range whenever $\alpha \leq 2\gamma(1-\gamma)$. Note that these are the same conditions are we had for region (1). It follows that we also obtain the same upper bounds: $f_A(\alpha,\gamma)$ whenever $\alpha \leq 2\gamma(1-\gamma)$ and $h(\gamma)-h(\alpha)$ when $2\gamma(1-\gamma) \leq \alpha \leq 1/2$.
        \end{proof}
    
    \subsection{Bounding the restricted spectral shape functions}
        \begin{theorem}\label{thm:rda_spectral_shape}
            If $m\geq 1$, $\tau \in [0,\min\{2\gamma,2(1-\gamma)\}]$ and $\gamma\in[0,1]$, then $\hat{r}^{(m)}_{DA}(\tau,\gamma) \leq \hat{r}^{(m)}_A(\tau,\gamma)$ and $\hat{r}^{(m)}_{AD}(\tau,\gamma) \leq \hat{r}^{(m)}_A(\tau,\gamma)$.
        \end{theorem}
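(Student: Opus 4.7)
The plan is to proceed by induction on $m$. The base case $m=1$ is immediate: all three restricted shape functions collapse to $h(\gamma)/2$ (no inner maximization occurs, so the restriction on $\alpha$ has no effect), and both inequalities hold with equality. For the inductive step, fix $m \geq 2$ and assume the statement holds at level $m-1$. Specializing the inductive hypothesis to $\tau = 0$, where the range $\alpha \in [\tau, 1-\tau]$ becomes vacuous, yields the unrestricted bound $\hat{r}^{(m-1)}_{DA}(\alpha) \leq \hat{r}^{(m-1)}_A(\alpha)$, and likewise for $AD$. Combined with the pointwise inequalities $f_{DA}(\alpha,\beta,\gamma), f_{AD}(\alpha,\beta,\gamma) \leq g(\alpha,\gamma)$ from \Cref{thm:fda_bound} and \Cref{thm:fad_bound}, the defining maxima give
\[
\hat{r}^{(m)}_{DA}(\tau,\gamma),\ \hat{r}^{(m)}_{AD}(\tau,\gamma) \ \leq\ \max_{\tau \leq \alpha \leq 1-\tau}\Bigl\{\hat{r}^{(m-1)}_A(\alpha) + g(\alpha, \gamma)\Bigr\},
\]
so it suffices to bound the right-hand side by $\hat{r}^{(m)}_A(\tau, \gamma)$.

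I would split the maximization over $\alpha$ into the three regimes in the piecewise definition of $g$, using as an auxiliary fact the symmetry $\hat{r}^{(k)}_A(\alpha) = \hat{r}^{(k)}_A(1-\alpha)$ for all $k$. This symmetry follows by a side induction from the identity $f_A(\alpha',\gamma) = f_A(\alpha', 1-\gamma)$ (a direct algebraic check using $h(x) = h(1-x)$ and the algebraic symmetry of the argument of $h$ inside $f_A$). In the \emph{low} regime $\alpha \in [\tau, 2\gamma(1-\gamma)]$, we have $g(\alpha,\gamma) = f_A(\alpha,\gamma)$, and since $2\gamma(1-\gamma) \leq \min\{2\gamma, 2(1-\gamma)\}$ (and $\leq 1/2 \leq 1-\tau$), this $\alpha$ lies inside the admissible range of the max defining $\hat{r}^{(m)}_A(\tau,\gamma)$, so the bound is direct. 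In the \emph{high} regime $\alpha \in [1 - 2\gamma(1-\gamma), 1-\tau]$, where $g(\alpha,\gamma) = f_A(1-\alpha,\gamma)$, the substitution $\alpha' = 1-\alpha$ together with the symmetry of $\hat{r}^{(m-1)}_A$ reduces to the low regime.

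The main obstacle is the \emph{middle} regime $\alpha \in [2\gamma(1-\gamma), 1-2\gamma(1-\gamma)] \cap [\tau, 1-\tau]$, where $g(\alpha,\gamma) = h(\gamma) - h(\alpha)$ and, for $\gamma$ sufficiently far from $1/2$, the value $f_A(\alpha,\gamma)$ is not even defined at the $\alpha$ in question. My strategy is to compare against the admissible boundary point $\alpha^\ast = 2\gamma(1-\gamma)$, at which \Cref{lem:h_fa} gives the identity $f_A(\alpha^\ast, \gamma) = h(\gamma) - h(\alpha^\ast)$. This provides the lower bound $\hat{r}^{(m)}_A(\tau,\gamma) \geq \hat{r}^{(m-1)}_A(\alpha^\ast) + h(\gamma) - h(\alpha^\ast)$ (provided $\alpha^\ast \geq \tau$; if $\alpha^\ast < \tau$ one uses $\alpha^\ast = \tau$ and the bound $f_A(\tau,\gamma) \leq h(\gamma) - h(\tau)$ from \Cref{lem:fa_fd_claims}), and it remains to establish the auxiliary inequality $\hat{r}^{(m-1)}_A(\alpha) - h(\alpha) \leq \hat{r}^{(m-1)}_A(\alpha^\ast) - h(\alpha^\ast)$ throughout the middle regime. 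Unfolding one level of the recursion for $\hat{r}^{(m-1)}_A$ and applying the bound $f_A(\alpha'', \alpha) \leq h(\alpha) - h(\alpha'')$ from \Cref{lem:fa_fd_claims} produces a telescoping cancellation of the $h(\alpha)$ term, reducing matters to a bound on $\max_{\alpha''}\{\hat{r}^{(m-2)}_A(\alpha'') - h(\alpha'')\}$ that no longer depends on $\alpha$; combined with the $\alpha \leftrightarrow 1-\alpha$ symmetry to fold the middle regime into $[\alpha^\ast, 1/2]$ and a monotonicity argument, this should close the induction. Verifying the monotonicity of $\hat{r}^{(m-1)}_A - h$ on $[\alpha^\ast, 1/2]$ carefully is the technical heart of the argument, and I expect it to require tracking one more layer of the recursion to bound the derivative, invoking \Cref{lem:fa_fd_claims} repeatedly.
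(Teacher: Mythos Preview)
Your overall plan matches the paper's proof: induct on $m$; at the inductive step, replace $\hat{r}^{(m-1)}_{DA}$ (resp.\ $\hat{r}^{(m-1)}_{AD}$) by $\hat{r}^{(m-1)}_A$ via the hypothesis, then bound $f_{DA}, f_{AD} \leq g$ using \Cref{thm:fda_bound,thm:fad_bound}; split the maximisation over $\alpha \in [\tau,1-\tau]$ into the three intervals determined by $g$, handle the low interval directly, the high interval via the symmetry $\hat{r}^{(m-1)}_A(\alpha)=\hat{r}^{(m-1)}_A(1-\alpha)$, and on the middle interval reduce to the boundary point $\alpha^\ast = 2\gamma(1-\gamma)$ where \Cref{lem:h_fa} gives $h(\gamma)-h(\alpha^\ast) = f_A(\alpha^\ast,\gamma)$. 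This is exactly the paper's architecture.

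The genuine gap is in your middle-regime argument. You correctly isolate the key inequality $\hat{r}^{(m-1)}_A(\alpha) - h(\alpha) \leq \hat{r}^{(m-1)}_A(\alpha^\ast) - h(\alpha^\ast)$ for $\alpha$ in the middle interval, but your proposed route via ``unfolding one level and using $f_A(\alpha'',\alpha) \leq h(\alpha) - h(\alpha'')$'' does not work: it only yields $\hat{r}^{(m-1)}_A(\alpha) - h(\alpha) \leq \max_{\alpha''}\{\hat{r}^{(m-2)}_A(\alpha'') - h(\alpha'')\}$, a constant independent of both $\alpha$ and $\gamma$. Since the target $\hat{r}^{(m-1)}_A(\alpha^\ast) - h(\alpha^\ast)$ depends on $\gamma$ (through $\alpha^\ast$) and, by monotonicity of the spectral shape in $m$, is itself at most this same constant, the inequality points the wrong way and cannot close the induction. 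The paper sidesteps this entirely: it argues directly that $\alpha \mapsto \hat{r}^{(m-1)}_A(\alpha) - h(\alpha)$ is symmetric about $1/2$ and decreasing on $[0,1/2]$, hence maximised on the middle interval at its boundary $\alpha^\ast$; for this it imports structural facts about the $RA^m$ spectral shape (symmetry, monotonicity on $[0,1/2]$, and the bound $\hat{r}^{(m-1)}_A \leq h/2$) from \cite{ravazzi_spectra_2009}, rather than unwinding another layer of the recursion from scratch.
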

        \begin{proof}
            We prove the theorem by induction on $m$. We write the proof out for $R(DA)^m$ codes, and at the end will note that the same proof will apply to the dual code. The base case $m=0$ is trivial, as both spectral shape functions are equal, consisting just of the term $h(\alpha)/2$, which still makes them equal when we limit their range by the same amount $\tau$. For the induction step, assume the statement is true for some $m-1 \geq 1$: $\hat{r}^{(m-1)}_{DA}(\tau,\gamma) \leq \hat{r}^{(m-1)}_A(\tau,\gamma)$. We then need to prove that also $\hat{r}^{(m)}_{DA}(\tau,\gamma) \leq \hat{r}^{(m)}_A(\tau,\gamma)$. This is equivalent to saying 
            $$\max_{\substack{\tau \leq \alpha \leq 1-\tau\\\beta}} \{\hat{r}^{(m-1)}_{DA}(\alpha) + f_{DA}(\alpha,\beta,\gamma)\}  \leq \max_{\substack{\tau \leq \alpha \leq 1-\tau}} \{\hat{r}^{(m-1)}_A(\alpha) + f_A(\alpha,\gamma)\} \ .$$
            We immediately note that we can apply the induction hypothesis to bound $\hat{r}^{(m-1)}_{DA}(\tau,\gamma) \leq \hat{r}^{(m-1)}_A(\tau,\gamma)$ on the left hand side, to obtain the following easier statement that we want to prove:
            $$\max_{\substack{\tau \leq \alpha \leq 1-\tau\\\beta}} \{\hat{r}^{(m-1)}_A(\alpha) + f_{DA}(\alpha,\beta,\gamma)\}  \leq \max_{\substack{\tau \leq \alpha \leq 1-\tau}} \{\hat{r}^{(m-1)}_A(\alpha) + f_A(\alpha,\gamma)\} \ .$$
            To prove this statement, we will split the range over $\alpha$ up into three intervals: $I_1=[\tau,2\gamma(1-\gamma)], I_2=[2\gamma(1-\gamma),1-2\gamma(1-\gamma)]$  and $I_3 = [1-2\gamma(1-\gamma),1-\tau]$, so that we can rewrite the left hand side as:
            $$\max\left\{ \max_{\alpha\in I_1,\beta} \{\hat{r}^{(m-1)}_{A}(\alpha) + f_{DA}(\alpha,\beta,\gamma)\}, \max_{\alpha\in I_2,\beta} \{\hat{r}^{(m-1)}_{A}(\alpha) + f_{DA}(\alpha,\beta,\gamma)\}, \max_{\alpha\in I_3,\beta} \{\hat{r}^{(m-1)}_{A}(\alpha) + f_{DA}(\alpha,\beta,\gamma)\}\right\} \ ,$$
            where we still want to upper bound this expression as $\max_{\substack{\tau \leq \alpha \leq 1-\tau}} \{\hat{r}^{(m-1)}_A(\alpha) + f_A(\alpha,\gamma)\}$. To do so, we consider the three ranges of $\alpha$ in turn. 

            Now, we cannot hope to directly upper bound $f_{DA}(\alpha,\beta,\gamma) \leq f_A(\alpha,\beta,\gamma)$ across these three ranges. This is because, while all the points across these three intervals contain legal inputs $\alpha$ to $f_{DA}(\alpha,\beta,\gamma)$, this is not the case for $f_A(\alpha,\gamma)$. In fact none of the $\alpha \in I_3$ are valid inputs to $f_A(\alpha,\gamma)$ (we require $\alpha \leq \min\{2\gamma,2(1-\gamma)\}$ for $f_A(\alpha,\gamma)$ and this is smaller than $1-2\gamma(1-\gamma)$ which is the lower end of $I_3$), and only some of the points $\alpha\in I_2$ are valid inputs to $f_A(\alpha,\gamma)$. In other words: while the full range of valid inputs to the left hand side is exactly $I_1\cup I_2 \cup I_3$, the full range of valid inputs to the right hand side is contained in just $I_1 \cup I_2$. In the below, we may sometimes maximize the function $\alpha \mapsto f_A(\alpha,\gamma)$ over domains that include values of $\alpha$ for which $f_A(\alpha,\gamma)$. In such cases, the maximum is defined as the maximum over the subset of values $\alpha$ for which $f_A(\alpha,\gamma)$ is defined. 

            We will therefore try to establish, first, that 
            $$\max_{\alpha\in I_1,\beta} \{\hat{r}^{(m-1)}_{A}(\alpha) + f_{DA}(\alpha,\beta,\gamma)\} \leq \max_{\alpha\in I_1} \{\hat{r}^{(m-1)}_A(\alpha) + f_{A}(\alpha,\gamma)\} \ ,$$
            then the same for $I_2$, but then for $I_3$ we show instead that 
            $$\max_{\alpha\in I_3,\beta} \{\hat{r}^{(m-1)}_{A}(\alpha) + f_{DA}(\alpha,\beta,\gamma)\} \leq \max_{\alpha\in I_1} \{\hat{r}^{(m-1)}_A(\alpha) + f_{A}(\alpha,\gamma)\} \ .$$
            Put together, we can upper bound our original expression as:
            \begin{align*}
                &\max\left\{ \max_{\alpha\in I_1,\beta} \{\hat{r}^{(m-1)}_{A}(\alpha) + f_{DA}(\alpha,\beta,\gamma)\}, \max_{\alpha\in I_2,\beta} \{\hat{r}^{(m-1)}_{A}(\alpha) + f_{DA}(\alpha,\beta,\gamma)\}, \max_{\alpha\in I_3,\beta} \{\hat{r}^{(m-1)}_{A}(\alpha) + f_{DA}(\alpha,\beta,\gamma)\}\right\} \\ 
                &\leq \max\left\{ \max_{\alpha\in I_1} \{\hat{r}^{(m-1)}_A(\alpha) + f_{A}(\alpha,\gamma)\}, \max_{\alpha\in I_2} \{\hat{r}^{(m-1)}_{A}(\alpha) + f_A(\alpha,\gamma)\}, \max_{\alpha\in I_1} \{\hat{r}^{(m-1)}_{A}(\alpha) + f_A(\alpha,\gamma)\}\right\} \\
                &= \max_{\alpha\in I_1 \cup I_2} \{\hat{r}^{(m-1)}_A(\alpha) + f_{A}(\alpha,\gamma)\} \\
                &= \max_{\tau \leq \alpha \leq 1-2\gamma(1-\gamma)} \{\hat{r}^{(m-1)}_A(\alpha) + f_{A}(\alpha,\gamma)\} \\
                &= \max_{\tau \leq \alpha \leq \min\{2\gamma,2(1-\gamma)\}} \{\hat{r}^{(m-1)}_A(\alpha) + f_{A}(\alpha,\gamma)\} \\
                &=  \hat{r}^{(m)}_{A}(\tau,\gamma) \ .
            \end{align*}
            What remains is to prove the three claims about maximizing over $I_1, I_2$ and $I_3$. We will need the following upper bound which we proved as \Cref{thm:fda_bound}. 
            $$f_{DA}(\alpha,\beta,\gamma) \leq g(\alpha,\gamma)=
            \begin{cases} 
            f_A(\alpha,\gamma) & \text{if } \alpha \leq 2\gamma(1-\gamma) \ , \\
            h(\gamma)-h(\alpha) & \text{if } 2\gamma(1-\gamma) \leq \alpha \leq 1-2\gamma(1-\gamma) \ , \\
            f_A(1-\alpha,\gamma)  & \text{if } 1-2\gamma(1-\gamma) \leq \alpha \ . \\
            \end{cases}$$
            First, consider $I_1=[\tau,2\gamma(1-\gamma)]$. For $\alpha\in I_1$ we have the bound $f_{DA}(\alpha,\beta,\gamma)\leq f_A(\alpha,\gamma)$, which immediately gives us our result:
            $$\max_{\alpha\in I_1,\beta} \{\hat{r}^{(m-1)}_A(\alpha) + f_{DA}(\alpha,\beta,\gamma)\} 
            \leq \max_{\alpha\in I_1} \{\hat{r}^{(m-1)}_A(\alpha) + f_A(\alpha,\gamma)\}\ .$$
            Second, consider $I_3=[1-2\gamma(1-\gamma), \leq 1-\tau]$. For $\alpha\in I_3$ we have the bound $f_{DA}(\alpha,\beta,\gamma)\leq f_A(\alpha,\gamma)$ for $\alpha \in I_1$, which gives us:
            $$\max_{\alpha\in I_3,\beta} \{\hat{r}^{(m-1)}_A(\alpha) + f_{DA}(\alpha,\beta,\gamma)\} 
            = \max_{\alpha\in I_3} \{\hat{r}^{(m-1)}_A(1-\alpha) + f_A(1-\alpha,\gamma)\} 
            = \max_{\alpha\in I_1} \{\hat{r}^{(m-1)}_A(\alpha) + f_A(\alpha,\gamma)\} \ ,$$
            where the first equality follows from $\hat{r}^{(m-1)}_A(\alpha)=\hat{r}^{(m-1)}_A(1-\alpha)$ (see Proposition 3.1 in \cite{ravazzi_spectra_2009}), and the second equality follows because $I_1$ and $I_3$ are symmetric around $1/2$.

            Finally, consider $I_2=[2\gamma(1-\gamma), 1-2\gamma(1-\gamma)]$. For $\alpha\in I_2$ we have the bound $f_{DA}(\alpha,\beta,\gamma) \leq h(\gamma)-h(\alpha)$, so that we write 
            $$\max_{\alpha\in I_2,\beta} \{\hat{r}^{(m-1)}_A(\alpha) + f_{DA}(\alpha,\beta,\gamma)\} 
            \leq \max_{\alpha\in I_2} \{\hat{r}^{(m-1)}_A(\alpha) + h(\gamma) - h(\alpha)\}\ .$$
            We now claim that this is maximized with respect to $\alpha$ by one of the boundary points of the range. To see this, note first that $h(x) \geq h(x)/2 \geq \hat{r}^{(m-1)}_A(\alpha)$ (this is because the spectral shape function decreases with $m$, \cite[Proposition~4]{ravazzi_spectra_2009}), so that $\hat{r}^{(m-1)}_A(\alpha) - h(x)$ is non-negative. Next, we recall that $\hat{r}^{(m-1)}_A(\alpha)$ is increasing until $x=1/2$ and is symmetric around $1/2$~\cite[Propositions~3.1,~3.4]{ravazzi_spectra_2009}. This implies that the expression over $x$ is decreasing until $1/2$ and then increasing again, so that the maximizing value is either of the two boundary points. This means that we can upper bound the previous expression as
            $$ \leq \hat{r}^{(m-1)}_A(2\gamma(1-\gamma)) - h(2\gamma(1-\gamma)) + h(y) 
            = \hat{r}^{(m-1)}_A(2\gamma(1-\gamma)) +f_A(2\gamma(1-\gamma),y) 
            \leq \max_{\alpha\in I_2} \{\hat{r}^{(m-1)}_A(\alpha) + f_{A}(\alpha,\gamma)\} \ ,$$
            where the equality follows from \Cref{lem:h_fa}: this stated that $h(2\gamma(1-\gamma)) + f_A(2\gamma(1-\gamma),y) = h(y)$, which is of course the same as saying $f_A(2\gamma(1-\gamma),y) = h(y) - h(2\gamma(1-\gamma))$. The final inequality follows because our second-to-last expression is simply $\hat{r}^{(m-1)}_A(\alpha) +f_A(\alpha,\beta)$ evaluated at $\alpha=2\gamma(1-\gamma)$, so that maximizing over a larger range can only increase the value. 

            We note that the case for the dual code $R(AD)^m$ is completely analogous, as we have exactly the same upper bound on $f_{AD}$.
        \end{proof}

    \subsection{$(**)$ is negligible}
        \begin{theorem}\label{thm:**_negligible}
            Let $\epsilon>0$ and $m \geq 1$. The expected number of vectors in an $R(AD)^m$ code or $R(DA)^m$ code that have weight between $h$ and $n-h$ after $m-1$ rounds and weight $\leq (\delta^{(m)}-\epsilon)n$ after $m$ rounds is $\negl(n)$.
        \end{theorem}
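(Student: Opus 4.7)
The plan is to bound $(**)$ by reducing it to a controlled maximum of the restricted spectral shape function, then applying \Cref{thm:rda_spectral_shape} to transfer the bound from $R(DA)^m$ or $R(AD)^m$ onto $RA^m$, and finally using the RMA analysis to argue this maximum is sufficiently negative. Concretely, I would first use Stirling's approximation to bound $p_A(a,b), p_D(a,b)$ in exponential form via $f_A, f_D$, and handle $A_{w_1}(R(DA)^{m-1})$ by an outer induction on $m$ so that it is bounded (up to polynomial factors) by $2^{n \hat{r}^{(m-1)}_{DA}(w_1/n)}$. This rewrites each summand of $(**)$ as a $\poly(n)$ factor times $2^{n \cdot (\hat{r}^{(m-1)}_{DA}(\alpha) + f_{DA}(\alpha,\beta,\gamma))}$ with $\alpha=w_1/n$, $\beta=w_2/n$, $\gamma=w_3/n$. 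Since the triple sum has only $\poly(n)$ terms, $(**)$ is at most $\poly(n)$ times the largest single summand.

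Maximizing the exponent over the admissible range then gives
\[
    \log_2 (**) \ \leq\ O(\log n) + n \cdot \max_{\substack{h/n \leq \alpha \leq 1 - h/n \\ \beta,~ \gamma \leq \delta^{(m)}-\eps}} \Big(\hat{r}^{(m-1)}_{DA}(\alpha) + f_{DA}(\alpha,\beta,\gamma)\Big) \ =\ O(\log n) + n \cdot \max_{\gamma \leq \delta^{(m)}-\eps} \hat{r}^{(m)}_{DA}(h/n,\gamma) \ ,
\]
and the analogous statement for $R(AD)^m$. Invoking \Cref{thm:rda_spectral_shape} yields $\hat{r}^{(m)}_{DA}(h/n,\gamma) \leq \hat{r}^{(m)}_{A}(h/n,\gamma)$ (and likewise for $AD$), so the task reduces to upper bounding $\max_{\gamma \leq \delta^{(m)}-\eps} \hat{r}^{(m)}_A(h/n,\gamma)$ by a sufficiently negative number.

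For this last step I would mirror the argument sketched in \Cref{sec:RMA}: for $\gamma \leq \delta^{(m)}$ the unrestricted function $\alpha \mapsto \hat{r}^{(m-1)}_A(\alpha) + f_A(\alpha,\gamma)$ has its maximum at $\alpha=0$ and is strictly decreasing at the origin, so imposing $\alpha \geq h/n$ must produce a strictly negative value. Quantitatively, I expect the bound $\hat{r}^{(m)}_A(h/n,\gamma) \leq -C(\eps,m)\cdot h/n$ for some constant $C(\eps,m)>0$ that is uniform in $\gamma \in [0,\delta^{(m)}-\eps]$. Together with the polynomial overhead this produces
\[
    (**) \ \leq\ \poly(n) \cdot 2^{-C(\eps,m)\cdot h} \ =\ \poly(n) \cdot n^{-C(\eps,m)\cdot \log n} \ =\ \negl(n) \ ,
\]
since $h = \log^2 n$. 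The case $\alpha \geq 1-h/n$ at the opposite end of the range is handled by the symmetry $\hat{r}^{(m-1)}_A(\alpha) = \hat{r}^{(m-1)}_A(1-\alpha)$ already used in the proof of \Cref{thm:rda_spectral_shape}.

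The main obstacle is making the ``strictly decreasing slope at the origin'' claim quantitative and uniform in $\gamma$. One needs the partial derivative of $\hat{r}^{(m-1)}_A(\alpha) + f_A(\alpha,\gamma)$ with respect to $\alpha$, evaluated at $\alpha=0^+$, to be bounded away from zero by a negative constant, uniformly over $\gamma \in [\eps', \delta^{(m)}-\eps]$ for some small $\eps'>0$. This should follow by induction on $m$ together with a compactness argument on the closed interval, but care is required near $\gamma \to 0$: there $f_A(\alpha,\gamma)$ has a log-singularity and the natural $\alpha$-domain shrinks ($\alpha \leq 2\gamma$), meaning \Cref{thm:rda_spectral_shape} is not directly applicable. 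For that regime (say $\gamma \leq h/(2n)$, which contains only $O(h)$ distinct values of $w_3$), I would instead bound the relevant contribution to $(**)$ by hand, using the trivial estimate $p_A(w_2,w_3) p_D(w_1,w_2) \leq 1$ and the inductive control on $A_{w_1}(R(DA)^{m-1})$ with $w_1 \in [h,n-h]$; with only $O(h) = \poly\log(n)$ values of $w_3$ to sum over, this tail contribution is swallowed by the $\negl(n)$ bound.
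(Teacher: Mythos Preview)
Your overall strategy matches the paper's: Stirling bounds on $p_A,p_D$, pass to the restricted spectral shape, invoke \Cref{thm:rda_spectral_shape} to reduce to $RA^m$, then a quantitative estimate on $\hat r^{(m)}_A(h/n,\gamma)$. Two steps, however, diverge from the paper and both contain real gaps.

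\textbf{The quantitative step.} Your argument uses the slope of $\alpha \mapsto \hat r^{(m-1)}_A(\alpha)+f_A(\alpha,\gamma)$ at $\alpha=0^+$. That slope only controls the value \emph{at} $\alpha=h/n$; it does not bound the maximum over the whole interval $[h/n,1-h/n]$. Nothing you have written rules out the function rising back toward $0$ at some interior $\alpha$, and compactness on $[\eps',\delta^{(m)}-\eps]$ does not help because the domain in $\alpha$ depends on $n$. The paper differentiates in $\gamma$ instead: since $\hat r^{(m-1)}_A(\alpha)+f_A(\alpha,\delta^{(m)})\le 0$ for \emph{every} $\alpha$ (by definition of $\delta^{(m)}$), it suffices to show that lowering $\gamma$ from $\delta^{(m)}$ to $\delta^{(m)}-\eps$ drops $f_A(\alpha,\gamma)$ by at least $\Omega(h/n)$ uniformly over $\alpha\ge h/n$. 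They compute $\partial f_A/\partial\gamma$ explicitly, show it equals $\Omega(h/n)$ at $\alpha=h/n$, and verify via the mixed partial $\partial^2 f_A/\partial\alpha\partial\gamma>0$ that this lower bound only improves as $\alpha$ grows. This gives the uniform bound over all $\alpha$ at once, with no need for monotonicity in $\alpha$.

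\textbf{The small-$\gamma$ tail.} Your proposed fix---bound $p_A p_D\le 1$ and rely on ``inductive control'' of $A_{w_1}$---cannot succeed: $\sum_{w_1=h}^{n-h}A_{w_1}(R(DA)^{m-1})$ is exponentially large (it is essentially the full codebook), so a polylog-sized sum over $w_3$ cannot absorb it. The paper instead uses that the restricted spectral shape is increasing in $\gamma$ for $\gamma\le 1/2$ (carried over from \cite[Proposition~3.4]{ravazzi_spectra_2009}), so the whole sum $\sum_{w_3\le d}$ is dominated by the single term $w_3=d$, i.e.\ $\gamma=\delta^{(m)}-\eps$. At that one value the hypothesis $\tau\le 2\gamma$ of \Cref{thm:rda_spectral_shape} is comfortably met, so no separate small-$\gamma$ analysis is needed at all.
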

        \begin{proof}
            Recall that the expression we need to bound to prove is what we called $(**)$ in \Cref{thm:main}, where we wrote $d=(\delta^{(m)}-\epsilon)n$. We express it below for $R(DA)^m$ codes.
            \begin{align*} 
                (**) &= \sum_{w_1=h}^{n-h}\sum_{w_2=1}^{n}\sum_{w_3=1}^{d} A_{w}(R(DA)^{m-1}) p_D(w_1,w_2) p_A(w_2,w_3) \\
                &\leq n^2 \sum_{w_3=1}^d \max_{\substack{h\leq w_1 \leq n-h \\ w_2}} \left\{ A_{w}(R(DA)^{m-1}) p_D(w_1,w_2) p_A(w_2,w_3) \right\}\ .
            \end{align*}
            To deal with this, we want to rewrite the above in terms of the restricted spectral shape function of the $R(DA)^m$ code. This means we have to bound $p_A$ and $p_D$ in terms of $f_A$ and $f_D$, and then we have to bound $A_w$ in terms of the spectral shape function $\hat{r}$. We recall how to bound the former from \Cref{sec:RMA}: we can first use \Cref{lem:A_D_claims} to upper bound $p_A(a,b) \leq O(1) \cdot p_A'(a,b)$ where the latter is simply $p_A(a,b)$ but with ceilings, floors and $-1$'s removed. It then follows easily that $p_A'(a,b) \leq 2^{n\cdot f_A(a,b)}$ using Stirling's approximation. We can do the same for $p_D(a,b) \leq O(1) \cdot p_D'(a,b)$. To bound $A_w$ in terms of $\hat{r}$, we recall that this relation was given as Lemma~5 in \cite{ravazzi_spectra_2009}, which prove that $A_w(RA^m) \leq (n+1)^{2m} \cdot 2^{n \cdot\hat{r}^{(m)}_A(w/n)}$. The case for $R(AD)^m$ and $R(DA)^m$ is analogous, yielding the bound $A_w(R(DA)^m) \leq \poly(n) \cdot 2^{n \cdot \hat{r}^{(m)}_{DA}(w/n)}$ and likewise for the dual code. We then have
            \begin{align*} 
                (**) &\leq \poly(n) \sum_{w_3=1}^d \max_{\substack{h\leq w_1 \leq n-h \\ w_2}} \left\{ 2^{n \left(\hat{r}^{(m-1)}_{DA}(w_1/n) + f_{DA}(w_1/n,w_2/n,w_3/n) \right)} \right\} \\
                &= \poly(n) \sum_{w_3=1}^d 2^{n \cdot \hat{r}^{(m)}_{DA}(h/n, w_3/n)} 
            \leq \poly(n) \sum_{w_3=1}^d 2^{n \cdot \hat{r}^{(m)}_{A}(h/n, w_3/n)} \ ,
            \end{align*}
            where we note that in the last step we use our main result: the fact that the restricted spectral shape function for $R(DA)^m$ codes is upper bounded by the one for $RA^m$ codes. Note that since we have the same result for $R(AD)^m$ codes, the argument follows analogously for that code. We now claim that the above is increasing with $w_3$ until $w_3=n/2$. This is known to be true for the regular spectral shape function of $RA^m$ codes from Proposition~3.4 in \cite{ravazzi_spectra_2009}, and again a completely analogous argument carries this result over to the restricted spectral shape function. We can therefore write 
            $$ (**) 
            \leq \poly(n) \cdot 2^{n \cdot \hat{r}^{(m)}_{A}(h/n, \delta^{(m)}-\epsilon)} \leq \poly(n) \cdot 2^{n \cdot -\Omega(h/n)} = \poly(n) \cdot 2^{-\Omega(h)} =n^{-\Omega(\log n)} = \negl(n) \ ,
            $$
            where we use the fact that $\hat{r}^{(m)}_A(h/n, \delta^{(m)}-\epsilon) \leq -\Omega(h/n)$, which we will prove now. We start by writing the left hand side out as $\max\limits_{\substack{h/n\leq \alpha\leq 1-h/n}} \{\hat{r}^{(m-1)}_A(\alpha) + f_A(\alpha,\delta^{(m)}-\epsilon)\}$. Note that $A$ restricts the range of $\alpha$, so that the upper bound on $\alpha$ becomes $\min\{2(\delta^{(m)}-\epsilon),2(1-(\delta^{(m)}-\epsilon))\}$, which is less than $1-h/n$ (for large enough $n$). In fact, since we will never target $\delta^{(m)}$ above $1/2$, the former must be the minimum, so that we can write $\alpha \leq 2(\delta^{(m)}-\epsilon)$.
    
            We know that $\hat{r}^{(m)}(\delta^{(m)})=0$, so in particular, for any $\alpha$ we have that $\hat{r}^{(m-1)}_A(\alpha) + f_A(\alpha, \delta^{(m)})  \leq 0$. Recall from \Cref{lem:fa_fd_claims}.4 that $f_A(\alpha,\beta)$ decreases when $\beta$ decreases (for $\beta \leq 1/2$). However, this decrease is only strict when $\alpha>0$ (as $f_A(0,\beta)=0$ regardless of $\beta$). Since we have a lower bound $h/n \leq \alpha$, our expression must decrease when changing $\delta^{(m)}$ to $\delta^{(m)}-\epsilon$: $\hat{r}^{(m)}_A(h/n, \delta^{(m)}-\epsilon) < 0$.
    
            However, this doesn't tell us how large this decrease is. It turns out that this decrease depends on the value of $\alpha$ (it behaves roughly like $-\Omega(\alpha)$). Recall that the full range over $\alpha$ is  $h/n \leq \alpha \leq 2(\delta^{(m)}-\eps)$. We can make a simple observation here: if $\delta^{(m-1)}<\alpha$, then our expression $\hat{r}^{(m-1)}_A(\alpha) + f_A(\alpha, \delta^{(m)}) < 0$, which we know is strictly negative, is independent of $n$ (as the only dependence on $n$ occurs when $\alpha$ is smaller). Hence, the strict negativity must also be independent of $n$, meaning it is negative constant, which is definitely smaller than $-\Omega(h/n)$. That then just leaves the range $h/n<\alpha<\delta^{(m-1)}$. Here, $\alpha$ clearly depends on $h$. We will show that the decrease caused by subtracting $\epsilon$ from $\delta^{(m)}$ is minimal when $\alpha$ is minimal, and in that case is equal to $-\Omega(h/n)$. To see this, note that the derivative of $f_A(\alpha,\beta)$ to $\beta$ is: 
            \begin{align*}
                \frac{\partial f_A(\alpha,\beta)}{\partial \beta} &= \log \left(\frac{\alpha+2 \beta-2}{2 (\alpha-1)}\right)-\log \left(\frac{\alpha-2 \beta}{2 (\alpha-1)}\right)-\log (1-\beta)+\log (\beta) \\
                &= \log \left(\frac{1-\beta-\frac{\alpha}{2}}{1-\alpha}\right)-\log \left(\frac{\beta-\frac{\alpha}{2}}{1- \alpha}\right) + \log\left(\frac{\beta}{1-\beta}\right)\\
                &= \log \left(\frac{1-\beta-\frac{\alpha}{2}}{\beta-\frac{\alpha}{2}}\right) + \log\left(\frac{\beta}{1-\beta}\right) \ .
            \end{align*}
            We want to show that the above derivative $\frac{\partial f_A(\alpha,\beta)}{\partial \beta}$ is increasing with $\alpha$, thus we need to show that the derivative of that expression, but now to $\alpha$, is always positive. The derivative of $\frac{\partial f_A(\alpha,\gamma)}{\partial \beta}$ to $\alpha$ is $\frac{1}{\log2 }\frac{2-4 \beta}{(\alpha-2 \beta) (\alpha+2 \beta-2)}$. We thus need to have 
            $$\frac{1}{\log2 }\frac{2-4 \beta}{(\alpha-2 \beta) (\alpha+2 \beta-2)} > 0 \Longleftarrow 2-4 \beta > 0 \iff 1/2 > \beta \ .$$
            Thus, if we assume $\beta \leq 1/2$ (which we always do, since $\beta \leq \delta^{(m)}-\epsilon \leq 1/2$), then we can safely assume that the decrease caused when going from $f_A(\alpha,\delta^{(m)})$ to $f_A(\alpha,\delta^{(m)}-\epsilon)$ is smallest when $\alpha$ is smallest. Since the smallest value of $\alpha$ is $h/n$, we should fix $\alpha=h/n$. What remains to be shown then is that the decrease caused when going from $f_A(h/n,\delta^{(m)})$ to $f_A(h/n,\delta^{(m)}-\epsilon)$ is $\Omega(h/n)$. For this, we need to look at the derivative of $f_A(h/n,\beta)$ to $\beta$, and argue that it is $\Omega(h/n)$ for all $\beta$. This means that for any choice of $\delta^{(m)}$, we know that decreasing $\delta^{(m)}$ by a constant $\epsilon$ causes a decrease of $\Omega(h/n)$ as desired. We already computed this derivative above for general $\alpha$, let us now fill in $\alpha=h/n$ and simplify.
            $$ \frac{\partial f_A(h/n,\beta)}{\partial \beta} = \log \left(\frac{1-\beta-\frac{h}{2n}}{\beta-\frac{h}{2n}}\right) + \log\left(\frac{\beta}{1-\beta}\right) 
            = \log \left(\frac{1-\frac{h}{2n(1-\beta)}}{1-\frac{h}{2n\beta}}\right) 
            = \log\left(\left(1-\frac{h}{2n(1-\beta)}\right) \sum_{i=0}^{\infty}\left(\frac{h}{2n\beta}\right)^i\right)  \ .
            $$
            In the last step we used the known equality $\frac{1}{1-\epsilon} = \sum_{i=0}^{\infty}\epsilon^i$. As the terms in this sum as positive, we can lower bound by dropping all but the first two terms, which gives us:
            $$\log\left(\left(1-\frac{h}{2n(1-\beta)}\right)\left(1+\frac{h}{2n\beta}\right)\right)
            = \log\left(1 + \frac{h}{n} \left(\frac{1}{2\beta}-\frac{1}{2(1-\beta)} - \frac{h}{n}\frac{1}{4\beta(1-\beta)}\right)\right) 
            = \log\left(1 + \frac{h}{n} \left(\frac{2-4\beta -h/n}{4\beta(1-\beta)}\right)\right) \ . $$
            We lower bound this expression using the inequality $\ln(1+x) \geq \frac{x}{1+x}$:
            $$\frac{1}{\ln 2}\frac{\frac{h}{n} \left(\frac{2-4\beta -h/n}{4\beta(1-\beta)})\right)}{1+\frac{h}{n} \left(\frac{2-4\beta -h/n}{4\beta(1-\beta)})\right)} 
            = \frac{h}{n} \frac{1}{\ln 2}\frac{2-4\beta -h/n}{4\beta(1-\beta)\left(1+\frac{h}{n}(2-4\beta -h/n)\right)} \geq \frac{h}{n} \frac{1}{8 \ln 2 \beta(1-\beta)} \geq C \cdot \frac{h}{n} \ ,
            $$
            where the first inequality follows from increasing $h/n\leq 1$ to 1 in the denominator, which cancels the factors $(2-4\beta -h/n)$. Note that we indeed think of $C\leq \frac{1}{8 \ln 2 \beta(1-\beta)}$ as a constant as it depends only on $\beta=\delta^{(m)}-\epsilon$, our target minimum distance, which does not depend on $n$. This establishes the theorem, as it guarantees a decrease of $\Omega(h/n)$ when we subtract $\epsilon$ from $\delta^{(m)}$. In principle, the constant $C$ hidden by $\Omega(h/n)$ could vary with $\beta$, so that subtracting $\epsilon$ from $\delta^{(m)}$ doesn't guarantee a decrease of at least $C \cdot \epsilon \cdot h /n$. This would occur when $C$ would increase with $\beta$, so that the decrease at $\beta=\delta^{(m)}$ is higher than at some $\delta^{(m)}-\epsilon<\beta<\delta^{(m)}$. However, we note that this is not in fact the case, as $\frac{1}{8 \ln 2 \beta(1-\beta)}$ is decreasing with $\beta$ up to $\beta = 1/2$. This means we can guarantee that $\hat{r}^{(m)}_A(h/n, \delta^{(m)}-\epsilon)) \leq -\frac{\epsilon}{8 \ln 2 \delta^{(m)}(1-\delta^{(m)})} h/n = -\Omega(h/n)$.

    
        \end{proof}

\section{Bounding $(*)$: vectors with weight near boundary of weight range before last round of encoding} \label{sec:bounding-edge}
    Our goal in this section is to bound $(*)$: the expected number of low weight codewords which entered the final round of the encoding with a weight very close to the endpoints of the weight range. We need to do this for both $R(AD)^m$ and $R(AD)^m$ codes. The section is structured as follows. First, just below, we give an overview of the proof where we bound $(*)$ for either code, making use of a number of lemmas that we prove later. Second, we prove a number of useful inequalities involving $p_A$ and $p_D$ which will be needed. Third, we prove \Cref{lem:low_weight_da} and \Cref{lem:middle_weight_da} which we use in the prove bounding $(*)$ for $R(DA)^m$ codes. Fourth, we prove \Cref{lem:low_weight_ad}, \Cref{lem:middle_weight_1_ad} and \Cref{lem:middle_weight_2_ad}, which are likewise used in the proof bounding $(*)$ for $R(AD)^m$ codes. 

    Recall that in this section, unlike in previous sections, weight by default refers to \emph{absolute} weight, i.e., the number of nonzero coordinates in the vector. For the fraction of nonzero entries, we will use \emph{relative} weight. 

    \begin{theorem}\label{thm:*_small}
        Let $\epsilon>0$. The expected number of vectors in an $R(DA)^m$ code that have weight between $\leq h$ or $\geq n-h$ after $m-1$ rounds and weight $\leq (\delta^{(m)}-\epsilon)n$ after $m \geq 2$ rounds is $\tilde O (n^{1/4-m/2})=1/\poly(n)$. Likewise, for $R(AD)^m$ codes with $m \geq 3$ we expected at most $\tilde O(n^{1/4-m/4})=1/\poly(n)$ such vectors. 
    \end{theorem}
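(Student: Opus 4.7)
The plan is to bound $(*)$ separately for the two codes, in each case splitting the $w_1$-sum into the low piece $1 \leq w_1 \leq h$ and the high piece $n-h \leq w_1 \leq n$, and handling each via a two-step recipe. Step one is to control the inner weight-transition sum $\sum_{w_2} p(w_1,w_2)\sum_{w_3 \leq d} p'(w_2,w_3)$ uniformly in $w_1$, where $(p,p')=(p_D,p_A)$ for $R(DA)^m$ and $(p_A,p_D)$ for $R(AD)^m$. Step two is to bound $\sum_{w_1 \leq h} A_{w_1}$ and $\sum_{w_1 \geq n-h} A_{w_1}$ of the $(m-1)$-round code by unrolling an induction on $m$, with base case $m=1$ corresponding to an $RA$ code whose low-weight codeword count can be counted directly from $p_A$ together with the $\binom{n/2}{w_1}$ choices at the repetition step.

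For $R(DA)^m$ the analysis is comparatively clean because $D$ acts last: $p_D(w_1,w_2)$ is supported on $w_2 \leq 2\min\{w_1,n-w_1\} \leq 2h$, so the intermediate weight $w_2$ is automatically tiny. One then exploits the explicit binomial formula for $p_A(w_2,w_3)$ at small $w_2$ and small $w_3$, where each term is polynomially small in $n$ (roughly $\binom{n-w_3}{\lfloor w_2/2\rfloor}/\binom{n}{w_2}$ has denominator much larger than numerator); summing the tails and combining with the recursive count of low-weight codewords in $R(DA)^{m-1}$ should yield the claimed $\tilde O(n^{1/4-m/2})$. The high piece $w_1 \geq n-h$ behaves identically by the reflection $w \mapsto n-w$, using that the spectral shape function of $R(DA)^m$ is symmetric about $1/2$ (a property inherited from the analogous statement for $RA^m$ in \cite{ravazzi_spectra_2009}).

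For $R(AD)^m$ the analysis is genuinely harder because $A$ acts last and $p_A(w_1,w_2)$ does \emph{not} confine $w_2$ to any small range when $w_1 \leq h$ --- the support is essentially all of $[\lceil w_1/2\rceil, n-\lfloor w_1/2\rfloor]$. I would therefore subdivide the inner sum on $w_2$ into three regimes: (a) small, $w_2 \leq h$; (b) large, $w_2 \geq n-h$; and (c) middle, $h \leq w_2 \leq n-h$. Regimes (a) and (b) can be handled essentially as in the $R(DA)^m$ case, because the tight constraint on $w_2$ plays the same role as the $D$-constraint did before. Regime (c) is the new case: here the final $w_3 \leq d$ must be produced from a middle-range $w_2$ by $D$, so one must show that $p_D(w_2,w_3)$ is exponentially suppressed when $w_2$ is bounded away from $\{0,n\}$ and $w_3$ is tiny. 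This follows from $p_D(w_2,w_3) \leq O(1)\cdot 2^{nf_D(w_2/n,w_3/n)}$ together with a direct check that $f_D(\beta,\gamma)$ is strictly negative (in fact bounded away from $0$ by a constant) when $\beta \in [h/n, 1-h/n]$ and $\gamma \leq d/n$, which should overcome the $n$ factor from summing over $w_2$ and the weight enumerator at small $w_1$, yielding the weaker $\tilde O(n^{1/4-m/4})$ bound.

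The main obstacle I foresee is precisely this middle-$w_2$ regime for $R(AD)^m$: the spectral-shape machinery of \Cref{sec:middle-weight} is designed to handle an input weight $\alpha$ bounded away from $0$ and $1$, which is exactly the assumption that fails here (we condition on $w_1$ being near an endpoint). Consequently we cannot invoke the restricted spectral shape function verbatim; instead, we have to track polynomial prefactors carefully and combine the exponential suppression from $p_D$ at small $w_3$ with the explicit combinatorial shape of $p_A$ near the endpoints of its input range. This mismatch in available tools is what forces both the weaker exponent and the stronger hypothesis $m \geq 3$: we need one extra round of mixing inside the $R(AD)^{m-1}$ prefix so that, by the time we reach the final $A$-then-$D$ pair, the weight enumerator we pay for in the recursion is already small enough to absorb the looser inner bound.
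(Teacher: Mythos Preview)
Your two-step recipe has the right flavor but misses the structural move that makes the paper's induction close. The paper does \emph{not} extract any savings from the actual last round with $w_3\le d$: it simply bounds $\sum_{w_2}\sum_{w_3\le d}p_D p_A\le 1$ and reduces $(*)$ to bounding $\sum_{w_1\text{ near bdy}}A_{w_1}(R(DA)^{m-1})$. The crucial step you omit is then to split this quantity according to the \emph{entire weight history}: either the weight was near the boundary before \emph{every} one of the $m-1$ rounds (the $A_{w,\le h}$ quantity of Lemmas~\ref{lem:low_weight_da} and~\ref{lem:low_weight_ad}), or there is some earliest round $i$ before which the weight sat in the middle and after which it fell to the boundary. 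The first case admits a clean induction where each round contributes a genuine $\tilde O(n^{-1/2})$ or $\tilde O(n^{-1/4})$ factor --- precisely because the recursive constraint is $w_3\le h$, not $w_3\le d$, so that $p_A'(w_2,w_3)\le(4h/n)^{w_2/2}$ is actually small. The second case is disposed of by \emph{reusing} the $(**)$ analysis of \Cref{sec:middle-weight}: a middle-weight vector dropping below $h$ after one $DA$ round is an instance of the restricted spectral-shape bound and is $\negl(n)$. Your ``unroll an induction on $m$'' for step two cannot work as stated, because the predecessor weight $w_0$ in the recursion ranges over all of $[1,n]$; without the history split you have no handle on the middle-range contribution.

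Two concrete technical claims in your proposal are also wrong. First, for $R(DA)^m$ the inner sum $\sum_{w_3\le d}p_A(w_2,w_3)$ with $w_2\le 2h$ is \emph{not} polynomially small: $d=(\delta^{(m)}-\eps)n$ is a constant fraction of $n$, so this sum is $\Theta(1)$ (your term-by-term heuristic is right, but there are $\Theta(n)$ terms). Second, for $R(AD)^m$ your regime~(c) claim that $f_D(\beta,\gamma)$ is uniformly bounded away from $0$ on $\beta\in[h/n,1-h/n]$, $\gamma\le d/n$ fails near the endpoints: when $w_2\in[h,d/2]$ the support constraint $w_3\le 2w_2$ already forces $w_3\le d$, so $\sum_{w_3\le d}p_D(w_2,w_3)=1$ with no suppression at all. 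The paper sidesteps both issues by never asking the final $w_3\le d$ constraint to do any work; all the $1/\poly(n)$ comes from the always-near-boundary induction, and the $m\ge 3$ hypothesis for $R(AD)^m$ arises both because that induction needs base case $m-1=2$ (Lemma~\ref{lem:low_weight_ad}) and because the $i=1$ instance of the ``came-from-middle'' case requires a bespoke two-round argument (Lemma~\ref{lem:middle_weight_2_ad}).
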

    \begin{proof}
        Recall that the expression we need to bound is what we called $(*)$ in \Cref{thm:main}, where we wrote $d=(\delta^{(m)}-\epsilon)n$. We express it below for $R(DA)^m$ codes, and we simplify it by counting \textit{all} codewords with weight near the boundaries of the weight range after $m-1$ rounds (as opposed to only codewords which also have low weight after the entire encoding):
        $$ (*) = \sum_{\substack{1\leq w_1 \leq h \lor \\ n-h\leq w_1 \leq n}}\sum_{w_2=1}^{n}\sum_{w_3=1}^{d}A_{w_1}(R(DA)^{m-1})p_D(w_1,w_2) p_A(w_2,w_3) \leq \sum_{\substack{1\leq w_1 \leq h \lor \\ n-h\leq w_1 \leq n}} A_{w_1}(R(DA)^{m-1}) \ .$$ 

        To bound the above, we will split it into two cases based on the weight of a vector in earlier rounds. The point is that the vectors with weight $\leq h$ or $\geq n-h$ after $m-1$ rounds can have one of two sources. Either, they had weight $\leq h$ or $\geq n-h$ before all preceding $DA$ or $AD$ rounds. Or, there was some earlier $i$'th $AD$ or $DA$ round before which the vector had weight in the middle of the weight range, and after which it dropped back down to weight $\leq h$ or $\geq n-h$, where it stayed until after round $m-1$. To describe this more formally, let $A_{w,\leq h}(R(DA)^m)$ denote the expected number of weight $w$ codewords in an $R(DA)^m$ code, which before each $AD$ round had weight $\leq h$ or $\geq n-h$. Likewise, let $A_{w,\geq h}(R(DA)^m)$ denote the expected number of weight $w$ codewords in an $R(DA)^m$ code, which had weight between $h$ and $n-h$ before round $m$, but had weight below $h$ or above $n-h$ after round $m$. We can then claim we have the following upper bound:
        $$ (*) \leq \sum_{\substack{1\leq w \leq h \lor \\ n-h\leq w \leq n}} A_{w,\leq h}(R(DA)^{m-1}) + \sum_{i=1}^{m-1}\sum_{\substack{1\leq w \leq h \lor \\ n-h\leq w \leq n}} A_{w, \geq h}(R(DA)^{i}) \ .$$

        To bound the above, we will prove in \Cref{lem:low_weight_da} that we can bound the first term as follows assuming $m-1\geq 1$, so that we can instantiate this theorem with $m \geq 2$:
        $$\sum_{\substack{1\leq w \leq h \lor \\ n-h\leq w \leq n}} A_{w,\leq h}(R(DA)^{m-1})  \leq \tilde O\left(n^{1/4-(m-1)/2}\right) = 1/\poly(n) \ .$$
        In addition, we will see in \Cref{lem:middle_weight_da} that we can bound the second case as follows, and that this also works once $m-1 \geq 1$, so that the theorem indeed holds for $R(DA)^m$ codes once $m \geq 2$:
        $$ \sum_{i=1}^{m-1}\sum_{\substack{1\leq w \leq h \lor \\ n-h\leq w \leq n}} A_{w, \geq h, i}(R(DA)^{m-1}) \leq \sum_{i=1}^{m-1} \negl(n) = \negl(n) \ .$$
        We then have the final bound $(*) \leq O(n^{1/4-m/2}) = 1/\poly(n)$ for $R(DA)^m$ codes, and remark again that this is true for $m \geq 2$. Let us now turn towards $R(AD)^m$ codes, where we will have to tackle things slightly differently. The first case is mostly the same: we now prove \Cref{lem:low_weight_ad} which lets us bound 
        $$\sum_{\substack{1\leq w \leq h \lor \\ n-h\leq w \leq n}} A_{w,\leq h}(R(AD)^{m-1}) \leq \tilde O(n^{1/4-m/4}) = 1/\poly(n) \ ,$$
        where the upper bound is slightly worse, but more importantly, the result only holds for $m-1 \geq 2$, so that this limits this theorem to hold only for $R(AD)^m$ codes with $m \geq 3$. The second case will be more different. Through \Cref{lem:middle_weight_1_ad}, we are still able to bound $\sum_{\substack{1\leq w \leq h \lor \\ n-h\leq w \leq n}} A_{w, \geq h}(R(DA)^{i}) = \negl(n)$ for $i \geq 2$. But, crucially, this ends up being false for $i=1$. We therefore prove something slightly different in \Cref{lem:middle_weight_2_ad}. We don't argue that the expected number of middle weight vectors before round 1 that go to the boundaries of the weight range after round 1 to be $\negl(n)$, as this fails to be true. Instead, we track these vectors also to round 2, requiring these middle weight vectors before round 1 to be near the boundaries of the weight range after round 1 and then also after round 2. This quantity we can prove as being $\negl(n)$. This way, we still guarantee that the middle weight vectors before round 1 aren't problematic, as after round 2 only a negligible number of them are near the boundaries. The only downside is that this again requires us to take $m \geq 3$, but we recall that the previous case already forced us to do so for $R(AD)^m$ anyway. With this, we conclude that for $R(AD)^m$ codes we have the bound $(*) \leq O(n^{1/4-m/4}) = 1/\poly(n)$ which is true for $m \geq 3$.
    \end{proof}

    As for the useful inequalities, to prove them we first need another definition. Essentially, we will redefine $p_A(a,b)$ and $p_D(a,b)$ but with the floors, ceilings and $-1$'s removed. This means that some of the values in the binomial coefficients can be non-integers, specifically they can lie halfway in-between two integers. We take the standard extension of the binomial coefficient to non-integer values in terms of the $\Gamma$-function, which itself is a generalization of the factorial function to non-integers. As said, we only care about values halfway in between two integers, and therefore recall only the value of $\Gamma$ for such cases: $(n-1/2)! := \Gamma(n+1/2) = \frac{(2n)! \sqrt{\pi}}{4^nn!}$.
    \begin{definition}
        $$p_A'(a,b) = \frac{\binom{n-b}{a/2}\binom{b}{a/2}}{\binom{n}{a}} \ ,\ \  p_D'(a,b) = \frac{\binom{n-a}{b/2}\binom{a}{b/2}}{\binom{n}{a}} \ .$$
    \end{definition}            
    The point of this definition is to obtain the symmetries $p_A'(a,b)=p_A'(a,n-b)$ and $p_D'(a,b)=p_D'(n-a,b)$. The analogues of these symmetries for the asymptotic counterparts of these probabilities, $f_A$ and $f_D$, proved very useful in the analysis of $(**)$. The same will be true of these. However, it is not obvious how the size of $p_A'$ and $p_A$ are related. It turns out they differ by only a small constant factor, allowing us to always swap $p_A$ with $p_A'$ (and likewise for $D$). This is proven in the following lemma, along with further useful claims about $p_A'$ and $p_D'$. Note also that $p_A'$ has a slightly larger domain that $p_A$, e.g. admitting $p_A'(a,a/2)$ while otherwise we would need to have the slightly larger second argument $p_A(a,\lceil a/2 \rceil)$. The proof of this lemma is provided in \Cref{sec:deferred-proofs}.

    \begin{restatable}{lemma}{adclaims}\label{lem:A_D_claims}
        \begin{enumerate}
            \item $p_A(a,b)$ is defined for $\lfloor a/2 \rfloor \leq n-b$ and $\lceil a/2 \rceil \leq b$ or equivalently, $\lceil a/2 \rceil \leq b \leq n-\lfloor a/2 \rfloor$. $p_A'$ has the same domain, but with the floors and ceilings removed.
            \item $p_D(a,b)$ is defined for $\lfloor b/2 \rfloor \leq n-a$ and $\lceil b/2 \rceil \leq a$ or equivalently, $\lceil b/2 \rceil \leq a \leq n-\lfloor b/2 \rfloor$. $p_D'$ has the same domain, but with the floors and ceilings removed.
            \item $\left(\frac{n}{k}\right)^k\leq\binom{n}{k}\leq \left(\frac{en}{k}\right)^k$.
            \item $p_D'(a,b) = p_D'(n-a,b)$.
            \item $p_A'(a,b) = p_A'(a,n-b)$.
            \item $p_D(a,b) \leq 2.17 p_D'(a,b)$.
            \item $p_A(a,b) \leq 2.17 p_A'(a,b)$.
            \item $p_A'(a,b)$ is increasing for $1 \leq b \leq n/2$ and decreasing afterwards.
            \item $p_A'(a,b)$ is maximized over $a$ by some $a \leq n/2$.
            \item $p_A'(a,b) \leq \left(\frac{4b}n\right)^{a/2}$.
            \item $p_D(a,b) \leq \left(\frac{2e\sqrt{an}}{b}\right)^{b}/\binom{n}{a}$.
        \end{enumerate}
    \end{restatable}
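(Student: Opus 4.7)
The lemma bundles eleven disparate claims about $p_A, p_D, p_A', p_D'$ and standard binomial coefficients, and the plan is to dispatch them in groups according to the technique required. Claims 1, 2 (domains) and claim 3 (the standard bounds $(n/k)^k \leq \binom{n}{k} \leq (en/k)^k$) are essentially definitional: the domains follow by reading off when each binomial argument is nonnegative with upper index at least lower index; the lower half of claim 3 follows from $\binom{n}{k} = \prod_{i=0}^{k-1}(n-i)/(k-i)$ together with $(n-i)/(k-i) \geq n/k$, and the upper half from $\binom{n}{k} \leq n^k/k!$ combined with $k! \geq (k/e)^k$. Claims 4 and 5 are one-line verifications: the whole point of defining $p_A'$ and $p_D'$ with symmetric half-integer arguments is that the substitution $b \leftrightarrow n-b$ (resp.\ $a \leftrightarrow n-a$) simply swaps the two binomial factors in the numerator.

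For the monotonicity claims 8 and 9 I would compute discrete ratios explicitly. For claim 8, $p_A'(a,b+1)/p_A'(a,b)$ telescopes to $(n-b-a/2)(b+1)/[(n-b)(b+1-a/2)]$, and cross-multiplying shows this exceeds $1$ iff $n-2b-1 \geq 0$. For claim 9, the analogous ratio $p_A'(a+2,b)/p_A'(a,b) = (n-b-a/2)(b-a/2)(a+2)(a+1)/[(a/2+1)^2(n-a)(n-a-1)]$, after the AM-GM bound $(n-b-a/2)(b-a/2) \leq ((n-a)/2)^2$, reduces to $(n-a)(a+1)/[(a+2)(n-a-1)]$, which is at most $1$ whenever $a \leq (n-2)/2$. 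Hence within each parity class $p_A'(\cdot,b)$ is non-increasing on $[0,n/2]$, with maximum at $a \in \{0,1\}$. Since $p_A'(0,b) = 1$ is moreover a global upper bound (for even $a$ by Vandermonde, $\binom{n-b}{a/2}\binom{b}{a/2} \leq \sum_k \binom{n-b}{k}\binom{b}{a-k} = \binom{n}{a}$, and for odd $a$ by a corresponding $\Gamma$-function computation), the overall maximum is attained at $a = 0 \leq n/2$.

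For the quantitative bounds in claims 10 and 11 the approach is to split cleverly. For claim 10 I would write
\[
  p_A'(a,b) = \frac{\binom{b}{a/2}}{\binom{n}{a/2}} \cdot \frac{\binom{n-b}{a/2}\binom{n}{a/2}}{\binom{n}{a}},
\]
bound the first factor by $(b/n)^{a/2}$ (for even $a$ via $\prod_{i=0}^{a/2-1}(b-i)/(n-i) \leq (b/n)^{a/2}$, and for odd $a$ by an analogous $\Gamma$-ratio estimate), and use the double-counting identity $\binom{n}{a/2}\binom{n-a/2}{a/2} = \binom{n}{a}\binom{a}{a/2}$ to rewrite the second factor as $\binom{a}{a/2}\binom{n-b}{a/2}/\binom{n-a/2}{a/2} \leq 2^a$ (the ratio is $\leq 1$ since $b \geq a/2$ gives $n-b \leq n-a/2$). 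This yields $2^a (b/n)^{a/2} = (4b/n)^{a/2}$. For claim 11 I would apply Pascal's inequality $\binom{a-1}{\lceil b/2\rceil-1} \leq \binom{a}{\lceil b/2\rceil}$ and then claim 3 to each binomial factor; using convexity of $x \log x$ to lower-bound $\lfloor b/2\rfloor^{\lfloor b/2\rfloor}\lceil b/2\rceil^{\lceil b/2\rceil} \geq (b/2)^b$ and $a \leq n$ to combine $n^{\lfloor b/2\rfloor}a^{\lceil b/2\rceil} \leq (an)^{b/2}$, the bound collapses to $(2e\sqrt{an}/b)^b/\binom{n}{a}$.

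The main obstacle will be claims 6 and 7, the comparisons $p_A(a,b) \leq 2.17\, p_A'(a,b)$ and its $D$-analogue. For even $a$ the ratio $p_A/p_A'$ simplifies directly to $(a/2)/b \leq 1$, so there is nothing to do. For odd $a$, however, $p_A'$ contains binomial coefficients with half-integer lower arguments, defined via the $\Gamma$-function identity $\Gamma(k+1/2) = \sqrt{\pi}(2k)!/(4^k k!)$. The plan is to re-express each such half-integer binomial in terms of integer binomials through ratios of the form $\Gamma(k+1/2)/\Gamma(k+1)$, and then invoke Stirling's approximation (equivalently the Wallis-type inequality $\binom{2k}{k} \leq 4^k/\sqrt{\pi k}$) to extract an explicit constant. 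Noting that $2.17 \approx e\sqrt{2/\pi}$, the bound follows after carefully collecting the $\sqrt{2\pi}$ and $e$ factors that emerge from Stirling — pinning down this particular numerical constant is the most delicate step.
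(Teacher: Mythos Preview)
Your overall plan is sound and aligns with the paper on most items (1--5, 7, 8, 11). A few places diverge in interesting ways.

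For claim 9 the paper takes a different route: it proves the pointwise symmetry $p_A'(a,b) \geq p_A'(n-a,b)$ for $a \leq n/2$ by writing out the ratio $p_A'(a,b)/p_A'(n-a,b)$, evaluating it at $b=n/2$ (where it equals $1$), and then showing the ratio only increases as $b$ moves away from $n/2$. This yields the stronger pointwise comparison, which is in fact what the paper later invokes (e.g., in the proof of \Cref{lem:low_weight_ad}). Your monotonicity-plus-Vandermonde argument, if completed, gives the literal lemma statement (maximizer at $a=0$) but not this pointwise inequality; so be aware your version would not plug directly into those later uses.

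For claim 10 the paper simply cites an external bound of Pfister on $p_A$ and then strips factors; your self-contained decomposition via the identity $\binom{n}{a/2}\binom{n-a/2}{a/2}=\binom{n}{a}\binom{a}{a/2}$ is nicer and avoids the citation. For claim 11 your treatment is essentially the paper's, though cleaner (the paper's write-up actually bounds $p_D'$ rather than $p_D$ and drops an extraneous $1/(\pi b)$ factor along the way).

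The one genuine soft spot in your proposal is the repeated appeal to ``a corresponding $\Gamma$-function computation'' for the odd-$a$ cases in claims 6/7, 9, and 10. The paper handles claims 6/7 concretely via an auxiliary lemma bounding $(n\pm 1/2)!/n!$ by $(4/e)\sqrt{n}$ and $1/\sqrt{n}$ respectively, which after collecting factors produces the constant $16/e^2 \approx 2.165$ (your guess $e\sqrt{2/\pi} \approx 2.169$ is numerically coincident but arises from a slightly different Stirling cut). For claim 9, your argument that $p_A'(a,b)\leq 1$ globally rests for odd $a$ on a half-integer Vandermonde-type bound that is not standard and would need an explicit $\Gamma$-ratio estimate; similarly for the first-factor bound in claim 10. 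None of these are fatal, but each needs to be written out rather than asserted.
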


    \subsection{$R(DA)^m$ codes}
        We first bound what we called case one in the introduction to this section: the expected number of $R(DA)^m$ codewords with weight close to the boundaries of the weight range between all the $DA$ rounds. 
        
        \begin{lemma}\label{lem:low_weight_da}
            Let $m\geq 1$. In an $R(DA)^m$ code of rate $1/2$, the expected number of vectors whose weight is in the range $[0, h] \cup [n-h, n]$ before and after each $DA$ operation is at most $\tilde{O}(n^{1/4-m/2}) = 1/\poly(n)$.
        \end{lemma}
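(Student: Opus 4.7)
The plan is to expand the expected count as a sum over weight trajectories $(w_0, w_1, \ldots, w_m)$ with each $w_i \in [0,h] \cup [n-h,n]$:
\[
E = \sum \binom{n/2}{w_0/2} \prod_{i=1}^m p_{DA}(w_{i-1}, w_i),
\]
where $p_{DA}(a,c) := \sum_b p_D(a,b)\, p_A(b,c)$ is the one-round $DA$ transition probability and $w_0 = 2\mu$ equals twice the message weight. The first step is to symmetrize: using $p_D \leq 2.17\, p_D'$ and $p_A \leq 2.17\, p_A'$ together with $p_{DA}'(a,c) = p_{DA}'(n-a,c) = p_{DA}'(a,n-c)$ (\Cref{lem:A_D_claims}.4--7), and noting $\binom{n/2}{w_0/2} = \binom{n/2}{(n-w_0)/2}$, I reduce at the cost of a multiplicative factor $O(C^m)$ to the case where every $w_i \in [1,h]$.

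Next I bound a single transition $p_{DA}'(a,c)$ for $a,c \in [1,h]$. Combining parts 10 and 11 of \Cref{lem:A_D_claims} yields
\[
p_D'(a,b)\, p_A'(b,c) \leq O(1) \cdot \frac{(4e\sqrt{ac}/b)^b}{\binom{n}{a}},
\]
and the summand $b \mapsto (4e\sqrt{ac}/b)^b$ is monotonically increasing on the valid range $b \in [1, 2\min(a,c)]$ because its unique maximum (at $b = 4\sqrt{ac}$) lies beyond $2\min(a,c)$ whenever $a,c \geq 1$. Hence the sum is dominated by its right endpoint, and a short case split on $a \leq c$ versus $c \leq a$ combined with $\binom{n}{a} \geq (n/a)^a$ (\Cref{lem:A_D_claims}.3) gives $p_{DA}'(a,c) \leq \tilde{O}(1) \cdot (h/n)^{\min(a,c)}$.

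To sum over chains, define $Q(a) := \sum_{c=1}^h p_{DA}'(a,c)$. The per-transition bound yields $Q(a) \leq \tilde{O}(1/n^a)$, with the maximum $Q(1) = \tilde{O}(1/n)$ attained at $a=1$. By a straightforward induction on chain length,
\[
\sum_{w_1,\ldots,w_m \leq h} \prod_{i=1}^m p_{DA}'(w_{i-1}, w_i) \leq Q(w_0) \cdot Q(1)^{m-1} \leq Q(w_0) \cdot \tilde{O}(1/n^{m-1}).
\]
Combining with the binomial factor and summing over $\mu \geq 1$, the outer sum $\sum_\mu \binom{n/2}{\mu}\, Q(2\mu)$ is dominated by $\mu=1$ (each additional unit of $\mu$ introduces an extra factor of $n^{-\Omega(1)}$), so $\sum_\mu \binom{n/2}{\mu}\, Q(2\mu) \leq \tilde{O}(1/n)$. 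Putting it all together gives $E \leq \tilde{O}(1/n^m)$, which comfortably implies the claimed $\tilde{O}(n^{1/4 - m/2}) = 1/\poly(n)$ bound.

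The main technical obstacle is the case analysis for $p_{DA}'(a,c)$, which is genuinely asymmetric in $a$ and $c$ because the factor $1/\binom{n}{a}$ depends only on $a$; one must check that the qualitative bound $(h/n)^{\min(a,c)}$ survives in all regimes, and that the maximum of $Q$ over $a \in [1,h]$ is attained at $a=1$. A secondary subtlety is bookkeeping the constant loss $O(C^m)$ from the symmetrization step; verifying that this loss does not affect the asymptotic conclusion is routine given \Cref{lem:A_D_claims}, but tedious.
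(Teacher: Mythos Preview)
Your argument is correct and follows the same broad strategy as the paper: induction over rounds, symmetrization via the primed probabilities (\Cref{lem:A_D_claims}.4--7), and the binomial estimates of \Cref{lem:A_D_claims}.3,10,11. The packaging, however, is genuinely tighter than the paper's. The paper proves a base case $m=1$ yielding $\tilde O(n^{-1/4})$ and then, in the inductive step, simply discards the factor $p_D'$ (bounding it by $1$) and uses only the $p_A'$ estimate, which is why each additional round contributes only $\tilde O(n^{-1/2})$. By contrast, you keep both factors, combine parts 10 and 11 into the clean product bound $(4e\sqrt{ac}/b)^b/\binom{n}{a}$, and observe that the endpoint $b=2\min(a,c)$ dominates; this is exactly the step the paper forgoes, and it buys you a full factor of $\tilde O(n^{-1})$ per round. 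Your $Q(a)$ abstraction then turns the chain sum into $Q(w_0)\cdot(\max_a Q(a))^{m-1}$, giving $\tilde O(n^{-m})$, which is strictly stronger than the paper's $\tilde O(n^{1/4-m/2})$ and of course implies it.

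Two small points of care. First, the intermediate claim ``$Q(a)\le\tilde O(1/n^a)$'' is not literally true under the paper's convention that $\tilde O$ hides only polylogarithmic factors: the correct bound is $Q(a)\le O(h^2)(4e^2 h/n)^a$, and $(4e^2 h)^a$ for $a$ up to $h=\log^2 n$ is quasi-polynomial, not polylog. This is harmless because you only use $\max_{a\ge 1}Q(a)\le O(h^3/n)=\tilde O(1/n)$, which \emph{is} correct; just state it that way. Second, part~11 of \Cref{lem:A_D_claims} is stated for $p_D$ but its proof in the appendix actually establishes the inequality for $p_D'$, so your use of it is legitimate.
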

        \begin{proof}                
            We will prove this by induction on $m$. For the base case, $m=1$. Our goal is then to bound the expected number of vectors that have weight $2w_1 \leq h$ or $n-h \leq 2w_1$ going into the first $DA$ round, which stay in that weight range $w_3 \leq h$ or $n-h \leq w_3$ after applying $DA$:
            $$ \sum_{\substack{w_1 \leq h/2,\\(n-h)/2 \leq w_1}} \sum_{w_2=1}^{2 h}\sum_{\substack{w_3 \leq h,\\n-h \leq w_3}}\binom{n/2}{w_1}p_D(2w_1,w_2)p_A(w_2,w_3) \ ,$$
            where we note that the upper bound on the range of $w_2$ is due to \Cref{lem:A_D_claims}.1. Let us first apply the inequality $p_A(a,b) \leq O(1) p_A'(a,b)$ proven as \Cref{lem:A_D_claims}.7 such that $p_A'(a,b)=p_A'(a,n-b)$, and likewise for $p_D$. The result of this will be that any bound on the expected number of vectors that end up with weight $w_3 \leq h$ is also a valid bound on the expected number of vectors that end up with weight $n-h \leq w_3$. Likewise, any bound on the expected number of vectors that start with $2w_1 \leq h$ will also apply to the vectors that have weight $n-h \leq w_1$. Combining this, it will suffice to bound the expected number of vectors that have weight $2w_1 \leq h$ going into the $DA$ operation, that end up with weight $w_3 \leq h$ after the $DA$ operation: this will suffice as an upper bound for the entire base case, up to a constant factor, so we get the upper bound
            $$ \leq O(1)\sum_{w_1=1}^{h/2}\sum_{w_2=1}^{2 h}\sum_{w_3=1}^{h}\binom{n/2}{w_1}p_D'(2w_1,w_2)p_A'(w_2,w_3) 
            \leq O(1)h \sum_{w_1=1}^{h/2}\sum_{w_2=1}^{2 h}\binom{n/2}{w_1}p_D'(2w_1,w_2) \left(\frac{4h}{n}\right)^{w_2/2} \ .$$
            The inequality here follows from applying \Cref{lem:A_D_claims}.10 to $p_A'$ and noting that \Cref{lem:A_D_claims}.8 tells us that $p_A'(a,b)$ grows with $b$ until $b = n/2$, so we can maximize by setting $w_3=h$. Next, let us also apply \Cref{lem:A_D_claims}.10 to bound $p_D'$ and use \Cref{lem:A_D_claims}.3 to bound the leftover binomial coefficients:
            $$ 
            \leq O(h) \sum_{w_1=1}^{h/2}\sum_{w_2=1}^{2 h}\left(\frac{en}{2w_1}\right)^{w_1}\left(\frac{2w_1}{n}\right)^{2w_1} \left(\frac{2e\sqrt{2w_1n}}{w_2}\right)^{w_2} \left(\frac{4h}{n}\right)^{w_2/2}
            = O(h) \sum_{w_1=1}^{h/2}\sum_{w_2=1}^{2 h}\left(\frac{2ew_1}{n}\right)^{w_1} \left(\frac{4e\sqrt{2w_1h}}{w_2}\right)^{w_2} \ .$$
            We bound $(2w_1)^{w_2} \leq h^{w_2}$. Then the only occurrence of $w_1$ is the first factor, which is clearly decreasing with $w_1$, as $2ew_1/n<1$ (for large enough $n$). We can then set $w_1$ to its minimal value of $w_1=w_2/4$ to maximize the expression:
            $$ \leq O(1)h^2 \sum_{w_2=1}^{2 h}\left(\frac{ew_2}{n}\right)^{w_2/4} \left(\frac{4eh}{w_2}\right)^{w_2} 
            = O(1)h^2 \sum_{w_2=1}^{2 h} \left(\frac{4e^{5/4}h}{w_2^{3/4}n^{1/4}}\right)^{w_2} 
            \leq O(1)\frac{h^4}{n^{1/4}} = \tilde O(n^{-1/4}) \ ,$$
            where we use the fact that the final expression decreases with $w_2$ (as the $1/n^{1/4}$ guarantees that the base of the exponent is $<1$ for large enough $n$), so that we can maximize by fixing $w_2=1$.

            For the induction step, let us write $A_{w, \leq h}(R(DA)^m)$ to denote the expected number of vectors of weight $w$ in an $R(DA)^m$ code which had weight in range $[0, h] \cup [n-h, n]$ before and after each $DA$ operation. Recalling the symmetry argument from before, we can then quantify the expression we need to bound as follows, where again let ourselves say that $w_2 \leq 2h$:
            \begin{align*}
                \sum_{\substack{w_1 \leq h\\n-h \leq w_1}} A_{w_1, \leq h}(R(DA)^m) 
                &= \sum_{\substack{w_1 \leq h\\n-h \leq w_1}} A_{w_1, \leq h}(R(DA)^{m-1}) \sum_{w_2=1}^{n}\sum_{\substack{w_3 \leq h\\n-h \leq w_3}} p_D(w_1,w_2)p_A(w_2,w_3) \\
                &\leq O(1) \sum_{\substack{w_1 \leq h\\n-h \leq w_1}} A_{w_1, \leq h}(R(DA)^{m-1}) \sum_{w_2=1}^{2 h}\sum_{w_3=1}^{h} p_D'(w_1,w_2)p_A'(w_2,w_3) \ .
            \end{align*}
            Our goal is to bound this entire expression, where we note that the induction hypothesis lets us bound the factor $\sum_{w_1 \leq h, n-h \leq w_1} A_{w_1, \leq h}(R(DA)^{m-1})$. Of course, we cannot apply the induction hypothesis to the expression as written, as the sum over $w$ also contains the additional factors written after. To deal with this, we will bound these other factors in the sum over $w$ as $\tilde{O}(1/\sqrt{n})$, which we achieve by simply throwing away $p_D'(a,b)\le 1$ and bounding $p_A'(a,b)$ in the same way as we just did for the base case:
            \begin{align*}
                O(1) \sum_{w_2=1}^{2h}\sum_{w_3=1}^{h} &p_D'(w,w_2)p_A'(w_2,w_3) \leq O(1) \sum_{w_2=1}^{2h} \sum_{w_3=1}^{h}\left(\frac{4w_3}{n}\right)^{w_2/2}\\ 
                &\leq O(1)h \sum_{w_2=1}^{2h} \left(\frac{4h}{n}\right)^{w_2/2}  \leq O(1) \frac{h^{2.5}}{n^{1/2}} = \tilde O(n^{-1/2})\ . 
            \end{align*}
            Thus, the base case gives us $\tilde{O}(n^{1/4})$ while the induction step tells us that each extra round cuts off a factor $\tilde{O}\left(n^{-1/2}\right)$, so the final bound becomes $\tilde O(n^{-1/8-(m-1)/2}) = \tilde O(n^{1/4-m/2})$, as desired. 
        \end{proof}
        
        We next bound what we called case two in the introduction to this section: the expected number of $R(DA)^m$ codewords with weight in the middle of the weight range after $m-1$ rounds, to then drop the weight to close to the boundary of the weight range after the final round.
        
        \begin{lemma}\label{lem:middle_weight_da}
            Let $m\geq 1$. In a $R(DA)^m$ code of rate $1/2$, the expected number of vectors that have weight between $h$ and $n-h$ after $m-1$ rounds and weight $\leq h$ or $\geq n-h$ after $m$ rounds is $n^{-\Omega(h)}$.
        \end{lemma}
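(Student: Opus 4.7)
\emph{Paragraph 1 — Setup and reduction.} Using the symmetry $p_A'(w_2, w_3) = p_A'(w_2, n - w_3)$ (\Cref{lem:A_D_claims}.5) together with $p_A \leq 2.17\, p_A'$ (\Cref{lem:A_D_claims}.7), I restrict the sum to $w_3 \leq h$ at the cost of a constant factor. Since $p_A(w_2, w_3) > 0$ forces $w_2 \leq 2 w_3 \leq 2h$ (\Cref{lem:A_D_claims}.1), the effective range is $w_1 \in [h, n-h]$, $w_2 \in [1, 2h]$, $w_3 \in [1, h]$. The plan is then to combine the standard bounds on $A_{w_1}$, $p_D$, and $p_A$, isolate a function of $w_1$ alone, and show that it is maximized at the boundaries $w_1 \in \{h, n-h\}$, where it evaluates to $n^{-\Omega(h)}$.

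\emph{Paragraph 2 — Bounds on the summand.} I use three bounds. First, $A_{w_1}(R(DA)^{m-1}) \leq \poly(n)\,\sqrt{\binom{n}{w_1}}$. This follows from the spectral-shape inequality $A_{w_1}(R(DA)^{m-1}) \leq \poly(n)\cdot 2^{n\,\hat r^{(m-1)}_{DA}(w_1/n)}$ (proved analogously to Lemma~5 of \cite{ravazzi_spectra_2009}), combined with \Cref{thm:rda_spectral_shape} at $\tau = 0$ giving $\hat r^{(m-1)}_{DA}(\gamma) \leq \hat r^{(m-1)}_A(\gamma)$, and the monotonicity of $\hat r^{(m)}_A$ in $m$ \cite[Prop.~4]{ravazzi_spectra_2009}, so that $\hat r^{(m-1)}_A(\gamma) \leq \hat r^{(1)}_A(\gamma) = h(\gamma)/2$ (with $h(\cdot)$ the binary entropy). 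Second, $p_D(w_1, w_2) \leq 2.17 \cdot \binom{n-w_1}{w_2/2}\binom{w_1}{w_2/2}/\binom{n}{w_1}$ by \Cref{lem:A_D_claims}.6. Third, $p_A(w_2, w_3) \leq 2.17 \cdot (4 w_3/n)^{w_2/2}$ by \Cref{lem:A_D_claims}.7, .10. Applying the binomial bound \Cref{lem:A_D_claims}.3 to the two binomial factors in the $p_D$ bound, the summand satisfies
\[
A_{w_1}\, p_D(w_1, w_2)\, p_A(w_2, w_3) \ \leq\ \poly(n) \cdot \frac{1}{\sqrt{\binom{n}{w_1}}}\,\left(\frac{4 e^2 w_1(n - w_1)}{w_2^2}\right)^{w_2/2}\left(\frac{4 w_3}{n}\right)^{w_2/2}.
\]

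\emph{Paragraph 3 — Reducing to a function of $w_1$.} Summing over $w_3 \leq h$ (which is maximized at $w_3 = h$) introduces a factor of $h \cdot (4h/n)^{w_2/2}$. Over $w_2 \in [1, 2h]$, the function $(c/w_2^2)^{w_2/2}$ with $c := 16 e^2 h w_1(n-w_1)/n$ attains its unconstrained maximum at $w_2^\star = \sqrt{c}/e = 4\sqrt{h\,w_1(n-w_1)/n}$; since $w_1 \geq h$ gives $w_1(n-w_1)/n \geq h(1-o(1))$, we have $w_2^\star \geq 4h(1-o(1)) > 2h$, so the maximum on the restricted range is at $w_2 = 2h$. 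This yields
\[
\sum_{w_2 \leq 2h}\sum_{w_3 \leq h} A_{w_1}\, p_D\, p_A \ \leq\ \poly(n) \cdot \frac{1}{\sqrt{\binom{n}{w_1}}}\,\left(\frac{4 e^2 w_1(n-w_1)}{h n}\right)^h.
\]

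\emph{Paragraph 4 — Optimizing over $w_1$ and concluding.} Using Stirling to get $\sqrt{\binom{n}{w_1}} \geq 2^{(n/2) h(w_1/n)}/\poly(n)$, the bound above is $\poly(n) \cdot 2^{F(w_1)}$ with
\[
F(w_1) \ :=\ -\tfrac{n}{2}\, h(w_1/n) + h \log\!\left(\tfrac{4 e^2 w_1(n - w_1)}{h n}\right).
\]
Both $h(w_1/n)$ and $w_1(n-w_1)$ are symmetric under $w_1 \mapsto n-w_1$, so $F$ is symmetric around $n/2$. A derivative computation shows that on $[h, n/2]$ the entropy-derivative contribution $-\tfrac{1}{2}\log_2((n-w_1)/w_1)$ dominates the $O(h/w_1)$ contribution from the growth of $\log(w_1(n-w_1))$, so $F$ is decreasing on $[h, n/2]$ and the maximum over $w_1 \in [h, n-h]$ is attained at the endpoints. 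Evaluating there, $F(h) = F(n-h) \approx -\tfrac{h}{2}\log(n/h) + O(h) = -\Omega(h\log(n/h))$. For $h = \log^2 n$ this gives $2^{F(h)} \leq (n/h)^{-\Omega(h)} = n^{-\Omega(h)}$, and summing over the $\poly(n)$ triples $(w_1, w_2, w_3)$ preserves the bound. The main obstacle is the derivative analysis establishing that the entropy term dominates throughout $[h, n/2]$ -- one must show $\tfrac{1}{2}\log_2((n-w_1)/w_1) \geq h\,|n-2w_1|/(w_1(n-w_1)\ln 2)$, which ultimately follows from $h \ll w_1$ combined with the logarithmic growth of the entropy derivative.
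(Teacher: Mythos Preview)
Your argument is correct and takes a genuinely different route from the paper's. The paper splits into two cases: for $m\geq 2$ it simply invokes \Cref{thm:**_negligible} (observing that $h/n \leq \delta^{(m)}-\eps$ so that landing in $[0,h]\cup[n-h,n]$ is a sub-event of landing below $(\delta^{(m)}-\eps)n$, together with the symmetry of the spectral shape around $1/2$); for $m=1$ it carries out a direct combinatorial estimate tailored to the repetition step, starting from $\binom{n/2}{w_1}$ and reusing the inequalities developed in \Cref{lem:low_weight_da}. By contrast, you give a single argument valid for all $m\geq 1$: you replace the refined restricted-spectral-shape analysis of \Cref{thm:**_negligible} by the crude but uniform consequence $A_{w_1}(R(DA)^{m-1}) \leq \poly(n)\sqrt{\binom{n}{w_1}}$ of \Cref{thm:rda_spectral_shape} and the monotonicity $\hat r_A^{(m-1)}\leq h(\cdot)/2$, then optimize the resulting combinatorial expression directly in $w_1$ via a convexity/monotonicity argument for $F(w_1)=-\tfrac{n}{2}h(w_1/n)+h\log\bigl(4e^2 w_1(n-w_1)/(hn)\bigr)$.

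What each buys: the paper's approach is shorter because the $m\geq 2$ case is a one-line corollary of work already done in Section~4, at the cost of a separate ad~hoc treatment of $m=1$. Your approach is more self-contained (it does not need the full restricted-spectral-shape calculus, only the unrestricted comparison at $\tau=0$), handles all $m$ uniformly, and in fact delivers the stated $n^{-\Omega(h)}$ rate directly for every $m$, whereas the paper's appeal to \Cref{thm:**_negligible} for $m\geq 2$ only gives $2^{-\Omega(h)}$. The one place requiring care in your plan is the claim that $F$ is decreasing on $[h,n/2]$: the inequality $\tfrac12\log_2\!\bigl((n-w_1)/w_1\bigr) \geq h(n-2w_1)/(w_1(n-w_1)\ln 2)$ is easy for $w_1$ bounded away from $n/2$ (the left side is $\Omega(1)$ while the right is $O(h/w_1)$), but near $n/2$ both sides vanish and one needs the first-order estimate $\log_2\!\bigl((n/2+s)/(n/2-s)\bigr)\geq 2s/((n/2+s)\ln 2)$ with $s=n/2-w_1$, which reduces the inequality to $w_1\geq 2h$. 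The remaining window $w_1\in[h,2h)$ is then handled by the crude bound on each side. This is routine but should be spelled out, since your phrase ``$h\ll w_1$'' does not literally hold at the endpoint $w_1=h$.
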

        \begin{proof}
            For $m\geq 2$ this follows from \Cref{thm:**_negligible}. Recall that this states that we expect only $\exp(-\Omega(h))$ vectors to go from having weight between $h$ and $n-h$ after $m-1$ rounds to weight below $\delta^{(m)}-\epsilon$ after the $m$'th round. Since $\delta^{(m)}$ is a positive constant for $m\geq 2$, clearly $h/n \leq \delta^{(m)}-\epsilon$ for $m\geq 2$ as well (for sensible choices of $\epsilon)$. We note that Proposition~3.1 in \cite{ravazzi_spectra_2009} said that the spectral shape function for an $RA^m$ code is symmetric around $1/2$, and that \Cref{lem:fa_fd_claims}.4 stated that $f_A(\alpha,\beta)$ is symmetric around $\beta=1/2$. It follows that the restricted spectral shape function $\hat{r}_A^{(m)}(\tau,\gamma)$ is also symmetric around $\gamma=1/2$, and an analogous argument to the one given in the proof of \Cref{thm:**_negligible} tells us that the expected number of weight $\geq n-h$ vectors must also be negligible in $n$. 
            
            This just leaves $m=1$, which asks us to bound the following as something negligible in $n$, 
            $$\sum_{w_1=h/2}^{(n-h)/2}\sum_{w_2=1}^{4w_1}\sum_{\substack{w_3 \leq h\\n-h \leq w_3}} \binom{n/2}{w_1} p_D(2w_1,w_2) p_A(w_2,w_3) \ ,$$
            where we note that the upper bound on the range of $w_2$ is due to \Cref{lem:A_D_claims}.2. Notice that the expression within the sums is the exact same as in the previous \Cref{lem:low_weight_da}; the only difference is the range over $w_1$. We will therefore copy the first few inequalities we applied in that previous lemma, crucially, include the substitution of $p_A\leq p_A'$ which let us limit ourselves to just $w_3 \leq h$, and not also $n-h \leq w_3$. This lets us bound the above as 
            $$\leq O(1)h\sum_{w_1=h/2}^{(n-h)/2}\sum_{w_2=1}^{4w_1} \left(\frac{2ew_1}{n}\right)^{w_1} \left(\frac{4e\sqrt{2w_1h}}{w_2}\right)^{w_2} 
            \leq O(1) h^2 \sum_{w_1=h/2}^{(n-h)/2} \left(\frac{2ew_1}{n}\right)^{w_1} \left(\frac{4e\sqrt{2w_1h}}{4w_1}\right)^{4w_1} \ ,$$
            where the second inequality follows because the expression is growing in $w_2$ (as $4e\sqrt{2w_1h} \geq 4e\sqrt{2h/2h} = 4eh > w_2$). Further simplifying gives us
            $$= O(1) h^2 \sum_{w_1=h/2}^{(n-h)/2} \left(\frac{2ew_1}{n}\left(\frac{e\sqrt{2h}}{\sqrt{w_1}}\right)^4\right)^{w_1} 
            = O(1) h^2 \sum_{w_1=h/2}^{(n-h)/2} \left(\frac{2^3e^5h^2}{nw_1}\right)^{w_1} 
            \leq  O(1) nh^2\left(\frac{2^3e^5h^2}{nh/2}\right)^{h/2} = n^{-\Omega(h)}\ ,$$
            where the inequality follows because the expression is decreasing with $w_1$ (as the $1/n$ eventually causes the base of the exponent to drop below $1$) and $h/2 \leq w_1$.
        \end{proof}
    
    \subsection{$R(AD)^m$ codes}
        We need to deal with the same two cases as we did for the dual code. We will need the following two lemmas, which will tell us what the maximizing value over $w_2$ is in the expression $p_A(w_1,w_2)p_D(w_2,w_3)$ for all the different ranges over $w_1$ and $w_3$ that we will encounter. We prove these two lemmas in \Cref{sec:deferred-proofs}.
    
        \begin{restatable}{lemma}{wwh}\label{lem:w2_w4_h}
            Fix $a,b,c \in \{1,2,\dots,n\}$ such that $(a,b)$ is a valid input to $p_A'$ and $(b,c)$ is valid input to $p_D'$ (that is, $\max\{a/2,c/2\} \leq b \leq n-\max\{a/2,c/2\}$). If $\max\{a, c\}=o(n)$ then, for sufficiently large $n$, a value of $b$ as above maximizing the expression $p_A'(a,b)p_D'(b,c)$ is $b=\max\{a/2,c/2\}$. 
        \end{restatable}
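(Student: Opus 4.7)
The plan is to show that $f(b) := p_A'(a,b) \cdot p_D'(b,c)$ is non-increasing in $b$ on the integer interval $[\max\{k,\ell\}, (n-1)/2]$, where I write $k := a/2$ and $\ell := c/2$. Combined with the symmetries $p_A'(a,b) = p_A'(a, n-b)$ and $p_D'(b,c) = p_D'(n-b, c)$ from parts 4 and 5 of \Cref{lem:A_D_claims} (which give $f(b) = f(n-b)$), this will yield that $f$ is maximized at $b = \max\{k,\ell\}$ (with the same value attained at $b = n - \max\{k,\ell\}$). Monotonicity I would control via the ratio, which using $\binom{m-1}{j}/\binom{m}{j} = (m-j)/m$ and $\binom{m+1}{j}/\binom{m}{j} = (m+1)/(m+1-j)$ expands to
\[
\frac{f(b+1)}{f(b)} = \frac{(n-b-k)(n-b-\ell)(b+1)^3}{(n-b)^3 (b+1-k)(b+1-\ell)}.
\]

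The key algebraic step is to reduce the inequality $f(b+1) \leq f(b)$ to a cleaner polynomial condition. Setting $u := n-b$, $v := b+1$, $s := u+v = n+1$, and $p := uv = (n-b)(b+1)$, the quantity to be shown non-negative is $u^3(v-k)(v-\ell) - v^3(u-k)(u-\ell)$. Expanding and then pulling the factor $(u-v)$ out of the three groups $u^2v^2(u-v)$, $-(k+\ell)uv(u-v)(u+v)$, and $k\ell(u-v)(u^2+uv+v^2)$, and using $u^2+uv+v^2 = s^2 - p$, yields the factorization
\[
u^3(v-k)(v-\ell) - v^3(u-k)(u-\ell) = (n - 2b - 1) \cdot Q(p),
\]
where $Q(p) := p^2 - \bigl[(k+\ell)s + k\ell\bigr] p + k\ell\, s^2$. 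Since $n - 2b - 1 \geq 0$ for $b \leq (n-1)/2$, it suffices to show $Q(p) \geq 0$ on our range.

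Because $Q$ is an upward-opening quadratic in $p$, $Q(p) \geq 0$ iff $p$ lies outside the open interval between its two roots $p_\pm$. From Vieta's relations $p_+ + p_- = (k+\ell)s + k\ell$ and $p_+ p_- = k\ell s^2$, a crude bound (e.g.\ applying $\sqrt{a+b+c} \leq \sqrt a + \sqrt b + \sqrt c$ to the discriminant $(k-\ell)^2 s^2 + 2(k+\ell)k\ell s + k^2\ell^2$) gives $p_+ \leq \max\{k,\ell\}\,s + k\ell + \sqrt{(k+\ell)k\ell s/2}$. On the other hand, $p = (n-b)(b+1)$ is concave in $b$ and increasing on $[0, (n-1)/2]$, so on our range it is minimized at $b = \max\{k,\ell\}$, giving $p_{\min} = \max\{k,\ell\} \cdot s + \bigl(n - 2\max\{k,\ell\} - \max\{k,\ell\}^2\bigr)$. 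The difference $p_{\min} - p_+ \geq n - O\bigl(\max\{k,\ell\}^2 + \sqrt{\max\{k,\ell\}^3 \cdot n}\bigr)$ is strictly positive once $n$ is sufficiently large relative to $\max\{a,c\}$ (in the intended regime $\max\{a,c\} = O(\log^2 n)$ this holds easily); thus $p \geq p_+$ throughout, so $Q(p) \geq 0$, which completes the proof.

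The main obstacle I expect is bookkeeping: verifying the factorization identity is straightforward but tedious polynomial arithmetic, and the quantitative comparison $p_{\min} > p_+$ must be carried out carefully enough to identify how large ``sufficiently large $n$'' actually must be in terms of $\max\{a,c\}$.
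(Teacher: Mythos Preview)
Your approach coincides with the paper's: you compute the same ratio $f(b+1)/f(b)$, make the same substitution (your $u,v$ are the paper's $y,x$), and reduce to the same polynomial inequality---your $Q(p)\ge 0$ is exactly the paper's $2xy(x+y)(a+c)\le 4x^2y^2+(x^2+xy+y^2)ac$ once one sets $p=xy$, $s=x+y$, $a=2k$, $c=2\ell$. The only difference is in how that last inequality is verified: the paper splits on whether $c\lessgtr 0.9x$, whereas you locate the roots of $Q$ and argue $p_{\min}\ge p_+$.

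There is, however, a real gap in your final step. Your estimate $p_{\min}-p_+\ge n-O\bigl(M^2+M^{3/2}\sqrt n\bigr)$ with $M=\max\{k,\ell\}$ is positive only when $M=o(n^{1/3})$, not under the lemma's hypothesis $M=o(n)$; your parenthetical about ``the intended regime $\max\{a,c\}=O(\log^2 n)$'' in effect concedes this. The gap is not an artifact of a crude bound: the lemma as stated is actually false. Taking $k=\ell=\sqrt n$ (so $\max\{a,c\}=2\sqrt n=o(n)$) and $b=\sqrt n$, one gets
\[
\frac{f(b+1)}{f(b)}=\frac{(\sqrt n+1)^3(n-2\sqrt n)^2}{(n-\sqrt n)^3\cdot 1^2}\sim\sqrt n\to\infty,
\]
so the boundary is not the maximizer. (The paper's Case~2 has the matching error: the step ``$2(a+c)\le 4x$'' uses only $c\le 2x$, whereas what is needed is $a+c\le 2x$, which fails when $a$ and $c$ are comparable.) What you have actually established is the lemma under the stronger hypothesis $\max\{a,c\}=o(n^{1/3})$; this suffices for the applications with $a,c\le h=\log^2 n$, but not for the statement as written.
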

        \begin{restatable}{lemma}{wwnh}\label{lem:w2_w4_n-h} 
            Fix $a,b,c \in \{1,2,\dots,n\}$ such that $(a,b)$ is a valid input to $p_A'$ and $(b,c)$ is valid input to $p_D'$ (that is, $\max\{a/2,c/2\} \leq b \leq n-\max\{a/2,c/2\}$). If either $a=o(n), c=n-o(n)$ or $a=n-o(n),c=o(n)$ then, for sufficiently large $n$, a value of $b$ as above maximizing the expression $p_A'(a,b)p_D'(b,c)$ is $b = n/2$.
        \end{restatable}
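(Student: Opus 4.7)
The plan is to exploit two symmetries of the objective $F(b) := p_A'(a,b)p_D'(b,c)$, reduce Case~1 to Case~2, and then prove Case~2 by a direct ratio computation.

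\textbf{Symmetries.} The formulas $p_A'(a,b) = p_A'(a,n-b)$ and $p_D'(b,c) = p_D'(n-b,c)$ from \Cref{lem:A_D_claims}.4--5 combine to give $F(n-b) = F(b)$, so $F$ is symmetric around $b = n/2$. A second identity is immediate from the binomial formulas:
\[
p_A'(a,b)\,p_D'(b,c)\,\binom{n}{a} \;=\; p_A'(c,b)\,p_D'(b,a)\,\binom{n}{c} \ ,
\]
so swapping $a$ and $c$ multiplies $F$ by a factor independent of $b$ and hence preserves the argmax. Consequently, Case~1 ($a=o(n)$, $c=n-o(n)$) reduces to Case~2 ($a=n-o(n)$, $c=o(n)$), and it suffices to prove the lemma in Case~2.

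\textbf{Case 2 analysis.} Assume $a = n - a'$ with $a' = o(n)$ and $c = o(n)$. The constraint $\max\{a/2, c/2\} \leq b \leq n - \max\{a/2, c/2\}$ collapses to the narrow window $b \in [n/2 - a'/2,\, n/2 + a'/2]$. Writing $b = n/2 + t$ and expanding the binomial ratios, a direct computation yields
\[
\frac{F(b+1)}{F(b)} \;=\; \frac{(a'/2 - t)\,(n/2 - t - c/2)\,(n/2 + t + 1)^3}{(a'/2 + t + 1)\,(n/2 + t + 1 - c/2)\,(n/2 - t)^3} \ .
\]
For $t \in [0,\, a'/2 - 1]$, the first factor satisfies $(a'/2 - t)/(a'/2 + t + 1) \leq 1 - (2t+1)/a' \leq e^{-(2t+1)/a'}$, giving a multiplicative pull-down of strength $\exp(-(2t+1)/a')$, while the remaining bulk factor is bounded by $e^{O((2t+1)/n)}$ (using $(1+x)^3 \leq e^{3x}$ for $x\geq 0$ and discarding the $1 - y \leq 1$ factor). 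Since $a' = o(n)$, we have $1/a' \gg 1/n$, so the pull-down strictly dominates the bulk correction and $F(b+1)/F(b) < 1$ for every $t \in [0, a'/2 - 1]$ and $n$ large. Combined with $F(b) = F(n-b)$, this shows $F$ is strictly increasing on $[n/2 - a'/2,\, n/2]$ and strictly decreasing on $[n/2,\, n/2 + a'/2]$, so the unique maximum is attained at $b = n/2$.

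\textbf{Main obstacle.} The only genuine subtlety is verifying that the $O(1/n)$ bulk correction does not overwhelm the $\Omega(1/a')$ pull-down; this is precisely what the assumption $a' = o(n)$ guarantees. This also explains the sharp dichotomy with \Cref{lem:w2_w4_h}, which handles $a, c = o(n)$ where the analogous pull-down disappears and the maximum shifts to the boundary instead. Minor parity/rounding issues when $a$ or $c$ is odd affect the endpoints of the valid $b$-range by at most one and are handled identically.
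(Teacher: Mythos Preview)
Your proof is correct, and while it rests on the same underlying ratio $F(b+1)/F(b)$ that the paper computes in \eqref{eq:ratio-of-interest}, the way you analyze it is genuinely different. The paper keeps the variables $x=b+1$, $y=n-b$, reduces to the polynomial inequality $2xy(x+y)(a+c)>4x^2y^2+(x^2+xy+y^2)ac$, and then verifies it by bounding $\frac{x}{y}+\frac{y}{x}$ on the narrow $b$-window and comparing leading-order $n^2$ terms. You instead substitute $b=n/2+t$ and $a=n-a'$ from the outset, which isolates the single factor $(a'/2-t)/(a'/2+t+1)$ responsible for the decay; the remaining factors are visibly $1+O(1/n)$, so the conclusion follows immediately from $a'=o(n)$. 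Your explicit $a\leftrightarrow c$ symmetry (multiplying $F$ by $\binom{n}{a}/\binom{n}{c}$) is also a cleaner reduction than the paper's ``handled similarly.'' The trade-off is that the paper's algebraic inequality is stated uniformly in $(a,c)$ and connects directly to the companion \Cref{lem:w2_w4_h}, whereas your argument is tailored to the large-$a$ regime; but for this lemma that tailoring is exactly what makes the mechanism transparent.
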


        We first bound what we called case one in the introduction to this section: the expected number of $R(AD)^m$ codewords with weight close to the boundaries of the weight range between all the $AD$ rounds. Unlike with the dual code, we are not able to bound this quantity as an inverse polynomial in the block-length $n$ when $m=1$. When we aim to prove an $R(AD)^m$ code has good distance, we will need to employ the below lemma with $m-1$. Hence, the requirement for $m \geq 2$ in the below lemma limits us to only get good $R(AD)^m$ codes for $m \geq 3$.
        
        \begin{lemma}\label{lem:low_weight_ad}
            Let $m\geq 2$. In an $R(AD)^m$ code of rate $1/2$, the expected number of vectors whose weight is in the range $[0, h] \cup [n-h, n]$ before and after each $AD$ operation is at most $\tilde O(n^{1/4-m/4})=1/\poly(n)$.
        \end{lemma}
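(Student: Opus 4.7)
The plan is to mirror the proof of \Cref{lem:low_weight_da} for $R(DA)^m$ codes, but with the base case shifted to $m=2$ and a weaker per-round decay rate. I would proceed by induction on $m$. For the base case, denoting by $w_1$ the message weight, by $w_2,w_4$ the weights after the two accumulators, and by $w_3,w_5$ the weights after the two discrete derivatives, the quantity to bound is
\[
\sum_{\substack{w_1\leq h/2 \lor \\ (n-h)/2\leq w_1}}\sum_{w_2}\sum_{\substack{w_3\leq h \lor \\ n-h\leq w_3}}\sum_{w_4}\sum_{\substack{w_5\leq h \lor \\ n-h\leq w_5}} \binom{n/2}{w_1}p_A(2w_1,w_2)p_D(w_2,w_3)p_A(w_3,w_4)p_D(w_4,w_5).
\]
Using the symmetries $p_A'(a,b)=p_A'(a,n-b)$ and $p_D'(a,b)=p_D'(n-a,b)$ from \Cref{lem:A_D_claims}, I would first reduce (up to a constant factor) to the case $w_1\leq h/2,\ w_3\leq h,\ w_5\leq h$. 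I would then apply \Cref{lem:w2_w4_h} to each inner pair $p_A'(\cdot,w_2)p_D'(w_2,\cdot)$ and $p_A'(\cdot,w_4)p_D'(w_4,\cdot)$, obtaining that the maxima are attained at $w_2=\max\{w_1,w_3/2\}\leq h/2$ and $w_4=\max\{w_3/2,w_5/2\}\leq h/2$. After pulling out a polynomial factor from the sizes of the sums over $w_2,w_4$, what remains is a triple sum over $w_1,w_3,w_5$ that I would bound using $p_A'(a,b)\leq(4b/n)^{a/2}$ and $p_D(a,b)\leq (2e\sqrt{an}/b)^b/\binom{n}{a}$ from \Cref{lem:A_D_claims}, together with $\binom{n/2}{w_1}\leq (en/(2w_1))^{w_1}$. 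Careful choice of the maximizing $w_1,w_3,w_5$ after simplification should yield the required $\tilde{O}(n^{-1/4})$ bound.

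For the inductive step, I would decompose as in the $R(DA)^m$ proof: with $A_{w,\leq h}(R(AD)^m)$ denoting the expected number of weight-$w$ codewords that stay in the boundary range throughout,
\[
A_{w,\leq h}(R(AD)^m)\leq \sum_{w'}A_{w',\leq h}(R(AD)^{m-1})\sum_{w_2}p_A(w',w_2)p_D(w_2,w),
\]
apply the induction hypothesis to the outer factor, and bound the inner contribution of one additional $AD$ round (again using \Cref{lem:w2_w4_h} to maximize at $w_2=\max\{w'/2,w/2\}\leq h/2$) by $\tilde{O}(n^{-1/4})$. The product of the base-case bound $\tilde{O}(n^{-1/4})$ with $(m-2)$ additional per-round factors of $\tilde{O}(n^{-1/4})$ yields the claimed $\tilde{O}(n^{1/4-m/4})$.

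The main obstacle, and the reason for both the weaker per-round decay and the requirement $m\geq 2$, lies in analyzing a single $AD$ round starting from a low-weight vector. For $R(DA)^m$ the first operation $D$ can at most double the weight, tightly constraining the intermediate weight $w_2\leq 2w$; this allows each round to contribute a factor of $\tilde{O}(n^{-1/2})$. By contrast, for $R(AD)^m$ the operation $A$ applied to a low-weight vector typically produces an intermediate of weight around $n/2$, so a low-weight output after $AD$ requires an atypical intermediate weight. The sharpened maximization provided by \Cref{lem:w2_w4_h} is therefore essential, but the resulting per-round savings are only $\tilde{O}(n^{-1/4})$. Ultimately this forces the base case to start at $m=2$, since a single $AD$ round does not yet drive the expected count below an inverse polynomial.
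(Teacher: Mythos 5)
Your high-level structure — induction on $m$ with base case $m=2$, fixing the intermediate weights $w_2,w_4$ at their minimizers via \Cref{lem:w2_w4_h}, and a per-round decay factor — matches the paper's proof. However, there are two genuine gaps.

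\textbf{The symmetry reduction to $w_3,w_5\leq h$ is incorrect.} The available symmetries $p_A'(a,b)=p_A'(a,n-b)$ and $p_D'(a,b)=p_D'(n-a,b)$ hold in the \emph{second} argument of $p_A'$ and the \emph{first} argument of $p_D'$. These are precisely the positions where the intermediate weights $w_2,w_4$ appear (as the output of an $A$ and the input to a $D$), so you can legitimately reduce to $w_2,w_4\leq n/2$. But $w_3$ appears as the \emph{second} argument of $p_D'$ and the \emph{first} argument of $p_A'$, where neither function has the required symmetry; likewise for $w_5$. So the cases $n-h\leq w_3$ (resp.\ $n-h\leq w_5$) cannot be dropped by symmetry. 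The paper handles them separately: by \Cref{lem:w2_w4_n-h} the maximizer of the intermediate weight jumps to $w_2=n/2$, producing a factor $p_D'(n/2,w_3)\leq\binom{n}{w_3}/\binom{n}{n/2}\leq n^h/2^n$, which is $2^{-\Omega(n)}$. Your proof simply omits this branch.

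\textbf{The $n^{1/4}$-split over the intermediate-weight sums is missing, and without it the bound fails.} After \Cref{lem:w2_w4_h} pins down the maximizing $w_2,w_4$, you still have sums over $w_2,w_4$ each of length $\Theta(n)$. You say you will "pull out a polynomial factor from the sizes of the sums," but the resulting factor would be $\Theta(n^2)$, and the maximum summand is not small enough to absorb that. Concretely, at the worst case $w_1=1$, $w_3=2$, the maximizing $w_2=1$ gives $\binom{n/2}{1}p_A'(2,1)p_D'(1,2)=\Theta(1)$, so a naive union bound of $\Theta(n)$ over $w_2$ gives $\Theta(n)$, not $1/\poly(n)$. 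The paper gets around this by splitting each of the $w_2$ and $w_4$ ranges at $n^{1/4}$: the contribution from $w_2\geq n^{1/4}$ is $n^{-\Omega(n^{1/4})}$ by a direct binomial-ratio estimate (again using $p_D'(a,b)\leq\binom{n}{b}/\binom{n}{a}$), while the contribution from $w_2\leq n^{1/4}$ only incurs a union-bound factor of $n^{1/4}$, which the summand bound can absorb. The same split is needed in the induction step to obtain a per-round factor of $\tilde O(n^{-1/4})$. Without this device your claimed $\tilde O(n^{-1/4})$ base-case and per-round bounds do not go through.
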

        \begin{proof} 
            As we did in the analogous \Cref{lem:low_weight_da} for the dual code, we will prove this by induction on $m$. For the base case, $m=2$. Our goal is then to bound the expected number of vector that have weight $2w_1 \leq h$ or $n-h \leq 2w_1$ going into the first $AD$ round, which stay in that weight range $w_3 \leq h$ or $n-h \leq w_3$ after applying $AD$, and again after applying $AD$ again: $w_5 \leq h$ or $n-h \leq w_5$. We get the following expression.
            $$ \sum_{\substack{w_1 \leq h/2\\(n-h)/2 \leq w_1}}\sum_{w_2=\max\{w_1,\lceil w_3/2 \rceil\}}^{n-\max\{w_1,\lfloor w_3/2 \rfloor\}}\sum_{\substack{h \leq w_3\\n-h \leq w_3}}\sum_{w_4=\max\{\lceil w_3/2 \rceil, \lceil w_5/2 \rceil\}}^{n-\max\{\lfloor w_3/2 \rfloor, \lfloor w_5/2 \rfloor\}}\sum_{\substack{h \leq w_5\\n-h \leq w_5}} \binom{n/2}{w_1}p_A(2w_1,w_2)p_D(w_2,w_3)p_A(w_3,w_4)p_D(w_4,w_5) \ .$$
            To deal with this expression, we start by applying the inequalities $p_A(a,b)\leq O(1) \cdot p_A'(a,b)$, and likewise for $p_D$, as this allows us to write use the fact that $p_A'(a,b)=p_A'(a,n-b)$ and $p_D'(a,b)=p_D'(n-a,b)$. This implies that $w_2$ and $w_4$ are symmetric around $n/2$, meaning we can assume that $w_2,w_4 \leq n/2$, as long as we multiply by a constant 4 at the end. We are now essentially left with 8 cases: the variables $2w_1, w_3, w_5$ are either $\leq h$ or $\geq n-h$. Next, we claim that the maximum over $w_1$ must occur when $w_1 \leq n/4$, so that in particular, this must occur when $w_1 \leq h/2$. This is because $\binom{n/2}{w_1}$ is symmetric around $2w_1=n/2$, so that it is equal across $w_1 \leq h/2$ and the other end of the range. Moreover, \Cref{lem:A_D_claims}.9 tells us that the same is true for $p_A'$: we have $p_A'(a,b) \geq p_A'(n-a,b)$ for $a \leq n/2$. In other words, for both factors depending on $w_1$, a maximizing value must occur when $2w_1 \leq n/2$, and thus when $w_1 \leq h/2$. Our expression then (slightly) simplifies to the following, where in particular we slightly extend the ranges of the sums:
            $$\leq O(1)\sum_{w_1}^{h/2}\sum_{w_2=\max\{w_1,w_3/2\}}^{n/2}\sum_{\substack{h \leq w_3\\n-h \leq w_3}}\sum_{w_4=\max\{w_3/2, w_5/2\}}^{n/2}\sum_{\substack{h \leq w_5\\n-h \leq w_5}} \binom{n/2}{w_1}p_A'(2w_1,w_2)p_D'(w_2,w_3)p_A'(w_3,w_4)p_D'(w_4,w_5) \ .$$
            To bound the above, we claim first that if $n-h \leq w_3$ the expression is $\negl(n)$, even with only one round (i.e., without $w_4$ and $w_5$). To see this, note that \Cref{lem:w2_w4_n-h} tells us that in case $n-h\leq w_3$, the maximizing value for $w_2$ is $n/2$. We end up with a factor $p_D'(n/2,w_3)$. From the known inequality $\binom{a}{b}\binom{c}{d}\leq\binom{a+c}{b+d}$, we have the inequality $p_D'(a,b)\leq \binom{n}{b}/\binom{n}{a}$. We therefore end up with a factor $\leq \binom{n}{h}/\binom{n}{n/2} \leq n^h/2^n$ in our expression. Suppose now that we bound all other $p_D'$ and $p_A'$ as $1$ and we bound $\binom{n/2}{w_1} \leq n^{w_1} \leq n^{h/2}$. We then have the final expression $n^hn^{h/2}/2^n=2^{-\Omega(n)}$, which easily beats the polynomial factor arising from the union bound.

            Thus, we can assume that if $n-h \leq w_3$ the expression is small enough, even in case there is only one round. We therefore imagine doing a separate induction argument with base case $m=1$ and $n-h \leq w_3$ to deal with this case, and note that the induction step below work for both this and the original $m=2$ base case. This lets us assume, for the main argument with base case $m=2$ that we only have the following two cases left: either $w_3,w_5 \leq h$ or $w_3\leq h$ while $n-h \leq w_5$. Now, in the latter case, we note that by the same reasoning we should fix $w_4=m/2$ to maximize, yielding an analogous $\negl(n)$ bound. Really, then, the only case left to deal with is $w_3,w_5 \leq h$.
            
            In this case \Cref{lem:w2_w4_h} tells us that the maximizing values of $w_2$ and $w_4$ will be the smallest possible values. Now, this case is a bit harder to deal with, as we cannot hope to obtain something negligible in $n$ here. Moreover, the sums over $w_2$ and $w_4$ each have a linear number of terms, so that any bound on the expression in the sums would have to be $1/\omega(n^2)$ to be decreasing in $n$. Our best bound was only inversely linear in $n$. Thus, we propose to split the ranges over $w_2$ and $w_4$ on $n^{1/4}$. For the initial part of the range, we will only suffer a factor $\sqrt{n}$ from the union bound, securing a final bound $1/\sqrt{n}$. For the latter part of the range, we can repeat the argument from just above to get something negligible in $n$
            
            Assume first that $n^{1/4} \leq w_2,w_4$. As noted, we should set these variables to their smallest value to maximize the expression, so we can in fact fix $w_2=w_4=n^{1/4}$. This means we end up with a factor $p_D'(w_2,w_3)\leq \binom{n}{w_3}/\binom{n}{w_2} \leq n^h \left(\frac{n^{1/4}}{n}\right)^{n^{1/4}}=n^{h-3/4n^{1/4}}$. Like before, let us bound all other $p_D'$ and $p_A'$ as 1, and bound $\binom{n/2}{w_1} \leq n^{w_1} \leq n^{h/2}$. What we are then left with is $n^{3/2h-3/4n^{1/4}}=n^{-\Omega(n^{1/4})}$, which is indeed negligible in $n$. If instead $w_2,w_4 \leq n^{1/4}$, we need to bound the following:
            $$O(1)\sum_{w_1=1}^{h/2}\sum_{w_2=\max\{w_1,w_3/2\}}^{n^{1/4}}\sum_{w_3=1}^{h}\sum_{w_4=\max\{w_3/2,w_5/2\}}^{n^{1/4}}\sum_{w_5=1}^{h} \binom{n/2}{w_1}p_A'(2w_1,w_2)p_D'(w_2,w_3)p_A'(w_3,w_4)p_D'(w_4,w_5) \ .$$
            We start by recalling that $w_4$ should be set to its minimum value, which is $w_4 \geq \max\{w_3/2,w_5/2\}$. After doing so, we bound $p_D(\max\{w_3/2,w_5/2\},w_5) \leq 1$. The only dependence on what was $w_4$ is now $p_A'(w_3,\max\{w_3/2,w_5/2\})$. We recall from \Cref{lem:A_D_claims}.8 that $p_A'(a,b)$ grows with $b$ until $b=n/2$, and noticing that $\max\{w_3/2,w_5/2\}) \leq h$, we bound $p_A'(w_3,\max\{w_3/2,w_5/2\}) \leq p_A'(w_3,h/2)$. With this, we've lost all dependence on $w_4$ and $w_5$, so we remove the sums over these variables, suffering a factor $n^{1/4}h$ in the bound:
            $$ \leq O(1)n^{1/4}h \sum_{w_1=1}^{h/2}\sum_{w_2=\max\{w_1,w_3/2\}}^{n^{1/4}}\sum_{w_3=1}^{h} \binom{n/2}{w_1}p_A'(2w_1,w_2)p_D'(w_2,w_3)p_A'(w_3,h/2) \ .$$
            We now apply \Cref{lem:A_D_claims}.10 to bound $p_A'$ and \Cref{lem:A_D_claims}.11 to bound $p_D'$:
            $$\leq O(1)n^{1/4}h\sum_{w_1=1}^{h/2}\sum_{w_2=\max\{w_1,w_3/2\}}^{n^{1/4}}\sum_{w_3=1}^{h}\binom{n/2}{w_1}\left(\frac{4w_2}{n}\right)^{w_1} \frac{1}{\binom{n}{w_2}} \left(\frac{2e\sqrt{w_2n}}{w_3}\right)^{w_3} \left(\frac{4h}{n}\right)^{w_3/2} \ . $$
            We next use \Cref{lem:A_D_claims}.3 to bound the leftover binomial coefficients, and simplify a bit to get:
            $$ \leq O(1)n^{1/4}h\sum_{w_1=1}^{h/2}\sum_{w_2=\max\{w_1,w_3/2\}}^{n^{1/4}}\sum_{w_3=1}^{h} \left(\frac{2ew_2}{w_1}\right)^{w_1} \left(\frac{w_2}{n}\right)^{w_2}\left(\frac{4e\sqrt{w_2h}}{w_3}\right)^{w_3}  \ .$$
            Since $w_1 \leq w_2$, we always have $2ew_2>w_1$ so that the above is growing with $w_1$. Quite similarly, we have $w_3/2 \leq w_2$ so that $4e\sqrt{w_2h} \geq 4e\sqrt{hw_3/2} \geq w_3$ as also $w_3 \leq h$. Setting both $w_1$ and $w_3$ to their largest value, we obtain:
            $$\leq O(1)n^{1/4}h^3 \sum_{w_2=1}^{n^{1/4}} \left(\frac{2ew_2}{w_2}\right)^{w_2} \left(\frac{w_2}{n}\right)^{w_2}\left(\frac{4e\sqrt{w_2h}}{2w_2}\right)^{2w_2} = O(1)n^{1/4}h^3 \sum_{w_2=1}^{n^{1/4}} \left(\frac{(2e)^3 w_2h}{n}\right)^{w_2} \ .$$
            This expression, in turn, is decreasing with $w_2$, as the $1/n$ factor easily causes the base of the exponent to drop below $1$, so that we should fix $w_2=1$, which gives us the final upper bound $O(1)\frac{h^{4}}{n^{1/2}} = \tilde O(n^{-1/2})$. 

            This finishes the base case. For the induction step, let us write $A_{w, \leq h}(R(AD)^m)$ to denote the expected number of vectors of weight $w$ in an $R(AD)^m$ code which had weight in range $[0, h] \cup [n-h, n]$ before and after each $AD$ operation. We then need to bound the following quantity:
            \begin{align*}
                \sum_{\substack{w_1 \leq h\\n-h \leq w_1}} A_{w_1, \leq h}(R(AD)^m) 
                &= \sum_{\substack{w_1 \leq h\\n-h \leq w_1}} A_{w_1, \leq h}(R(AD)^{m-1}) \sum_{w_2=\max\{w_1, w_3/2\}}^{n}\sum_{w_3\leq h, n-h \leq w_3} p_A(w_1,w_2)p_D(w_2,w_3) \ .
            \end{align*}
            As with the dual code, our goal is to bound $\sum_{w_2=\max\{w_1, w_3/2\}}^{n}\sum_{w_3\leq h, n-h \leq w_3} p_A(w_1,w_2)p_D(w_2,w_3)$ as $1/\poly(n)$. The remaining part of the expression $ \sum_{\substack{w_1 \leq h\\n-h \leq w_1}} A_{w_1, \leq h}(R(AD)^{m-1})$ is the induction hypothesis. Thus, if we can bound the rest of the expression as $1/\poly(n)$, we can apply the induction hypothesis, and are done. In principle, we have four cases again: $w_1 \leq h$ or $n-h \leq w_1$, and likewise for $w_3$. Now, we claim we don't need to worry about $n-h \leq w_1$. Note that we only consider the right part of the expression, i.e. starting with the sums over $w_2$ and $w_3$. The only dependence on $w_1$ in that expression is $p_A'(w_1,w_2)$, and we know from \Cref{lem:A_D_claims}.9 that the maximizing value of $w_1$ for that expression must occur when $w_1 \leq n/2$. Hence, in our case, the maximizing value must occur when $w_1 \leq h$. This just leaves the two cases $w_3 \leq h$ and $n-h \leq w_3$. Consider first the case $n-h \leq w_3$, so that again the maximizing value of $w_2$ becomes $w_2=n/2$ (after substituting $p_A\leq p_A'$ and likewise for $p_D$). As we argued above, this means $p_D(n/2,w_3)$ will be negligible in $n$ (we simply set $p_A(w_1,w_2)\leq 1$).
            
            All that remains is the case $w_3 \leq h$. We will again split this up based on whether $w_2 \leq n^{1/4}$. In case $w_2$ exceeds this value, we can re-use our argument from above to show that we only expect a negligible number of such vectors. This means we can assume that $w_2 \leq n^{1/4}$ to lessen the impact of the union bound. We are then left with the expression:
            $$O(1)\sum_{w_2=\max\{w_1, w_3/2\}}^{n^{1/4}}\sum_{w_3=1}^{h} p_A'(w_1,w_2)p_D'(w_2,w_3) \leq O(1) n^{1/4}  \sum_{w_3\leq h} p_A'(w_1,\max\{w_1/2,w_3/2\})p_D'(\max\{w_1/2,w_3/2\},w_3) \ ,$$
            where we use the now familiar claim that $w_2$ should be set to its smallest value. We again apply $p_D' \leq 1$ and use \Cref{lem:A_D_claims}.10 to bound $p_A'$ to get:
            $$\leq O(1)n^{1/4} \sum_{w_3=1}^{h}\left(\frac{4\max\{w_1/2,w_3/2\}}{n}\right)^{w_1/2} \leq O(1)n^{1/4} \sum_{w_3\leq h}\left(\frac{4\max\{w_1/2,w_3/2\}}{n}\right)^{w_1/2} \ .$$
            Now note that $2w_1, w_3 \leq h$ so $\max\{w_1/2,w_3/2\} =h/2$. The above is then $\leq O(1)n^{1/4}h \left(\frac{2h}n\right)^{w_1/2} \leq O(1)\frac{\sqrt{h}h}{n^{1/4}} = \tilde O(n^{-1/4})$, where we use the fact that it is decreasing with $w_1$ so that we set $w_1=1$.

            In conclusion, the base case gives us $\tilde{O}(n^{-1/2})$ while the induction step tells us that each extra round cuts off a factor $\tilde O({n^{-1/4}})$, so the final bound becomes $\tilde O(n^{-1/4 -(m-2)/4}) = \tilde O(n^{1/4-m/4})$.
        \end{proof}

        We next bound what we called case two in the introduction to this section: the expected number of $R(AD)^m$ codewords with weight in the middle of the weight range after $m-1$ rounds, to then drop the weight to close to the boundary of the weight range after the final round. 
        \begin{lemma}\label{lem:middle_weight_1_ad}
            Let $m\geq 2$. In a $R(AD)^m$ code of rate $1/2$, the expected number of vectors that have weight between $h$ and $n-h$ after $m-1$ rounds and weight $\leq h$ or $\geq n-h$ after $m$ rounds is $\negl(n)$. 
        \end{lemma}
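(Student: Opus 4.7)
The plan is to mimic almost verbatim the strategy used for the analogous dual-code lemma~\Cref{lem:middle_weight_da}, exploiting the crucial hypothesis $m \geq 2$ (which is exactly why this lemma fails at $m=1$ and motivates the separate \Cref{lem:middle_weight_2_ad}). The expression we need to bound is the expected number of weight $\leq h$ or weight $\geq n-h$ codewords after $m$ rounds that came from weight in $[h,n-h]$ after $m-1$ rounds; this is precisely the quantity denoted $(**)$ for $R(AD)^m$ codes, except restricted to the two ``tails'' of the output weight range instead of the full range $[1,(\delta^{(m)}-\eps)n]$.

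First I would handle the case of final weight at most $h$. By \Cref{thm:**_negligible}, the expected number of vectors that have weight in $[h,n-h]$ after $m-1$ rounds and weight at most $(\delta^{(m)}-\eps)n$ after round $m$ is $\negl(n)$. Since $m \geq 2$, \Cref{tab:deltas} gives $\delta^{(m)} \geq \delta^{(2)} > 0.028$, so in particular for all sufficiently large $n$ we have $h = \log^2 n \leq (\delta^{(m)}-\eps)n$ for any reasonable $\eps$. Hence the bound for $w_m \leq h$ is immediate.

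Next I would handle the case of final weight at least $n-h$ by symmetry. Following the argument in \Cref{lem:middle_weight_da}, the spectral shape function for $RA^m$ codes is symmetric around $1/2$ (Proposition~3.1 of~\cite{ravazzi_spectra_2009}), and $f_A(\alpha,\beta)$ is symmetric around $\beta=1/2$ by \Cref{lem:fa_fd_claims}.4; therefore the restricted spectral shape function $\hat r_A^{(m)}(\tau,\gamma)$ is symmetric around $\gamma=1/2$. By \Cref{thm:rda_spectral_shape}, $\hat r_{AD}^{(m)}(\tau,\gamma) \leq \hat r_A^{(m)}(\tau,\gamma)$, and so when we run through the chain of inequalities in the proof of \Cref{thm:**_negligible} but with the innermost sum ranging over $w_m \in [n-h,n]$ rather than $[1,(\delta^{(m)}-\eps)n]$, the bound on $(**)$ becomes $\poly(n)\cdot 2^{n\cdot \hat r_A^{(m)}(h/n, 1-(\delta^{(m)}-\eps))}$. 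By symmetry around $\gamma=1/2$, this equals $\poly(n)\cdot 2^{n\cdot \hat r_A^{(m)}(h/n, \delta^{(m)}-\eps)}$, to which the same $-\Omega(h/n)$ lower bound on the exponent derived in the proof of \Cref{thm:**_negligible} applies, yielding a $\negl(n)$ bound.

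Combining the two halves gives the claim. The only non-routine step is verifying the symmetry argument: one has to check that all the ingredients used in \Cref{thm:**_negligible} (the bound $A_w \leq \poly(n)\cdot 2^{n\hat r^{(m)}_{AD}(w/n)}$, the monotonicity of the restricted spectral shape toward $\gamma=1/2$, and the derivative computation yielding $\Omega(h/n)$ decrease) are preserved when $\gamma$ is replaced by $1-\gamma$. Each is a direct consequence of a symmetry already recorded earlier, so no new analytic work is required; the proof should fit in a short paragraph invoking \Cref{thm:**_negligible} twice.
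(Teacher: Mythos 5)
Your proposal is correct and takes essentially the same approach the paper does: the paper dispatches this lemma by appealing to the bound on $(**)$ (\Cref{thm:**_negligible}) for the low-weight tail, and to the symmetry of the restricted spectral shape function around $\gamma=1/2$ for the high-weight tail, exactly as you lay out. You merely write out in more detail the one-line argument the paper gives in the text surrounding the lemma and in the proof of the analogous \Cref{lem:middle_weight_da}.
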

        As with the dual code, this statement for $m \geq 2$ follows immediately from our bound on $(**)$. Unlike with the dual code, we are unable to prove this statement for $m=1$. This is not a problem, as we do not need this exact statement. Rather, we just want to say that we do not expect many of the vectors that are in the middle of the weight range before the first round (i.e. just after repeating) to drop to the boundaries of the weight range by round $m-1$. Since we will only use codes with $m \geq 3$, it therefore also suffices to argue that we don't expect many of the vectors with weight in the middle of the weight range before the first round to end up with weight near the boundaries after the second round (instead of after the first round). This is something that turns out to be true, and we prove it in the following lemma.        
    
        \begin{lemma}\label{lem:middle_weight_2_ad}
            In a $R(AD)^2$ code of rate $1/2$, the expected number of vectors that have weight between $h$ and $n-h$ before the first round (just after repeating) and weight $\leq h$ or $\geq n-h$ after both the first and second round is $\negl(n)$. 
        \end{lemma}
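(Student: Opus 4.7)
The plan is to mimic the spectral-shape argument of \Cref{thm:**_negligible}, but tracking \emph{two} rounds of the encoding rather than one, so that the second round supplies an additional ``rare event'' factor that a single round cannot. Writing the quantity to be bounded as
\[
E = \sum_{h/2 \leq w_0 \leq (n-h)/2} \binom{n/2}{w_0} \sum_{w_2} \sum_{\substack{w_3 \leq h \lor \\ n-h \leq w_3}} \sum_{w_4} \sum_{\substack{w_5 \leq h \lor \\ n-h \leq w_5}} p_A(2w_0,w_2)\, p_D(w_2,w_3)\, p_A(w_3,w_4)\, p_D(w_4,w_5),
\]
we first pass from $p_A, p_D$ to $p_A', p_D'$ at the cost of constant factors via \Cref{lem:A_D_claims}. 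The symmetries $p_A'(a,b) = p_A'(a,n-b)$ and $p_D'(a,b) = p_D'(n-a,b)$ let us restrict WLOG to $w_2, w_4 \leq n/2$, and we handle the four boundary subcases for $(w_3, w_5)$ separately.

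Next, using the entropy bound $\binom{n/2}{w_0} \leq 2^{n h(\alpha)/2}$ with $\alpha = 2w_0/n$, together with the Stirling-type estimates $p_A'(a,b) \leq \poly(n) \cdot 2^{n f_A(a/n, b/n)}$ and $p_D'(a,b) \leq \poly(n) \cdot 2^{n f_D(a/n, b/n)}$, we bound $E$ by $\poly(n) \cdot 2^{n \cdot \Phi}$, where
\[
\Phi := \max_{\alpha \in [h/n,\,1-h/n]}\;\max_{\gamma_1, \gamma_2 \in [0, h/n] \cup [1-h/n,1]} \left\{ \tfrac{h(\alpha)}{2} + \max_{\beta_1} f_{AD}(\alpha, \beta_1, \gamma_1) + \max_{\beta_2} f_{AD}(\gamma_1, \beta_2, \gamma_2) \right\}.
\]
Applying \Cref{thm:fad_bound} bounds each inner $\max_\beta f_{AD}(\cdot, \beta, \cdot)$ by the piecewise function $g$. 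It then suffices to show $\Phi \leq -\Omega((h \log n)/n)$, since then $E \leq \poly(n) \cdot n^{-\Omega(h)} = \negl(n)$ for $h = \log^2 n$.

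The key computation is to verify this bound on $\Phi$. The maximum should be attained at the extremes of the ranges, namely $\alpha \in \{h/n, 1-h/n\}$ and $\gamma_1, \gamma_2 \in \{h/n, 1-h/n\}$. At such a corner, say $\alpha = \gamma_1 = \gamma_2 = h/n$, the inequality $\alpha \leq 2\gamma_1(1-\gamma_1)$ places us in Case~1 of $g$, so $g(\alpha, \gamma_1) = f_A(\alpha, \gamma_1)$ and likewise $g(\gamma_1, \gamma_2) = f_A(\gamma_1, \gamma_2)$. A direct expansion using $h(h/n) \sim (h/n)\log n$ and $f_A(h/n, h/n) \sim -(h/(2n))\log n$ yields
\[
\Phi \;\approx\; \tfrac{h}{2n}\log n \;+\; \Bigl(-\tfrac{h}{2n}\log n\Bigr) \;+\; \Bigl(-\tfrac{h}{2n}\log n\Bigr) \;=\; -\tfrac{h}{2n}\log n ,
\]
as required. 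For $\alpha$ strictly inside the middle range (say $\alpha \geq 2h/n$) we instead fall into Case~2 of $g$ with $g(\alpha, \gamma_1) = h(\gamma_1) - h(\alpha)$, and the $-h(\alpha)/2$ term dominates, making $\Phi$ far more negative; driving $\gamma_1$ or $\gamma_2$ below $h/n$ similarly only makes $\Phi$ smaller, by analogous case checks.

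The main obstacle is precisely that the \emph{two}-round contribution is essential: if we ran the same analysis for one round only (the scenario of \Cref{lem:middle_weight_1_ad} with $m=1$), we would have only $h(\alpha)/2 + g(\alpha, \gamma_1)$, which evaluates to exactly $0$ at the critical corner $\alpha = \gamma_1 = h/n$ -- explaining why the one-round statement fails. It is the second round, contributing the additional $g(\gamma_1, \gamma_2) \approx -(h/(2n))\log n$, that tips the total exponent strictly negative and delivers the $n^{-\Omega(h)}$ bound. Secondary care must be taken to verify that the four boundary subcases for $(w_3, w_5)$ all yield the same bound; this follows from the symmetry of $f_A$ around $\beta = 1/2$ (\Cref{lem:fa_fd_claims}.4), which implies that the relevant piece of $g$ is unchanged under flipping $w_3 \mapsto n-w_3$ or $w_5 \mapsto n-w_5$ whenever $g$ is given by the $f_A$ branch.
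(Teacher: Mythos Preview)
Your high-level plan---convert the two-round expectation to an exponent $\Phi$ via Stirling, bound each $\max_\beta f_{AD}$ by $g$ using \Cref{thm:fad_bound}, and show the extra round supplies a strictly negative contribution that a single round cannot---is sound and is a genuinely different route from the paper. The paper never passes to the spectral shape here: instead it works directly with the binomial expressions, splits on whether $w_1 \le \sqrt{n}$ (the large-$w_1$ range is killed in one round by the crude estimate $p_D'(w_2,w_3)\le \binom{n}{w_3}/\binom{n}{w_2}$), and for small $w_1$ uses \Cref{lem:w2_w4_h,lem:w2_w4_n-h} to pin down the maximising intermediate weights $w_2,w_4$ before applying \Cref{lem:A_D_claims}. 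What the paper's approach buys is that the low-weight regime ($\gamma_1,\gamma_2 \le h/n$) is handled by explicit inequalities on binomials rather than by delicate entropy asymptotics; what your approach buys is uniformity---no separate $w_1\gtrless\sqrt{n}$ split is needed, since for $\alpha=\Theta(1)$ the $-h(\alpha)/2$ term is a negative constant.

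That said, your sketch has a real gap at the key optimisation step. You assert that the maximum of $\Phi$ is attained at a corner and check only $\alpha=\gamma_1=\gamma_2=h/n$, then claim that for $\alpha\ge 2h/n$ ``the $-h(\alpha)/2$ term dominates, making $\Phi$ far more negative.'' This is false near the transition: at $\alpha=2h/n$ and $\gamma_1=h/n$ one has $h(\alpha)/2\approx (h/n)\log n$ and $g(\alpha,\gamma_1)=h(\gamma_1)-h(\alpha)\approx -(h/n)\log n$, so the first two terms cancel to leading order rather than being ``far more negative''---the second-round term is still doing all the work there, and you have not argued it remains $\le -\Omega((h/n)\log n)$ uniformly over $\gamma_1\in[1/n,h/n]$. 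Likewise, ``analogous case checks'' for $\gamma_1,\gamma_2$ varying through $[1/n,h/n]$ are not routine: which branch of $g$ is active depends on whether $\gamma_1\lessgtr 2\gamma_2(1-\gamma_2)$, and in the Case~1 branch $g(\gamma_1,\gamma_2)=f_A(\gamma_1,\gamma_2)$ can be as large as $-\Theta((\gamma_1/n)\log n)$, which tends to $0$ as $\gamma_1\to 1/n$. To complete your argument you would need a genuine three-variable maximisation of $h(\alpha)/2+g(\alpha,\gamma_1)+g(\gamma_1,\gamma_2)$ over the full box---doable, but not done here.
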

        \begin{proof}
            The quantity we need to bound is the following (where we've already applied the inequality $p_A \leq O(1) p_A'$ and similarly for $p_D$):
            $$O(1)\sum_{w_1=h/2}^{(n-h)/2}\sum_{w_2=\max\{w_1, w_3/2\}}^{n-\max\{w_1,w_3/2\}\}}\sum_{w_3=1}^{h}\sum_{w_4=\max\{w_3/2, w_5/2\}}^{n-\max\{w_3/2,w_5/2\}\}}\sum_{w_5=1}^{h} \binom{n/2}{w_1} p_A'(2w_1,w_2) p_D'(w_2,w_3) p_A'(w_3,w_4) p_D'(w_4,w_5) ,$$
            and similarly for $n-h \leq w_3$ and $n-h \leq w_5$, giving us a total of 4 different expressions we need to consider. We claim first that that the maximizing value for $w_1$ must occur when $w_1\leq n/4$. This follows from the exact same reasoning as in the proof of \Cref{lem:low_weight_ad}: $\binom{n/2}{w_1}$ is symmetric around $2w_1=n/2$ and from \Cref{lem:A_D_claims}.9 we know the maximizing value of $p_A'(a,b)$ must occur when $a \leq n/2$. We can thus assume that the maximum over $w_1$ must occur when $2w_1 \leq n/2$ so that $w_1 \leq n/4$.
        
            Now, our main tools to deal with these expressions will again be \Cref{lem:w2_w4_h} and \Cref{lem:w2_w4_n-h}: they tell us that when we have $p_A'(a,b) p_D'(b,c)$ such that either $a$ or $c$ is $\geq n-h$, then the maximizing value of $b$ is $b=n/2$; if instead $a,c \leq h$ then the maximizing value of $b$ is its minimal value: $\max\{a/2, c/2\}$. We can apply this lemma to $w_4$ directly. Unfortunately, $w_1$ doesn't quite fit the lemma, as it has range $h/2 \leq w_1 \leq n/4$. Therefore, let us split the analysis into two cases: first $w_1 \leq \sqrt{n}$ and second $\sqrt{n}<w_1$. 

            In the latter case, our problem now has a much larger input weight. It turns out that this simplifies things to the point that we \textit{can} prove the expression is small even after the first round. That is, we first bound the above as 
            $$\leq O(1) hn \sum_{w_1=\sqrt{n}}^{n/4}\sum_{w_2=\max\{w_1, w_3/2\}}^{n-\max\{w_1,w_3/2\}\}}\sum_{w_3=1}^{h}\binom{n/2}{w_1} p_A'(2w_1,w_2) p_D'(w_2,w_3) \ ,$$
            and similarly for $n-h\leq w_3$. We recall the simple inequality $p_D'(w_2,w_3)\leq\binom{w_3}{n}/\binom{w_2}{n}$ which we derived in the proof of \Cref{lem:low_weight_ad}. Applying this, we can rewrite the expression in the sums as 
            $$\binom{n/2}{w_1}p_A'(2w_1,w_2) p_D'(w_2,w_3) \leq \binom{n/2}{w_1}\frac{\binom{n-w_2}{w_1}\binom{w_2}{w_1}}{\binom{n}{2w_1}} \frac{\binom{w_3}{n}}{\binom{w_2}{n}}\ ,$$
            and we claim this expression decreases with $w_2$. To see this, note that $\binom{n-w_2}{w_1}$ is always increasing with $w_1$, as $w_1 \leq n/4$ and we can assume $w_2 \leq n/2$, so that always $w_1 \leq (n-w_2)/2$ (since both $p_A'$ and $p_D'$ are symmetric around $w_2=n/2$). Since $w_1$ is always on the left side of the top value of this binomial $\binom{n-w_2}{w_1}$, it follows that this binomial is decreasing with $w_2$. This means the only factor increasing with $w_2$ is $\binom{w_2}{w_1}$, but this is clearly dominated by the factor $1/\binom{w_2}{n}$. We can therefore conclude that we must set $w_2$ to its smallest value in order to maximize the entire expression, which means $w_2=\max\{w_1,w_3/2\}=w_1$, as $w_3 \leq h <\sqrt{n} \leq w_1$. This lets us bound the expression as follows
            $$\binom{n/2}{w_1}p_A'(2w_1,w_2) p_D'(w_2,w_3) \leq \binom{n/2}{w_1}\frac{\binom{n-w_1}{w_1}}{\binom{n}{2w_1}} \frac{\binom{n}{w_3}}{\binom{n}{w_1}} = \binom{n/2}{w_1}\frac{\binom{n-w_1}{w_1}\binom{n}{w_3}}{\binom{n}{w_1}\binom{n}{2w_1}} \ .$$
            This expression in turn is decreasing with $w_1$, as $\binom{n/2}{w_1} \leq \binom{n}{2w_1}$ and $\binom{n-w_1}{w_1} \leq \binom{n}{w_1}$. We should therefore set $w_1=\sqrt{n}$. It is also clear that the expression grows with $w_3$ until $h$, so that we should fix $w_3=h$. This gives us the final bound   
            $$\binom{n/2}{w_1}p_A'(2w_1,w_2) p_D'(w_2,w_3)
            \leq \binom{n/2}{\sqrt{n}}\frac{\binom{n-\sqrt{n}}{\sqrt{n}}\binom{n}{h}}{\binom{n}{2\sqrt{n}}\binom{n}{\sqrt{n}}} 
            \leq \left(\frac{en/2}{\sqrt{n}}\right)^{\sqrt{n}} \left(\frac{2\sqrt{n}}{n}\right)^{2\sqrt{n}} \left(\frac{n}{h}\right)^{h} 
            = \left(\frac{2e}{\sqrt{n}}\right)^{\sqrt{n}} \left(\frac{n}{h}\right)^{h} =n^{-\Omega(\sqrt{n})} \ .$$
            Note that we get the exact same bound when $n-h \leq w_3$, as in the final expression the only factor depending on $w_3$ was $\binom{n}{w_3}$, which is symmetric around $w_3=n/2$. This final bound is more than enough to overcome the polynomial factors arising from the union bound.

            Thus, when $\sqrt{n}\leq w_1$ we have shown that the expression we wish to bound is negligible in $n$. It remains to deal with $w_1 \leq \sqrt{n}$. Again, we have the four cases $w_3,w_5\leq h$ and opposite. Now, we claim first that when we have $n-h \leq w_3$, we can bound the expression as $\negl(n)$, regardless of the value of $w_5$. This thus deals with three of the cases above, after which we will only be left with $w_3, w_5 \leq h$. So, suppose that $n-h \leq w_3$. As just above, we simply throw away the last two rounds, and after that notice that \Cref{lem:w2_w4_n-h} then tells us that we should set $w_2=n/2$ to maximize the expression:
            $$\leq O(1) hn^2 \sum_{w_1=\sqrt{n}}^{n/4}\sum_{w_3=1}^{h}\binom{n/2}{w_1} p_A'(2w_1,n/2) p_D'(n/2,w_3) = O(1)\sum_{w_1=\sqrt{n}}^{n/4}\sum_{w_3=1}^{h} \frac{\binom{n/2}{w_1}^3}{\binom{n}{2w_1}} \frac{\binom{n}{w_3}}{\binom{n}{n/2}}\ .$$
            We can bound the expression in the sums as 
            $$\leq \binom{n}{h}\frac{\binom{n/2}{n/4}^3}{\binom{n}{n/2}^2} 
            = \binom{n}{h} \frac{(n/2)!^3 (n/2)^2(n-n/2)^2}{(n/4)!^3(n/2-n/4)!^3 n!^2}
            = \binom{n}{h} \frac{(n/2)!^7}{(n/4)!^6 n!^2}\ . $$
            Applying Stirling's approximation $e(n/e)^n\leq n! \leq en(n/e)^n$ gives us 
            $$ \leq n^h \frac{\left(\frac{en}{2}\left(\frac{n}{2e}\right)^{n/2}\right)^7}{\left(e\left(\frac{n}{4e}\right)^{n/4}\right)^6 \left(e\left(\frac{n}{e}\right)^{n}\right)^2} 
            = n^h \left(\frac{en}{2}\right)^7\frac{1}{e^6e^2} n^{n(7/2-6/4-2)} e^{n(7/2-6/4-2)} 2^{7n/2} 4^{6n/4} 
            =n^h \left(\frac{en}{2}\right)^7\frac{1}{e^6e^2} 2^{-n/2} \ .$$
            Clearly, this is dominated by $2^{-n}$, which beats even the super-polynomial factor $n^h$, so that this is $2^{-\Omega(n)}=\negl(n)$. This just leaves the final case: $w_3,w_5 \leq h$. We use the now familiar fact that $w_2$ and $w_4$ should be set to their smallest values. This lets us write:
            \begin{align*}
                &O(1)\sum_{w_1=h/2}^{\sqrt{n}}\sum_{w_2=\max\{w_1, w_3/2\}}^{n-\max\{w_1,w_3/2\}\}}\sum_{w_3=1}^{h}\sum_{w_4=\max\{w_3/2, w_5/2\}}^{n-\max\{w_3/2,w_5/2\}\}}\sum_{w_5=1}^{h} \binom{n/2}{w_1} p_A'(2w_1,w_2) p_D'(w_2,w_3) p_A'(w_3,w_4) p_D'(w_4,w_5) \\ 
                &\leq O(1)n^2 \sum_{w_1=h/2}^{\sqrt{n}}\sum_{w_3=1}^{h}\sum_{w_5=1}^{h} \binom{n/2}{w_1} p_A'(2w_1,\max\{w_1, w_3/2\}) p_D'(\max\{w_1, w_3/2\},w_3) \\ & \ \ \ \ \ \ \ \ \ \ \ \ \ \ \ \ \ \ \ \ \ \ \ \ \ \ \ \ \ \ \ \ \ \ \ \ \ \ \cdot p_A'(w_3,\max\{w_3/2, w_5/2\}) p_D'(\max\{w_3/2, w_5/2\},w_5) \\
                &= O(1)n^2 \sum_{w_1=h/2}^{\sqrt{n}}\sum_{w_3=1}^{h}\sum_{w_5=1}^{h} \binom{n/2}{w_1} p_A'(2w_1,w_1) p_D'(w_1,w_3) p_A'(w_3,\max\{w_3/2, w_5/2\}) p_D'(\max\{w_3/2, w_5/2\},w_5) \ .
            \end{align*}
            In the last step we used $w_2=\max\{w_1,w_3/2\}=w_1$, which is because $w_3/2 \leq h/2 \leq w_1$. To continue, we want to find the maximizing value of $w_1$, for which we look at only the factors involving $w_1$, where we bound $p_A'$ using \Cref{lem:A_D_claims}.10 bound $p_D'$ using \Cref{lem:A_D_claims}.11:
            \begin{align*}
                \binom{n/2}{w_1} p_A'(2w_1,w_1) p_D'(w_1,w_3) 
                \leq \left(\frac{2ew_1}{w_1}\right)^{w_1} \left(\frac{2e\sqrt{w_1n}}{w_3}\right)^{w_3} \left(\frac{w_1}{n}\right)^{w_1}
                = \left(\frac{2ew_1}{n}\right)^{w_1} \left(\frac{2e\sqrt{w_1n}}{w_3}\right)^{w_3} \ .
            \end{align*}
            We now claim that the above is decreasing with $w_1$. The factor $\sqrt{w_1}^{w_3}$ clearly grows with $w_1$, and its growth is biggest when $w_3=h$. But even then, that factor is $\sqrt{w_1}^{w_3} \leq n^{h/2}$, which will be overwhelmed by $(w_1/n)^{w_1} \geq n^{-\sqrt{n}/2}$. Note that this would be true for any $w_1 < n/(2e)$ (as this guarantees something of the form $2^{-\Omega(w_1)}$ which will still kill off $w_1^{h/2}$). It follows that we should make $w_1$ as small as possible, so that above becomes 
            $$ \leq\left(\frac{2ew_1}{n}\right)^{w_1} \left(\frac{2e\sqrt{w_1n}}{w_3}\right)^{w_3} \leq  \left(\frac{eh}{n}\right)^{h/2} \left(\frac{e\sqrt{2hn}}{w_3}\right)^{w_3} \ .$$
            Substituting the above back into our expression, we are left with:
            $$\leq O(1)n^{5/2} \sum_{w_1=h/2}^{\sqrt{n}}\sum_{w_3=1}^{h}\sum_{w_5=1}^{h} \left(\frac{eh}{n}\right)^{h/2} \left(\frac{e\sqrt{2hn}}{w_3}\right)^{w_3} p_A'(w_3,\max\{w_3/2, w_5/2\}) p_D'(\max\{w_3/2, w_5/2\},w_5) \ .$$
            We now throw away $p_D'$ by bounding it as $1$, and then bound $p_A'$ using \Cref{lem:A_D_claims}.10:
            $$ \leq O(1)n^{5/2} \sum_{w_1=h/2}^{\sqrt{n}}\sum_{w_3=1}^{h}\sum_{w_5=1}^{h} \left(\frac{eh}{n}\right)^{h/2}\left(\frac{e\sqrt{2hn}}{w_3}\right)^{w_3} \ . $$
            Simplifying a bit, the factors depending on $w_3$ end up looking like $\left(\frac{2e\sqrt{2h\max\{w_3/2, w_5/2\}}}{w_3} \right)^{w_3}$, and it is not hard to see that this fraction is always $>1$ so that this expression is increasing with $w_3$. We should thus set $w_3=h$ which gives us
            $$ \leq O(1)n^{5/2} \sum_{w_1=h/2}^{\sqrt{n}}\sum_{w_3=1}^{h}\sum_{w_5=1}^{h} \left(\frac{eh}{n}\right)^{h/2} \left(\frac{2e\sqrt{2hh/2}}{h} \right)^{h} \leq O(1)n^3h^2\left(\frac{2e^2\sqrt{eh}}{\sqrt{n}}\right)^{h} = n^{-\Omega(h)} = \negl(n) \ .$$
        \end{proof}

\section{Cryptographic application} \label{sec:crypto-application}

As mentioned in the introduction, such fast codes with fast good duals can be used to construct efficient secure multiparty computation (MPC) protocols for the problem of \emph{encryption matrix-vector product} (EMVP). In this section, we sketch this application, but refrain from providing precise definitions, preferring to refer to~\cite{benhamouda2025encrypted} where appropriate. 

Briefly, consider a client that wishes to delegate the task of computing matrix-vector products for a matrix $M$ to a server in the cloud. In an initial setup phase, the client takes the matrix $M$ and encrypts it into a matrix $\hat M$ (using a secret key), which is then sent to the server. Later, in the online phase, the client takes query vectors $q$ and sends encryptions $\hat q$ thereof to the server, keeping an associated secret key $q'$. The server then returns a value $M'$. The correctness requirement is that from $M'$, $q'$ and the initial secret key, the client learns the value of the matrix vector-product $Mq$. The security requirement is that the server learns nothing about the matrix $M$ or any of the query vectors. 

Benhamouda~et~al~\cite{benhamouda2025encrypted} suggest resolving this problem in the following way.\footnote{This, admittedly, simplifies certain details.} The first step is for the server to sample two secret, dual codes: one could use an $R(DA)^m$ code $\cC$ and its dual $R(AD)^m$ code $\cC^\perp$. The code $\cC^*$ will be used to encrypt the matrix $M$: one sets $\hat M := MG^*+R$, where $G^*$ is a generator matrix for $\cC^*$ and $R$ is a pseudorandom matrix (which serves to hide $M$). (One technical caveat is that the matrix $G^*$ must be systematic for this application; while we would like to be able to guarantee that this step can be done in linear time since $\cC^*$ admits a linear time encoding algorithm, we unfortunately cannot.)

Next, in the online phase, to mask the query $q$ one must use a uniformly random codeword $c \in \cC$. To sample such a codeword, one can sample a uniformly random $r \in \F_2^k$ and then output $rG$, where $G$ generates $\cC$. In particular, assuming $\cC$ admits linear time encoding this step can be done in linear time! 

Thus, we see that by finding self-dual codes with fast encoding we can speed up this algorithm, making our codes quite promising for this application. We remark that Benhamouda et al~\cite{benhamouda2025encrypted} already suggested using \emph{quasi-cyclic} codes to get the desired dual codes, which admit quasilinear encoding (but not truly linear, as we obtain). In fact, the authors write that ``one could potentially use other families of (dual) linear codes that admit fast encoding''~\cite[Section~3.1]{benhamouda2025encrypted}.

Of course, this entire discussion is moot if the resulting scheme is insecure. Naturally, the code we sample needs to satisfy a certain hardness assumption: namely, a variant of the \emph{learning subspace with noise (LSN)} assumption~\cite[Definition~3.1]{benhamouda2025encrypted} must hold, which informally states that if one can obtain random codewords from $\cC$ which, with probability $\mu \in (0,1)$, are rerandomized (i.e., one obtains a uniformly random vector a $\mu$-fraction of the time), it is hard to recover the secret code $\cC$. One can ask whether a secret $R(AD)^m$ (or $R(DA)^m$) code can be reasonably expected to satisfy this LSN assumption. For the close cousin \emph{learning parity with noise (LPN)} assumption -- where essentially the task is to decode -- the general consensus is that unless a code displays notable ``algebraic structure,'' the best general strategy is to look for low-weight dual codewords to hopefully recover information about the codeword untainted by the noise (see, e.g., \cite[Section~3]{couteau2021silver}). But, note that we have explicitly argued that $\cC$ \emph{does not} have any light dual codewords: $\cC^\perp$ (quite likely) approaches the GV bound!

Of course, this is merely a sketch of this potential application. Nonetheless, we believe it is emblematic of the principle that such codes with fast encoding and non-trivial dual properties can be useful for cryptography, especially when targetting extremely efficient implementations. We anticipate there to be many further applications in the future. 

\bibliography{refs}

\appendix

\section{Deferred proofs}\label{sec:deferred-proofs}

    In this appendix, we provide proofs that were deferred from the main body of the text.  
    \fafdclaims*
    \begin{proof}
        \begin{enumerate}
            \item Recall that $f_A(\alpha,\beta) = \alpha - h(\beta) + (1-\alpha)h\left(\frac{\beta-\alpha/2}{1-\alpha}\right)$, and that $h(p)$ is only defined for $p\in[0,1]$. The last term thus tells us that $\frac{\beta-\alpha/2}{1-\alpha} \in [0,1]$. In particular, we must then have  $0 \leq \frac{\beta-\alpha/2}{1-\alpha} \iff 0 \leq \beta - \alpha/2 \iff \alpha \leq 2\beta$. Similarly, we must have $\frac{\beta-\alpha/2}{1-\alpha} \leq 1 \iff \beta-\alpha/2 \leq 1-\alpha \iff \alpha \leq 2(1-\beta)$, as required. 
            \item Recall that $f_{D}(\alpha,\beta) = (1-\alpha )h\left(\frac{ \beta}{2(1-\alpha)}\right)+ \alpha h\left(\frac{ \beta}{2\alpha }\right) - h(\alpha)$. The first term then tells us that $\frac{ \beta}{2(1-\alpha)} \leq 1 \iff \beta \leq 2(1-\alpha)$. Similarly, the second term tells us that $\frac{ \beta}{2\alpha } \leq 1 \iff \beta \leq 2\alpha$.
            \item Note that $\frac{\partial f_A}{\partial \alpha}(\alpha,\beta) = \log \left(\frac{\sqrt{2\beta-\alpha}\sqrt{2(1-\beta)-\alpha}}{1-\alpha}\right)$. This partial derivative is non-positive if and only if $(2\beta-\alpha)(2(1-\beta)-\alpha) \leq (1-\alpha)^2$. We claim this inequality holds for all $\beta \in [0,1]$, with equality if and only if $\beta=1/2$. Indeed, by expanding the expressions and cancelling like terms we find that the above is equivalent to $4\beta(1-\beta) \leq 1$, which can be seen to hold for all $\beta \in [0,1]$ with equality if and only if $\beta=1/2$ by expanding the (obvious) inequality $(1-2\beta)^2 \geq 0$.
            \item See Claim~9.9.1 in \cite{blaze}.
            \item Note that we can rewrite $f_A(\alpha,\beta) = (1-\beta )h\left(\frac{ \alpha}{2(1-\beta)}\right)+ \beta h\left(\frac{ \alpha}{2\beta }\right) - h(\alpha)$; we take the first two terms of $f_D(\alpha,\beta)$ and swap the two variables, which we can do as $A$ and $D$ are inverse operations. The claim we need to prove is then equivalent to showing that $(1-\beta )h\left(\frac{ \alpha}{2(1-\beta)}\right)+ \beta h\left(\frac{ \alpha}{2\beta }\right) \leq h(\beta)$. It is not hard to see that the left hand side is maximized when $\alpha = 2\beta(1-\beta)$, so that we can upper bound the left hand side as $(1-\beta )h\left(\frac{2\beta(1-\beta)}{2(1-\beta)}\right)+ \beta h\left(\frac{2\beta(1-\beta)}{2\beta }\right) = (1-\beta)h(\beta)+\beta h(1-\beta)=h(\beta)$.

            To establish that $\alpha=2\beta(1-\beta)$ is the maximizer, we differentiate with respect to $\alpha$, obtaining
            \[
                \frac12 \log\left(\frac{(2(1-\beta)-\alpha)(2\beta-\alpha)}{\alpha^2}\right) \ ,
            \]
            which equals $0$ iff $\alpha^2 = 2\beta(1-\beta)$. That this is a maximum follows by taking the second derivative, which is 
            \[
                \frac{1}{\alpha\ln2}\frac{\alpha-4(1-\beta)\beta}{(2\beta-\alpha)(2(1-\beta)-\alpha)} \ ,
            \]
            which is negative on the relevant domain. Indeed, the proof of \Cref{lem:max-beta=2gamma(1-gamma)} establishes this (upon replacing $\alpha$ by $\beta$ and $\beta$ by $\gamma$). 
            
            \item This is almost equivalent to the previous claim. Here, we need to show that $(1-\beta )h\left(\frac{ 1-\alpha}{2(1-\beta)}\right)+ \beta h\left(\frac{ 1-\alpha}{2\beta }\right) \leq h(\beta)$. Here, an analogous argument shows the maximizing value of $\alpha$ is $1-2\beta(1-\beta)$, so that the $1-$ cancel out, and we are left with the same upper bound of $h(\beta)$. 
            \item This is equivalent to showing that $(1-\alpha)h\left(\frac{ \beta}{2(1-\alpha)}\right)+ \alpha h\left(\frac{ \beta}{2\alpha }\right) \leq h(\beta)$. Note that this is an equality in case $\alpha=1/2$. We now claim that moving $\alpha$ away from $1/2$ decreases the left hand side, so that the claimed inequality is always true. To see this, note that the derivative of the left hand side to $\alpha$ is $\log \left(1-\frac{\beta}{2(1-\alpha)}\right)-\log \left(1-\frac{\beta}{2 \alpha}\right)$ which is positive if and only if $1-\frac{\beta}{2(1-\alpha)} > 1-\frac{\beta}{2 \alpha}$ which is the same as $\frac{\beta}{2(1-\alpha)}<\frac{\beta}{2 \alpha}$ which in turn can be rewritten to $2\alpha<2(1-\alpha)$ which is equivalent to $\alpha < 1/2$. In other words, it is indeed true that the expression is maximized at $\alpha=1/2$.
        \end{enumerate}
    \end{proof}

    \hfa*
    \begin{proof}
        Recall that $f_A(\alpha,\beta) = \alpha - h(\beta) + (1-\alpha)h\left(\frac{\beta-\alpha/2}{1-\alpha}\right)$. Let us write $\beta:=2\alpha(1-\alpha)$ to simplify notation. We can then write out the left hand side as:
        $$ h(\beta) + f_A(2\beta,\alpha) =  h(\beta) + \beta -h(\alpha) + (1-\beta)h\left(\frac{\alpha-\beta/2}{1-\beta}\right) \ .$$
        We simplify the last term as follows:
        \begin{align*}
            &(1-\beta)h\left(\frac{\alpha-\beta/2}{1-\beta}\right) \\
            &= (1-\beta)h\left(\frac{\alpha^2}{1-\beta}\right) \\
            &= (1-\beta) \left(-\frac{\alpha^2}{1-\beta}\log\left(\frac{\alpha^2}{1-\beta}\right)  -\left(1-\frac{\alpha^2}{1-\beta}\right)\log\left(1-\frac{\alpha^2}{1-\beta}\right) \right)    \\
            &= (1-\beta) \left(-\frac{\alpha^2}{1-\beta}\log\left(\frac{\alpha^2}{1-\beta}\right)  - \frac{1-\beta-\alpha^2}{1-\beta}\log\left(\frac{1-\beta-\alpha^2}{1-\beta}\right) \right)    \\
            &= -\alpha^2 \log\left(\frac{\alpha^2}{1-\beta}\right) -(1 - \beta - \alpha^2)\log\left(\frac{1-\beta-\alpha^2}{1-\beta}\right) \\
            &= (\alpha^2 + 1-\beta-\alpha^2)\log(1-\beta) - \alpha^2 \log(\alpha^2) -(1-\beta-\alpha^2)\log(1-\beta-\alpha^2) \\
            &= (1-\beta)\log(1-\beta) - \alpha^2 \log(\alpha^2) -(1-\alpha)^2\log((1-\alpha)^2) \\
            &= (1-\beta)\log(1-\beta) - 2\alpha^2 \log(\alpha) -2(1-\alpha)^2\log(1-\alpha) \\
        \end{align*}
        Plugging this back in gives us:
        \begin{align*}
            &h(\beta) + f_A(2\beta,\alpha) \\
            &= -h(\alpha) + h(\beta) + \beta + (1-\beta)h\left(\frac{\alpha-\beta/2}{1-\beta}\right) \\
            &= -h(\alpha) -\beta\log(\beta) - (1-\beta) \log (1-\beta) + \beta + (1-\beta)\log(1-\beta) - 2\alpha^2 \log(\alpha) -2(1-\alpha)^2\log(1-\alpha) \\
            &= -h(\alpha) -\beta\log(\beta) + \beta - 2\alpha^2 \log(\alpha) -2(1-\alpha)^2\log(1-\alpha) \\
            &= -h(\alpha) -(2\alpha-2\alpha^2)\log(\beta) + 2\alpha - \alpha^2 - 2\alpha^2 \log(\alpha) -2(1-\alpha)^2\log(1-\alpha) \\
            &= -h(\alpha) + 2\alpha(1-\log(\beta)) + 2\alpha^2(\log(\beta)-1-\log(\alpha)) - 2(1-\alpha)^2\log(1-\alpha) \\
            &= -h(\alpha) + 2\alpha(\log(2)-\log(\beta)) + 2\alpha^2(\log(\beta)-\log(2)-\log(\alpha)) - 2(1-\alpha)^2\log(1-\alpha) \\
            &= -h(\alpha) - 2\alpha \log(\alpha(1-\alpha)) + 2\alpha^2\log(1-\alpha) - 2(1-\alpha)^2\log(1-\alpha) \\
            &= -h(\alpha) - 2\alpha\log(\alpha) + (2\alpha^2-2\alpha)\log(1-\alpha) - 2(1-\alpha)^2\log(1-\alpha)  \\
            &= -h(\alpha) - 2\alpha\log(\alpha) - 2\alpha(1-\alpha)\log(1-\alpha) - 2(1-\alpha)^2\log(1-\alpha)  \\
            &= -h(\alpha) - 2\alpha\log(\alpha) - (1-\alpha)\log(1-\alpha)  - 2(1-\alpha)^2\log(1-\alpha) +2(1-\alpha)\log(1-\alpha)(1-\alpha) \\
            &= -h(\alpha) + 2h(\alpha) = h(\alpha) \ , 
        \end{align*}
        where we note that that one-to-last equality follows because $-2\alpha(1-\alpha)\log(1-\alpha)$ can be rewritten $-2\alpha(1-\alpha)\log(1-\alpha) = -2(1-\alpha)\log(1-\alpha) + z$ where $z=-2\alpha(1-\alpha)\log(1-\alpha) + 2(1-\alpha)\log(1-\alpha) = 2(1-\alpha) \log(1-\alpha)(1-\alpha)$.
    \end{proof}

    \adclaims*
    \begin{proof}
        \begin{enumerate}
            \item Immediate from the definitions of $p_A$ and $p_A'$.
            \item Immediate from the definitions of $p_D$ and $p_D'$.
            \item Well-established inequality; this follows from the following strong version of Stirling's inequality: $\sqrt{2\pi k}\left(\frac{k}{e}\right)^k e^{1/(12k + 1)} \leq k! \leq \sqrt{2\pi k}\left(\frac{k}{e}\right)^k e^{1/(12k)}$.
            \item Immediate from the definition of $p_D'$.
            \item Immediate from the definition of $p_A'$.
            \item  We need to show that $\binom{n-a}{\lfloor b/2 \rfloor} \binom{a-1}{\lceil b/2 \rceil-1} \leq \binom{n-a}{b/2}\binom{a}{b/2}$. Recall that we let the binomial coefficients have non-integer values, specifically values halfway between two integers. Recall that this is defined as follows: $(n-1/2)! = \Gamma(n+1/2) = \frac{(2n)! \sqrt{\pi}}{4^nn!}$. Now, the inequality we wish to prove is clearly true for even $b$, as the floor and ceiling just disappear. For odd $b$, the statement is unfortunately not true. For instance, say that $n=12$, $a=7$. Then the statement is true for $b\leq 8$, but not for $b=9$:
            $$\binom{12-7}{\lfloor 9/2 \rfloor} \binom{7-1}{\lceil 9/2 \rceil-1} \leq \binom{12-7}{9/2}\binom{7}{9/2} \iff \binom{5}{4}\binom{6}{4} \leq \binom{5}{4.5}\binom{7}{4.5} \iff 75 \leq \frac{2097152}{2835} \leq 74.951 \ .$$
            The best we can hope for is that the upper bound is true up to some constant factor. Let's try to prove that. We will need the following fact, which we will prove below.
    
            \begin{claim}\label{lem:factorial_half}
                For all positive natural numbers $n$, we have:
                \begin{enumerate}
                    \item $\dfrac{(n+1/2)!}{n!} \leq \dfrac{4}{e} \sqrt{n} \leq 1.5\sqrt{n}$,
                    \item $\dfrac{(n-1/2)!}{n!} \leq \dfrac{1}{\sqrt{n}}$.
                \end{enumerate}
            \end{claim}
            \begin{proof}
                Recall that that $(n-1/2)! = \Gamma(n+1/2) = \frac{(2n)! \sqrt{\pi}}{4^nn!}$. This also means that $(n+1/2)! = \Gamma(n+1+1/2) = \frac{(2(n+1)! \sqrt{\pi}}{4^{n+1}(n+1)!}$. To be able to upper bound the ratio's, we will write out the factorials using Stirling's approximation as stated in \Cref{lem:A_D_claims}.3:
                 \begin{align*}
                    \frac{(n+1/2)!}{n!} 
                    &= \frac{\Gamma(n+1+1/2)}{n!} 
                    = \frac{\frac{(2(n+1))!\sqrt{\pi}}{4^{n+1}(n+1)!}}{n!} 
                    = \frac{(2(n+1))!\sqrt{\pi}}{4^{n+1}(n+1)!n!} \\
                    &\leq \frac{\sqrt{2\pi(2(n+1))}\left(\frac{2(n+1)}{e}\right)^{2(n+1)} e^{1/(12(2(n+1)))} \sqrt{\pi}}{\sqrt{2\pi n}\left(\frac{n}{e}\right)^n e^{1/(12n+1)} \sqrt{2\pi (n+1)}\left(\frac{n+1}{e}\right)^{n+1} e^{1/(12(n+1)+1)} 4^{n+1}} \\
                    &= \frac{\sqrt{2\pi(2(n+1))}\sqrt{\pi}}{\sqrt{2\pi n}\sqrt{2\pi (n+1)}} \frac{e^{1/(12(2(n+1)))}}{ e^{1/(12n+1)} e^{1/(12(n+1)+1)}} \frac{\left(\frac{2(n+1)}{e}\right)^{2(n+1)} }{\left(\frac{n}{e}\right)^n \left(\frac{n+1}{e}\right)^{n+1}  4^{n+1}} \\
                    &\leq \frac{\sqrt{2\pi(2(n+1))}\sqrt{\pi}}{\sqrt{2\pi n}\sqrt{2\pi (n+1)}}  \frac{\left(\frac{2(n+1)}{e}\right)^{2(n+1)} }{\left(\frac{n}{e}\right)^n \left(\frac{n+1}{e}\right)^{n+1}  4^{n+1}} \\
                    &= \frac1{\sqrt{n}} \left(\frac{4(n+1)^2}{e^2} \frac{e}{n+1} \frac{1}{4} \frac{e}{n}\right)^{n+1} \frac{n}{e} \\
                    &= \frac{\sqrt{n}}{e} \left(\frac{n+1}{n}\right)^{n+1}  \\
                    &\leq \frac{4}{e} \sqrt{n} \ .
                \end{align*}
                Note that the first inequality follows from applying Stirling's approximation to all the factorials. The second inequality follows because the $\exp$ factor has a negative exponent for all $n$, so that it is $\leq 1$ for all $n$. The last inequality follows since $\left(\frac{n+1}{n}\right)^{n+1}$ converges to $e$ from above with $n$, so that we can use the value at $n=1$ to get an inequality that holds for all positive natural numbers $n$. Note that this means that the above converges to $\sqrt{n}$ in the limit.
            
                We can deal with the second ratio analogously:
                 \begin{align*}
                    \frac{(n-1/2)!}{n!} 
                    &= \frac{\Gamma(n+1/2)}{n!} 
                    = \frac{(2n)!\sqrt{\pi}}{n!n!4^n} \\ 
                    &\leq \frac{\sqrt{2\pi (2n)}\left(\frac{2n}{e}\right)^{2n} e^{1/(12(2n))} \sqrt{\pi} }{\left(\sqrt{2\pi n}\left(\frac{n}{e}\right)^n e^{1/(12n+1)}\right)^2 4^n} \\
                    &= \frac{\sqrt{2\pi (2n)}\sqrt{\pi}}{\sqrt{2\pi n}^2} \frac{e^{1/(24n)}}{e^{2/(12n+1)}} \left(\frac{2n}{e} \frac{e}{n} \frac{1}{2}\right)^{2n} \\ 
                    &\leq \frac{\sqrt{2\pi (2n)}\sqrt{\pi}}{\sqrt{2\pi n}^2} \left(\frac{2n}{e} \frac{e}{n} \frac{1}{2}\right)^{2n} \\ 
                    &= \frac{1}{\sqrt{n}} \\
                \end{align*}
                Note again that the first inequality follows from applying Stirling's approximation to all the factorials. The second inequality again follows because the $\exp$ factor has a negative exponent for all $n$, so that it is $\leq 1$ for all $n$. Note that this means that the above again converges to $1/\sqrt{n}$ in the limit.
            \end{proof}
            With this lemma in hand, we can try to do what we set out to. 
            \begin{align*}
                &\binom{n-a}{\lfloor b/2 \rfloor} \binom{a-1}{\lceil b/2 \rceil-1} \leq K \binom{n-a}{b/2}\binom{a}{b/2} \\
                &\iff \binom{n-a}{\frac{b-1}2} \binom{a-1}{\frac{b-1}2} \leq K \binom{n-a}{b/2}\binom{a}{b/2} \\
                &\iff \frac{(n-a)!}{(\frac{b-1}2)!(n-a-\frac{b-1}2)!} \frac{(a-1)!}{(\frac{b-1}2)!(a-1-\frac{b-1}2)!} \frac{(b/2)!(n-a-b/2)!}{(n-a)!} \frac{(b/2)!(a-b/2)!}{a!} \leq K \\
                &\iff \left(\frac{(b/2)!}{(\frac{b-1}2)!}\right)^2  \frac{(a-1)!}{a!}\frac{(n-a-b/2)!}{(n-a-\frac{b-1}2)!} \frac{(a-b/2)!}{(a-1-\frac{b-1}2)!} \leq K \\
                &\Longleftarrow \left(\frac{4}{e} \sqrt{\frac{b-1}2}\right)^2 \frac1a \frac{1}{\sqrt{n-a-\frac{b-1}2}} \frac{1}{\sqrt{a-1-\frac{b-1}2}} \leq K \\
                &\iff \frac{16}{e^2K} \frac{b-1}2 \leq a \sqrt{n-a-\frac{b-1}2} \sqrt{a-1-\frac{b-1}2} \leq a \ .
            \end{align*}
            Looking at the binomials tells us that we must have $\frac{b-1}2 \leq a-1$. Up to the constant factors, the above tells us that $\frac{b-1}2 \leq a$. Thus, so long as the constant is $\leq 1$, the above is true. That means we must have $\frac{16}{e^2K} \leq 1$, which means that $\frac{16}{e^2} \leq K$, so that it suffices to set $K=2.17$. 

            \item This is analogous to the previous claim, as the numerator in $p_A'(a,b)$ and $p_D'(a,b)$ are the same, just with $a$ and $b$ swapped. It follows that the previous result carries over to $p_A'$.

            \item We consider the ratio $p_A'(a,b)/p_A'(a,b+1)$, which is
            \[
                \frac{\frac{\binom{n-(b+1)}{a/2}\binom{b+1}{a/2}}{\binom{n}{a}}}{\frac{\binom{n-b}{a/2}\binom{b}{a/2}}{\binom{n}{a}}} = \frac{\binom{n-b-1}{a/2}}{\binom{n-b}{a/2}}\frac{\binom{b+1}{a/2}}{\binom{b}{a/2}} \ .
            \]
            We compute
            \[
                \frac{\binom{n-b-1}{a/2}}{\binom{n-b}{a/2}} = \frac{\frac{(n-b-1)!}{(a/2)!(n-b-1-a/2)!}}{\frac{(n-b)!}{(a/2)!(n-b-a/2)!}} = \frac{n-b-a/2}{n-b}
            \]
            and
            \[
                \frac{\binom{b+1}{a/2}}{\binom{b}{a/2}} = \frac{\frac{(b+1)!}{(a/2)!(b+1-a/2)!}}{\frac{b!}{(a/2)!(b-a/2)!}} = \frac{b+1}{b+1-a/2} \ .
            \]
            Thus, 
            \[
                \frac{p_A'(a,b)}{p_A'(a,b+1)} =\frac{n-b-a/2}{n-b} \cdot \frac{b+1}{b+1-a/2} \ .
            \]
            Now, we observe that this ratio is $> 1$ iff
            \begin{align*}
                (n-b-a/2)(b+1) > (n-b)(b+1-a/2) &\iff (n-b)(b+1) - (a/2)(b+1) > (n-b)(b+1)-(n-b)(a/2) \\
                &\iff 0 > (a/2)(2b+1-n) \ .
            \end{align*}
            We see this inequality is satisfied iff $2b+1< n$. That is, the ratio is $> 1$ for all $b \in \{1,\dots,n/2-1\}$, and then is $< 1$ for all $b \in \{n/2,\dots,n-1\}$, as required. 

            \item To show that $p_A'(a,b)$ is maximized over $a$ by some $a \leq n/2$, we will show that $p_A'(a,b) \geq p_A'(n-a,b)$ whenever both are defined. Note that this occurs when $(n-a)/2\leq b \leq (n+a)/2$. Now, we note here that the domain of the latter is in general smaller than the domain of the former. In particular, $p_A'(a,b)$ will admit values $a \leq n-2b$, while $p_A(n-a,b)$ does not. But of course, if we show $p_A(a,b)$ to be bigger than $p_A(n-a,b)$ across the shared domain, then any additional domain for the former function could only increase its maximizing value further. Proving this claim thus suffices. Let us then consider the ratio $\frac{p_A'(a,b)}{p_A'(n-a,b)}$, and look to establish that it is $\ge 1$. This suffices to establish the claim.  

            $$\frac{p_A'(a,b)}{p_A'(n-a,b)} 
            = \frac{\frac{\binom{n-b}{a/2}\binom{b}{a/2}}{\binom{n}{a}}}{\frac{\binom{n-b}{\frac{n-a}{2}}\binom{b}{\frac{n-a}{2}}}{\binom{n}{a}}} 
            = \frac{\binom{n}{a}}{\binom{n}{a}} \frac{\binom{n-b}{a/2}}{\binom{n-b}{\frac{n-a}{2}}}\frac{\binom{b}{a/2}}{\binom{b}{\frac{n-a}{2}}} 
            =  \frac{\binom{n-b}{a/2}}{\binom{n-b}{\frac{n-a}{2}}}\frac{\binom{b}{a/2}}{\binom{b}{\frac{n-a}{2}}} \ .$$
            The first ratio becomes:
            $$\frac{\binom{n-b}{a/2}}{\binom{n-b}{\frac{n-a}{2}}} = \frac{(n-b-\frac{n-a}{2})!(\frac{n-a}{2})!}{(n-b-a/2)!(a/2)!} = \frac{(n/2-b +a/2)!(\frac{n-a}{2})!}{(n-b-a/2)!(a/2)!}$$
            The second ratio becomes:
            $$\frac{\binom{b}{a/2}}{\binom{b}{\frac{n-a}{2}}} = \frac{(b-\frac{n-a}{2})!(\frac{n-a}{2})!}{(b-a/2)!(a/2)!} = \frac{(b-n/2+a/2)!(\frac{n-a}{2})!}{(b-a/2)!(a/2)!}$$

            So the product of the two terms 
            \[
                \frac{((\frac{n-a}2)!)^2\cdot(n/2-b+a/2)!\cdot (b+a/2-n/2)!}{((a/2)!)^2 \cdot (n-b-a/2)!(b-a/2)!} \ .
            \]
            We first note this value is $1$ if $b=n/2$. We now consider the above as a function of $b$ -- call it $s(b)$ -- and show that the ratio increases if we move $b$ away from $n/2$. This suffices to establish the claim. We have 
            \begin{align*}
                \frac{s(b+1)}{s(b)} 
                &= \frac{\frac{((\frac{n-a}2)!)^2}{((a/2)!)^2}}{\frac{((\frac{n-a}2)!)^2}{((a/2)!)^2}} \frac{\frac{(n/2-(b+1)+a/2)!\cdot ((b+1)+a/2-n/2)!}{(n-(b+1)-a/2)!((b+1)-a/2)!}}{\frac{(n/2-b+a/2)!\cdot (b+a/2-n/2)!}{(((n-b-a/2)!(b-a/2)!}} \\
                &=\frac{(n/2-(b+1)+a/2)!}{(n/2-b+a/2)!} \frac{((b+1)+a/2-n/2)!}{(b+a/2-n/2)!} \frac{(n-b-a/2)!}{(n-(b+1)-a/2)!} \frac{(b-a/2)!}{(b+1)-a/2)!}  \\
                &= \frac{(b+1+a/2-n/2)(n-b-a/2)}{(n/2-b+a/2)(b+1-a/2)} \ .
            \end{align*}
            We will establish the above is $\leq 1$ iff $b<n/2$. This is the same as saying
            $$(b+1+a/2-n/2)(n-b-a/2) \geq (n/2-b+a/2)(b+1-a/2)$$
            which is the same as 
            \begin{align*}
                &\qquad\quad(b+1-n/2+a/2)(n-b-a/2) \geq (n/2-b+a/2)(b+1-a/2) \\
                &\iff (b+1-n/2)(n-b)+(a/2)(n-b) - (b+1-n/2)(a/2) - a^2/4 \\
                &\qquad\qquad\qquad\qquad\geq (n/2-b)(b+1) +(a/2)(b+1)-(a/2)(n/2-b)-a^2/4\\
                &\iff (b+1-n/2)(n-b)-(n/2-b)(b+1) \geq (a/2)(b+1-n/2+b+1-(n/2-b)-(n-b)) \\
                &\iff \frac{n}{2}(2b+1-n) \geq a(2b+1-n) \ .
            \end{align*}
            So, if $b\geq n/2$ the inequality holds as $a \leq \frac{n}{2}$, and the reverse holds when $b\leq n/2$. 
            
            \item We start by noting that $p_A(a,b) \leq  \left(\frac{n-b}{n}\right)^{\lfloor a/2 \rfloor} \left(\frac{b}{n}\right)^{\lceil a/2 \rceil} 2^a \frac{\lceil a/2\rceil}{b}$, which was proven as Corollary 3.4.2 in \cite{pfister2003capacity}. Note that we wish to upper bound $p_A'$, not $p_A$. There are two differences between these. First, the $-1$'s in $p_A$ give rise to the factor $\frac{\lceil a/2\rceil}{b}$, so we should remove this factor. Second, the floors and ceilings are gone in $p_A'$, so we should remove those. Note also that $\left(\frac{n-b}{n}\right)^{\lfloor a/2 \rfloor} \leq 1$, so that we can remove it. Removing these three things gives us the final bound: $p_A'(a,b) \leq \left(\frac{b}{n}\right)^{a/2} 2^a = \left(\frac{2\sqrt{b}}{\sqrt{n}}\right)^{a}$.
            
            \item Recall that $p_D' = \frac{\binom{n-a}{b/2}\binom{a}{b/2}}{\binom{n}{a}}$. To get the final bound, we rewrite the numerator using \Cref{lem:A_D_claims}.1: 
            $$\binom{n-a}{b/2}\binom{a}{\lceil b/2 \rceil} 
            \leq \frac{\left(\frac{en}{b/2}\right)^{b/2} \left(\frac{ea}{b/2}\right)^{b/2}}{2\pi \sqrt{b/2 \cdot b/2}} = \frac{1}{\pi b}\left(\frac{2e\sqrt{an}}{b}\right)^{b}\ .$$
        \end{enumerate}
    \end{proof}

    \wwh*
    \begin{proof}
        Firstly, note that the function $b \mapsto p_A'(a,b) p_D'(b,c)$ is symmetric about $n/2$. Hence, we restrict attention to $b \leq n/2$. We wish to show the function $b \mapsto p_A'(a,b)p_D'(b,c)$ is decreasing in $b$ on the domain $\{\max\{a/2,c/2\},\dots,n/2\}$. For $\max\{a/2,c/2\} \leq b\leq n/2-1$, we consider the ratio
        \[
            \frac{p_A'(a,b+1) \cdot p_D'(b+1,c)}{p_A'(a,b) \cdot p_D'(b,c)}
        \]
        and establish that it is less than $1$. By comparing the binomial coefficients appearing in these probabilities, one finds this ratio to be 
        \begin{align}
            \frac{(b+1)^3(n-b-a/2)(n-b-c/2)}{(n-b)^3(b+1-a/2)(b+1-c/2)} \ . \label{eq:ratio-of-interest}
        \end{align}
        We now set $x = b+1$ and $y=n-b$, so we have $x < y$. Continuing from the above:
        \begin{align*}
            1 \geq \frac{x^3(y-a/2)(y-c/2)}{y^3(x-a/2)(x-c/2)} &= \frac{x^3y^2\left(1-\frac{a+c}{2y}+\frac{ac}{4y^2}\right)}{y^3x^2\left(1-\frac{a+c}{2x}+\frac{ac}{4x^2}\right)} = \frac{x\left(1-\frac{a+c}{2y}+\frac{ac}{4y^2}\right)}{y\left(1-\frac{a+c}{2x}+\frac{ac}{4x^2}\right)} = \frac{x-\frac{x(a+c)}{2y}+\frac{xac}{4y}}{y-\frac{y(a+c)}{2x}+\frac{yac}{4x}} \ .
        \end{align*}
        We rearrange the above and continue:
        \begin{align*}
            & x - \frac{x(a+c)}{2y}+\frac{xac}{4y} \leq y-\frac{y(a+c)}{2x}+\frac{yac}{4x} \\
            \iff & 0 \leq (y-x) - \left(\frac yx - \frac xy\right)\frac{a+c}2 + \left(\frac{y}{x^2}-\frac{x}{y^2}\right)\frac{ac}4\\
            \iff & 0 \leq (y-x) - \frac{y^2-x^2}{yx}\cdot \frac{a+c}2 + \frac{y^3-x^3}{y^2x^2}\cdot\frac{ac}{4} \\
            \iff & 0 \leq (y-x)\left(1 - \frac{y+x}{yx}\cdot\frac{a+c}{2} + \frac{y^2+xy+x^2}{y^2x^2}\cdot\frac{ac}4\right) \ .
        \end{align*}
        Now, by dividing by $y-x$ (which is positive if and only if $b < n/2$), we find the above inequality is valid if and only if
        \begin{align}
            \iff & 0 \leq 1 - \frac{y+x}{yx}\cdot\frac{a+c}{2} + \frac{y^2+xy+x^2}{y^2x^2}\cdot\frac{ac}4 \nonumber \\
            \iff & 0 \leq \frac{4y^2x^2-2xy(y+x)(a+c)+(y^2+xy+x^2)ac}{4y^2x^2}\nonumber \\
            \iff & 0 \leq 4y^2x^2-2xy(y+x)(a+c)+(y^2+xy+x^2)ac \nonumber \\
            \iff & 2xy(y+x)(a+c) \leq 4y^2x^2 +(y^2+xy+x^2)ac \ . \label{eq:next-ineq}
        \end{align}
        We now turn to establishing this last inequality. Recall $x = b+1 > \max\{a/2,c/2\}$; noting that $a$ and $c$ appear identically in the above expression, without loss of generality we may assume $a \leq c$. Let us assume first that $c \leq 0.9x$. Thus, $a+c \leq 2c \leq 1.8x$, and so $2(a+c) \leq 3.6 x$. Thus, it follows that 
        \[
            2xy^2(a+c) \leq 3.6y^2x^2 \ .
        \]
        Now, recalling $\max\{a, c\} = o(n)$ it follows that $\frac2{0.4}(a+c) \leq y$ for large enough $n$ (recall $y = n-b > n/2$), and so 
        \[
            2x^2y(a+c) \leq 0.4y^2x^2 \ .
        \]
        This establishes this last inequality when $x$ is large enough. We now turn to the case when $c \geq 0.9x$. Recall that we still have $c \leq 2x$, and so $2(a+c) \leq 4x$. Hence, the LHS is at most
        \[
            4x^2y(x+y) = 4x^3y + 4x^2y^2 \ .
        \]
        Additionally, by assumption $ac \geq 0.9 \cdot 0.5x^2 = 0.45 x^2$. Hence, the RHS is at least 
        \[
            4y^2x^2 + (y^2+xy+x^2) \cdot 0.45 x^2 = 4x^2y^2 + 0.45 x^2y^2 + 0.45 x^3y + 0.45 x^4 \ .
        \]
        Removing the $4y^2x^2$ terms and dropping the $ 0.45 x^3y + 0.45 x^4$, it suffices to establish that
        \[
            4x^3y \leq 0.45 x^2y^2 \iff y \geq \frac4{0.45}x \ .
        \]
        Recalling $x \leq c/0.9$ and $c = o(n)$ (and hence $o(y)$), it follows that for large enough $n$ (and hence $y$) this last inequality holds. 
    \end{proof}

    \wwnh*
    \begin{proof}
        Let $x$ and $y$ be as in the previous proof, and note that we now wish to show the ratio 
        \[
            \frac{x-\frac{x(a+c)}{2y}+\frac{xac}{4y}}{y-\frac{y(a+c)}{2x}+\frac{yac}{4x}} = \begin{cases}
                < 1 & x < y \text{, i.e., } b<n/2\\
                > 1 & x > y \text{, i.e., } b\geq n/2
            \end{cases} \ .
        \]
        To establish this, following the series of inequalities from the previous proof, and recalling that dividing by $y-x$ requires us to change the direction of the inequality if $y<x$, we find that we what we want to show is equivalent to showing
        \begin{align}
            2xy(x+y)(a+c) > 4y^2x^2+(y^2+xy+x^2)ac \iff 2(x+y)(a+c) > 4xy + \left(\frac{x^2+y^2}{xy}+1\right)ac\ . \label{eq:ineq-now-reversed}
        \end{align}
        We consider the case that $a = n-o(n)$ and $c = o(n)$ (the case where $a=o(n)$ and $c=n-o(n)$ can be handled similarly). Write $a=n-h_1$ and $c=h_2$ where $\max\{h_1,h_2\} = o(n)$. Since $\max\{a/2,c/2\} \leq b \leq n-\max\{a/2,c/2\}-1$, it follows that $\frac{n-h_1}{2} \leq b \leq \frac{n+h_1}{2}-1$. Thus, $\frac{n-h_1}{2}+1 \leq x \leq \frac{n+h_1}{2}$ and $\frac{n-h_1}{2}+1\leq y \leq \frac{n-h_1}{2}$. 
    
        Now, the expression $\frac{x^2+y^2}{xy} = \frac xy + \frac yx$, subject to the constraint that $x+y=n+1$, is maximized when $|x-y|$ is as large as possible. Indeed, WLOG $x>y$ and making the substitution $t=\frac{x}{y}$, we consider the function $f(t) = t+1/t$, which has derivative $1-1/t^2$, which is positive for $t \geq 1$. That is, we should make $t = x/y$ as large as possible, which is the same as making $x-y$ as large as possible (subject to $x+y=n+1$). Thus, it is maximized when $x=\frac{n+h_1}{2}$ and $y=\frac{n-h_1}{2}+1 = \frac{n-h_1+2}{2}$ (say), in which case we have 
        \[
            \frac xy + \frac yx = \frac{n+h_1}{n-h_1+2} + \frac{n-h_1+2}{n+h_1} \leq \frac{n+h_1}{n-h_1+2} + 1 \leq 1 + \frac{2h_1}{n-h_1+2} + 1 = 2 + \frac{2h_1}{n-h_1+2} \ .
        \]
        Hence, to establish \eqref{eq:ineq-now-reversed}, it suffices to argue that 
        \begin{align}
            2(x+y)(a+c) > 4xy + \left(3 + \frac{2h_1}{n-h_1+2}\right)ac \ . \label{eq:ineq-now-reversed-implying}
        \end{align}
        To do this, first observe that the LHS is $2(n+1)(n-h_1+h_2) = 2 n^2\pm$ lower order terms. For the RHS, subject to $x+y=n+1$ to maximize $xy$ we should take $x=y = \frac{n+1}{2}$, and so the RHS is at most
        \[
            4 \cdot \left(\frac{n+1}{2}\right)^2 + \left(3 + \frac{2h_1}{n-h_1+2}\right)(n-h_1)h_2 = n^2 \pm \text{lower order terms}
        \]
        Hence, for large enough $n$ the desired inequality \eqref{eq:ineq-now-reversed-implying} holds. 
    \end{proof}

\section{Additional lemmas}

In this section, we state and prove some lemmas that were omitted from the main text. 

\begin{lemma} \label{lem:max-beta=2gamma(1-gamma)}
    For fixed $\gamma \in (0,1)$, the function $g:(0,\min\{2\gamma,2(1-\gamma)\}]\to\R$ defined by $g(\beta)=h(\beta)-h(\alpha)+f_A(\beta,\gamma)$ is maximized at $\beta=2\gamma(1-\gamma)$. 
\end{lemma}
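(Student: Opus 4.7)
The plan is to first observe that $-h(\alpha)$ is constant in $\beta$, so it suffices to maximize $\psi(\beta):=h(\beta)+f_A(\beta,\gamma)$ on $(0,\min\{2\gamma,2(1-\gamma)\}]$. To analyze $\psi$, I would invoke the alternative representation of $f_A$ that is established in the proof of \Cref{lem:fa_fd_claims}.5, namely
\[
f_A(\alpha,\beta) \;=\; (1-\beta)\,h\!\Bigl(\tfrac{\alpha}{2(1-\beta)}\Bigr) + \beta\,h\!\Bigl(\tfrac{\alpha}{2\beta}\Bigr) - h(\alpha).
\]
Specializing to the second argument fixed at $\gamma$ (so the roles of the two coordinates are swapped compared to the display above) and cancelling the $-h(\beta)$ term, this yields the clean expression
\[
\psi(\beta) \;=\; \gamma\,h\!\Bigl(\tfrac{\beta}{2\gamma}\Bigr) + (1-\gamma)\,h\!\Bigl(\tfrac{\beta}{2(1-\gamma)}\Bigr).
\]

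Second, I would note that $\psi$ is strictly concave on its domain: it is a positive combination of compositions of the strictly concave entropy $h$ with the affine maps $\beta\mapsto \beta/(2\gamma)$ and $\beta\mapsto \beta/(2(1-\gamma))$, each of which is a bijection onto $(0,1]$ for $\beta$ in the given range. Consequently $\psi$ has at most one interior critical point, and any such critical point is the unique global maximizer.

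Third, I would compute
\[
\psi'(\beta) \;=\; \tfrac{1}{2}\log\!\frac{1-\beta/(2\gamma)}{\beta/(2\gamma)} + \tfrac{1}{2}\log\!\frac{1-\beta/(2(1-\gamma))}{\beta/(2(1-\gamma))},
\]
using $h'(x)=\log\tfrac{1-x}{x}$. Setting $\psi'(\beta)=0$ is equivalent to
\[
\bigl(1-\tfrac{\beta}{2\gamma}\bigr)\bigl(1-\tfrac{\beta}{2(1-\gamma)}\bigr) \;=\; \tfrac{\beta^{2}}{4\gamma(1-\gamma)},
\]
and a direct substitution of $\beta = 2\gamma(1-\gamma)$ turns the left-hand side into $\gamma(1-\gamma)$ and the right-hand side into $(2\gamma(1-\gamma))^{2}/(4\gamma(1-\gamma)) = \gamma(1-\gamma)$. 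So $\beta=2\gamma(1-\gamma)$ is indeed a critical point, and since $2\gamma(1-\gamma) \in (0,\min\{2\gamma,2(1-\gamma)\})$ for $\gamma\in(0,1)$, it is interior. Combined with strict concavity, this identifies it as the global maximizer.

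The only non-routine step is verifying the alternative closed form for $f_A$; however, that identity is exactly the reformulation already employed in the proof of \Cref{lem:fa_fd_claims}.5, so it may simply be cited, and the rest of the argument is elementary calculus and arithmetic.
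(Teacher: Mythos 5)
Your proof is correct and takes a genuinely different route from the paper's. The paper works directly with the original form $f_A(\beta,\gamma)=\beta-h(\gamma)+(1-\beta)h\bigl(\tfrac{\gamma-\beta/2}{1-\beta}\bigr)$, computes $g'(\beta)$ and $g''(\beta)$, identifies the critical point from $g'=0$, and then establishes global optimality by a sign analysis of $g''$ on the domain (noting $2\gamma-\beta\ge 0$, $2(1-\gamma)-\beta\ge 0$, and $\beta\le 4\gamma(1-\gamma)$ after a WLOG reduction to $\gamma\le 1/2$). You instead invoke the alternative closed form $f_A(\beta,\gamma)=(1-\gamma)h\bigl(\tfrac{\beta}{2(1-\gamma)}\bigr)+\gamma h\bigl(\tfrac{\beta}{2\gamma}\bigr)-h(\beta)$, which makes $h(\beta)$ cancel and leaves $\psi(\beta)=\gamma h\bigl(\tfrac{\beta}{2\gamma}\bigr)+(1-\gamma)h\bigl(\tfrac{\beta}{2(1-\gamma)}\bigr)$. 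Strict concavity is then structural (a positive combination of $h$ precomposed with non-constant affine maps) rather than requiring a second-derivative sign computation, so once you exhibit the interior critical point you are done. Your first-order condition $\bigl(1-\tfrac{\beta}{2\gamma}\bigr)\bigl(1-\tfrac{\beta}{2(1-\gamma)}\bigr)=\tfrac{\beta^2}{4\gamma(1-\gamma)}$ is, after clearing denominators, precisely the equation $\beta^2=4(\gamma-\beta/2)\bigl((1-\gamma)-\beta/2\bigr)$ that the paper obtains, so the critical-point identification is the same in substance; the genuine difference is in how concavity is certified. The only small imprecision is that neither affine map $\beta\mapsto\beta/(2\gamma)$, $\beta\mapsto\beta/(2(1-\gamma))$ is a bijection \emph{onto} $(0,1]$ for the given range unless $\gamma=1/2$; but that does not affect the argument, since composition of a strictly concave function with any non-constant affine map on any interval in its domain remains strictly concave. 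On balance your route is a bit cleaner: it replaces a second-derivative sign chase with a structural concavity observation, at the cost of relying on the rewritten form of $f_A$ (which the paper does establish in the proof of \Cref{lem:fa_fd_claims}.5, so citing it is legitimate).
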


\begin{proof}
    We find
    \[
        g'(\beta) = \frac{1}{2}\log\left(\frac{4(1-\gamma-\beta/2)(\gamma-\beta/2)}{\beta^2}\right) \ ,
    \]
    which equals $0$ iff $\beta^2 = 4(1-\gamma-\beta/2)(\gamma-\beta/2) = 4(1-\gamma)\gamma-2\beta + \beta^2$, i.e., iff $\beta=2\gamma(1-\gamma)$. To see this critical point gives a global maximum, we consider the second derivative
    \[
        g''(\beta) = \frac{1}{\beta\ln2}\cdot\frac{\beta-4(1-\gamma)\gamma}{(2\gamma-\beta)(2(1-\gamma)-\beta)} \ ,
    \]
    and argue it is $\leq 0$ (i.e., the function is concave on its domain). Note that always $2\gamma\geq\beta$, so $(2\gamma-\beta)\geq 0$ and $2(1-\gamma)\geq \beta$. So it suffices to argue $\beta \leq 4(1-\gamma)\gamma$. Without loss of generality, $\gamma\leq1/2$, so then $\beta \leq 2\gamma\leq 4\gamma(1-\gamma)$ as $2(1-\gamma)\geq 1$. 
\end{proof}

\begin{lemma} \label{lem:max-beta=1/2}
    For fixed $\alpha \in [1/2,1)$, the function $g:[\alpha/2,1-\alpha/2]\to\R$ defined by $g(\beta)=f_A(\alpha,\beta)-h(\beta)$ is maximized at $\beta=1/2$.
\end{lemma}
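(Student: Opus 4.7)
The plan is to locate the critical points of $g$, verify that $\beta = 1/2$ is the unique interior critical point when $\alpha \geq 1/2$, and then compare it with the boundary values.

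Writing out $g(\beta) = \alpha - 2h(\beta) + (1-\alpha)\, h\!\left(\frac{\beta-\alpha/2}{1-\alpha}\right)$ and differentiating gives
\[
g'(\beta) \;=\; \log \frac{\beta^{2}\,(1-\alpha/2-\beta)}{(1-\beta)^{2}\,(\beta-\alpha/2)} \ .
\]
Setting $g'(\beta)=0$ and cross-multiplying yields the cubic $4\beta^{3}-6\beta^{2}+2(1+\alpha)\beta-\alpha=0$, which factors as $(2\beta-1)(2\beta^{2}-2\beta+\alpha)=0$. The quadratic factor has discriminant $4-8\alpha$, which is nonpositive precisely when $\alpha\geq 1/2$. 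Hence for every $\alpha\in[1/2,1)$, the point $\beta=1/2$ is the unique critical point of $g$ in the interior of $[\alpha/2,\,1-\alpha/2]$.

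Since there is only one interior critical point, the maximum of $g$ must be attained either at $\beta=1/2$ or at one of the two endpoints. A quick symmetry check gives $g(\beta)=g(1-\beta)$, because $1-\tfrac{\beta-\alpha/2}{1-\alpha}=\tfrac{(1-\beta)-\alpha/2}{1-\alpha}$ and $h$ is symmetric about $1/2$, so the two endpoint values coincide. A direct computation using $h(1/2)=1$ gives $f_{A}(\alpha,1/2)=\alpha-1+(1-\alpha)\cdot 1=0$, whence $g(1/2)=-h(1/2)=-1$; at the left endpoint, $h(0)=0$ yields $g(\alpha/2)=\alpha-2\,h(\alpha/2)$. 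The lemma therefore reduces to verifying the single inequality $h(\alpha/2)\geq (\alpha+1)/2$ for $\alpha\in[1/2,1)$.

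Substituting $u:=\alpha/2\in[1/4,1/2]$, this inequality becomes $\varphi(u):=h(u)-u\geq 1/2$. One computes $\varphi'(u)=\log\bigl((1-u)/(2u)\bigr)$, which is positive for $u<1/3$ and negative for $u>1/3$. Hence $\varphi$ is minimized on $[1/4,1/2]$ at one of the endpoints; since $\varphi(1/4)=h(1/4)-1/4>1/2$ and $\varphi(1/2)=1/2$, the minimum equals $1/2$, attained at $u=1/2$. This gives $g(\alpha/2)\leq -1=g(1/2)$, and combined with the uniqueness of the interior critical point establishes that $\beta=1/2$ is the global maximum. The main technical step is the factorization of the cubic; once that is in place, the discriminant check and the endpoint comparison are routine.
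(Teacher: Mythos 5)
Your proof is correct, but it takes a genuinely different route from the paper's. The paper establishes concavity: it checks $g'(1/2)=0$ and then shows $g''(\beta)\leq 0$ throughout the domain (the bulk of the work being a sign analysis of the quartic in the numerator of $g''$), from which the global maximum at $\beta=1/2$ follows immediately. You instead factor the critical-point equation: the condition $g'(\beta)=0$ is the cubic $4\beta^3-6\beta^2+2(1+\alpha)\beta-\alpha=0=(2\beta-1)(2\beta^2-2\beta+\alpha)$, whose quadratic factor has no real roots when $\alpha>1/2$ (and a double root at $\beta=1/2$ when $\alpha=1/2$), so $\beta=1/2$ is the unique interior critical point. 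That forces the maximum to lie at $\beta=1/2$ or an endpoint, and you then verify $g(\alpha/2)=g(1-\alpha/2)\leq -1=g(1/2)$, which reduces to the elementary inequality $h(u)-u\geq 1/2$ for $u\in[1/4,1/2]$. Your approach is arguably more transparent — the factorization cleanly exhibits all critical points and makes the role of the hypothesis $\alpha\geq 1/2$ explicit via the discriminant — at the cost of an extra endpoint comparison that the concavity argument avoids. It also connects nicely to the companion statement (\Cref{lem:max-beta=sqrt-stuff}) for $\alpha\leq 1/2$, where the quadratic factor's roots $(1\pm\sqrt{1-2\alpha})/2$ become the relevant critical points.
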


\begin{proof}
    We find 
    \[
        g'(\beta) = -2\log\frac{1-\beta}{\beta} + 
        \log\frac{1-\beta-\alpha/2}{\beta-\alpha/2} \ ,
    \]
    which by inspection evaluates to $0$ when $\beta=1/2$. To see this is a global maximum, we compute the second derivative
    \[
        g''(\beta)=\frac{1}{\ln 2}\left(\frac2{\beta(1-\beta)}-\frac{1-\alpha}{(\beta-\alpha/2)(1-\beta-\alpha/2)}\right) = \frac{1}{\ln2}\frac{2(\beta-\alpha/2)(1-\beta-\alpha/2)-(1-\alpha)\beta(1-\beta)}{\beta(1-\beta)(\beta-\alpha/2)(1-\beta-\alpha/2)} \ ,
    \]
    and show it is negative. As $\alpha/2\leq\beta\leq1-\alpha/2$, the denominator is positive, so it suffices to show the numerator is negative, i.e., that $(1-\alpha)\beta(1-\beta) \geq 2(\beta-\alpha/2)(1-\beta-\alpha/2)$. 

    To establish this, we again use calculus. Consider now the function $s(\beta)= 2(\beta-\alpha/2)(1-\beta-\alpha/2)-(1-\alpha)\beta(1-\beta)$. We have $s'(\beta)=1+\alpha-2(1-\alpha)\beta$, so we get a critical point of $\beta = \frac{1+\alpha}{2(1-\alpha)}$, which is manifestly the maximum for the function as $s''(\beta)=-2(1-\alpha)<0$. Thus, it suffices to show $s\left(\frac{1+\alpha}{2(1-\alpha)}\right)\leq 0$. Plugging in this value and simplifying, we get 
    \[
        \frac{1-5\alpha+5\alpha^2-11\alpha^3+2\alpha^4}{4(1-\alpha)^2} \ .
    \]
    It suffices to show the numerator is $\leq 0$. 
    
    Let $t(\alpha) = 1-5\alpha+5\alpha^2-11\alpha^3+2\alpha^4$. One can evaluate $t(1/2) = -3/2$. We now establish $t(\alpha)$ is decreasing on the range $[1/2,1]$, i.e., that $t'(\alpha) = -5+10\alpha-33\alpha^2+8\alpha^3<0$. To do this, we again first observe that $t'(1/2)=-29/4$, and then look to show $t''(\alpha) = 10-66\alpha+24\alpha^2\leq 0$. To show $t''(\alpha)\leq 0$, we again first compute $t''(1/2) = -17$, and then consider $t'''(\alpha) = -66+48\alpha$. For $\alpha \in [1/2,1]$, we certainly have $t'''(\alpha)\leq 0$, as desired. 
\end{proof}

\begin{lemma} \label{lem:max-beta=sqrt-stuff}
    For fixed $\alpha \in (0,1/2]$, the function $g:[\alpha/2,1/2]\to\R$ defined by $g(\beta)=f_A(\alpha,\beta)-h(\beta)$ is maximized at $\beta = (1-\sqrt{1-2\alpha})/2$. That is, it increases from $\alpha/2$ to $(1-\sqrt{1-2\alpha})/2$ and then decreases to $1/2$. 
\end{lemma}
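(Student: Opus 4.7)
The plan is to analyze the critical points of $g$ directly by computing $g'$. Using the formula $f_A(\alpha,\beta)=\alpha-h(\beta)+(1-\alpha)h\!\left(\frac{\beta-\alpha/2}{1-\alpha}\right)$ and the chain rule, one finds
\[
g'(\beta) \;=\; -2\log\frac{1-\beta}{\beta} \;+\; \log\frac{1-\beta-\alpha/2}{\beta-\alpha/2}\ .
\]
Setting $g'(\beta)=0$ and clearing logarithms yields $\beta^2(1-\beta-\alpha/2)=(1-\beta)^2(\beta-\alpha/2)$, which after expansion and cancellation simplifies to the cubic
\[
2\beta^{3}-3\beta^{2}+(1+\alpha)\beta-\tfrac{\alpha}{2}\;=\;0\ .
\]

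Next I would observe that $\beta=1/2$ is evidently a root of this cubic, so it factors as $(\beta-1/2)(2\beta^{2}-2\beta+\alpha)$. The quadratic factor has roots $\beta=(1\pm\sqrt{1-2\alpha})/2$. Thus the critical points are $\beta=1/2$ and $\beta=(1\pm\sqrt{1-2\alpha})/2$; of these, only $\beta=1/2$ and $\beta^{*}:=(1-\sqrt{1-2\alpha})/2$ lie in $[\alpha/2,1/2]$. The containment $\beta^{*}\geq\alpha/2$ follows by squaring: it is equivalent to $(1-\alpha)^{2}\geq 1-2\alpha$, i.e., $\alpha^{2}\geq 0$.

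To promote $\beta^{*}$ from critical point to global maximizer, I would use boundary and second-derivative information. From the explicit expression for $g'$, the term $\log\frac{1-\beta-\alpha/2}{\beta-\alpha/2}\to +\infty$ as $\beta\to(\alpha/2)^{+}$, so $g$ is strictly increasing near the left endpoint. For the other interior critical point $\beta=1/2$, a direct computation of $g''$ gives
\[
g''(1/2)\;=\;\frac{1}{\ln 2}\left(8-\frac{4}{1-\alpha}\right),
\]
which is strictly positive for $\alpha\in(0,1/2)$, so $\beta=1/2$ is a strict local minimum. Since $g'$ is continuous on $(\alpha/2,1/2]$ and vanishes only at $\beta^{*}$ and $1/2$, its sign must be $+$ on $(\alpha/2,\beta^{*})$ and $-$ on $(\beta^{*},1/2)$, making $\beta^{*}$ the unique global maximizer. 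The boundary case $\alpha=1/2$ collapses the two critical points together at $\beta^{*}=1/2$, and the claim reduces to the already-established monotonicity from the left endpoint.

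The main piece of care is ensuring the sign analysis is complete: the factorization of the cubic guarantees there are no other critical points in the interval, which is what upgrades the local-extremum analysis into the global statement. The only mildly delicate computation is verifying that the critical-point equation $\beta^{2}(1-\beta-\alpha/2)=(1-\beta)^{2}(\beta-\alpha/2)$ simplifies cleanly to the stated cubic, which is a routine expansion but the step on which the whole argument hinges.
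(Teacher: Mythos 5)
Your proposal is correct and follows essentially the same route as the paper: both compute $g'(\beta)=-2\log\frac{1-\beta}{\beta}+\log\frac{1-\beta-\alpha/2}{\beta-\alpha/2}$, identify the three critical points $\beta=1/2$ and $\beta=(1\pm\sqrt{1-2\alpha})/2$, and finish with a sign analysis of $g'$ on the interval. The one place you diverge is in how you establish that $g'$ is negative on $(\beta^*,1/2)$: the paper plugs a specific $\beta_+=1/2-\tilde\eps$ into the explicit formula for $g'$ and shows it is negative, whereas you compute $g''(1/2)=\frac{1}{\ln 2}\left(8-\frac{4}{1-\alpha}\right)>0$ to conclude $\beta=1/2$ is a strict local minimum, forcing $g'<0$ just left of $1/2$. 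Both work; your explicit factorization of the critical-point cubic $2\beta^3-3\beta^2+(1+\alpha)\beta-\alpha/2=(\beta-1/2)(2\beta^2-2\beta+\alpha)$ is a nice touch that makes the "no other critical points in the interval" step completely transparent, which the paper asserts without writing out. Your verification $\beta^*\geq\alpha/2$ and your handling of the degenerate case $\alpha=1/2$ also match the paper's reasoning.
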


\begin{proof}
    As before,
    \[
        g'(\beta) = -2\log\frac{1-\beta}{\beta} + 
        \log\frac{1-\beta-\alpha/2}{\beta-\alpha/2} = \log\frac{(1-\beta-\alpha/2)\beta^2}{(\beta-\alpha/2)(1-\beta)^2} \ ,
    \]
    which when we solve for critical points beyond the value $\beta=1/2$ found previously we also now have solutions $\beta = (1\pm \sqrt{1-2\alpha})/2$ (which we observe are now real since $0\leq\alpha \leq 1/2$). We first assume $\alpha<1/2$, and later deal with the case of $\alpha=1/2$. To establish that on the domain $[\alpha/2,1/2]$ $g$ has a global maximum at $\beta^* = (1-\sqrt{1-2\alpha})/2$, it suffices to find points $\alpha/2<\beta_-<\beta^*$ and $1/2>\beta_+>\beta^*$ such that $g'(\beta_-)>0$ and $g'(\beta_+)<0$, as then it must be that $g$ increases from $\alpha/2$ to $\beta^*$ and then decreases from $\beta^*$ to $1/2$. 

    For $\beta_-$, we take $\alpha/2+\eps'$ for a small $\eps'>0$. Note that 
    \[
        g'(\alpha/2+\eps') = \log\frac{(1-\alpha-\eps)(\alpha/2+\eps)^2}{\eps^2(1-\alpha/2-\eps)^2} \ ;
    \]
    for sufficiently small $\eps$, we clearly have the denominator smaller than the numerator, we will have $g'(\alpha/2+\eps')>0$, as required. 

    For $\beta_+$, we take $1/2-\tilde \eps$ for small $\tilde \eps>0$. We find 
    \[
        g'(1/2-\tilde \eps) = \log_2\frac{\left(\frac{1-\alpha}{2}+\tilde \eps\right)(1/2-\tilde \eps)^2}{\left(\frac{1-\alpha}{2}-\tilde \eps\right)(1/2+\tilde \eps)^2} \ .
    \]
    Let $x=\frac{1-\alpha}{2}$ and $y = \frac12$, and note that the assumption $\alpha < 1/2$ is equivalent to $y < 2x$. To show $g'(1/2-\tilde\eps)<0$, it suffices to show
    \[
        \frac{(x+\tilde\eps)(y-\tilde\eps)^2}{(x-\tilde\eps)(y+\tilde\eps)^2}<1 \ ,
    \]
    which, after rearranging and simplifying, is equivalent to 
    \[
        2y^2 + \tilde \eps^2 < 4xy \ .
    \]
    Since $y<2x$, $2y^2<4xy$, so for sufficiently small $\tilde\eps>0$ the above holds. 

    We now deal with the case of $\alpha=1/2$. We then observe that in fact all the critical points $1/2$ and $(1\pm \sqrt{1-2\alpha})/2$ are equal. To see this is a global maximum, we can reuse the earlier argument that at the point $\alpha/2+\eps'$ the derivative $g'$ is positive. 
\end{proof}
\end{document}